%% file: string_diagrams_process_mining.tex
\documentclass[sn-basic]{sn-jnl-arxiv}

\input{styles/packages}
\input{styles/tikz_setup}

\input{styles/string_styles}

\input{styles/petri_styles}

\input{styles/math_macros}
\input{styles/theorems}

\input{styles/document_macros}
\input{styles/colors}

\renewenvironment{strip}{}{}
\newcommand{\twocolbreak}{}
\newcommand{\twocolalignbreak}{}

\begin{document}

\title[String Diagrams for Process Mining]{String Diagrams for Process Mining}

\author*[1]{\fnm{Antony R.} \sur{Lee}}\email{arl290@student.bham.ac.uk}
\author[1]{\fnm{Peter} \sur{Ti\v{n}o}}
\author[2]{\fnm{Iain B.} \sur{Styles}}

\affil[1]{\orgdiv{School of Computer Science}, \orgname{University of Birmingham},
  \orgaddress{\street{Edgbaston}, \city{Birmingham}, \postcode{B15 2TT}, \country{UK}}}
\affil[2]{\orgdiv{School of Electronics, Electrical Engineering, and Computer Science},
  \orgname{Queen's University Belfast},
  \orgaddress{\street{University Road}, \city{Belfast}, \postcode{BT7 1NN}, \country{UK}}}

\abstract{Run two process-discovery algorithms on the same event log and they return two different
  pictures of the same process. Petri nets, causal nets, process trees and BPMN each rely on routing
  machinery of their own, and their only common ground is the traces they generate. A trace lists
  activities one after another, so it discards the concurrency the notations exist to express, and
  two models with identical trace languages can describe genuinely different processes. Whether two
  discovered models mean the same thing therefore has no notation-independent answer. We show that
  all four notations admit one canonical presentation, a signature recording a model's activities
  and the typed interfaces along which they compose, and nothing of the routing machinery. Four
  construction theorems establish this presentation notation by notation, so two models are compared
  in a form that each notation determines on its own. Signature equality implies trace equivalence
  and is strictly finer, separating genuine concurrency from interleaved choice, which a trace
  comparison cannot, and signature inclusion implies trace inclusion. In a recovery study on one
  object-centric log, the discovered causal net has exactly the ground-truth signature, while the
  discovered Petri net's signature strictly contains it and locates every behaviour the Petri net
  adds in its silent structure. The
  gain is largest for object-centric data, where a single log supports several notations at once.
  The four notations become one calculus, in which a translation between them is a claim that can
  be checked.}

\keywords{process mining, hypergraph categories, string diagrams, decorated cospans,
  object-centric processes, compositional semantics}

\maketitle

\input{sections/01-introduction}
\input{sections/02-categories}
\input{sections/02b-general-framework}

\section{The Construction, Notation by Notation}\label{sec:notation-by-notation}

Each of the four notations below is one instance of the construction of
Section~\ref{sec:general-framework}, and nothing in that construction is redone here. An
instance is fixed by two choices, the object-type set $\mathcal{O}$ and the mediator kinds
$\nu$. The defaults are the same throughout.
$\mathcal{O}$ is either $\{\ast\}$ (untyped) or the object types, and every edge carries its
object type. Each instance table below therefore varies in
only three rows, the AND/OR graph $(V,E)$, which nodes are the activities (always the
notation's observable nodes), and the mediators with their AND or XOR tag.

The object-centric causal net is the reference instance, the most developed of the four,
where the mediators appear in their most collapsed form and the multiplicity decoration
does its fullest work. The other three add one wrinkle each, a silent transition read
as a transparent mediator (Petri nets), an operator tree compiled to an AND/OR graph
(process trees), and inclusive-choice gateways expanded into AND and XOR (BPMN). Petri
nets are presented first all the same, being the notation most readers arrive from.

Two pieces of shared structure are stated here once rather than four times below. First,
every instance proof follows one schema. The instance map is checked against
Definition~\ref{def:lm-graph}, finiteness of the model supplies that definition's
finiteness conditions, finiteness of the graph bounds the walk of
Definition~\ref{def:contexts-general}, and Definitions~\ref{def:generator-cospan-general}
and~\ref{def:free-category} then yield the signature and $\mathbf{F}(\Sigma)$, to which
Theorem~\ref{thm:canonical-presentation} applies. Each subsection's proof records only its delta against this schema.
Second, the classical untyped case is uniform across instances.

\begin{corollary}[Untyped models as a special case]\label{cor:untyped-general}
  For every instance of the construction of Section~\ref{sec:general-framework}, setting
  $\mathcal{O}=\{\ast\}$, so that every edge carries the single trivial type,
  specialises the construction to the classical untyped formalism. All boundary ports
  carry the trivial type, and boundary-matching composability reduces to the untyped structural condition.
\end{corollary}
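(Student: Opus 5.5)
The corollary is a direct specialisation argument, so I will prove it by tracing the single parameter $\mathcal{O}$ through the general construction of Section~\ref{sec:general-framework}. I will not verify anything notation by notation. Fix an arbitrary instance, given by its AND/OR graph $(V,E)$, its activity nodes and its mediator kinds $\nu$, and set $\mathcal{O}=\{\ast\}$. The first step is to check that this is a legitimate instance. Definition~\ref{def:lm-graph} asks only that each edge carry a type in $\mathcal{O}$, and the constant assignment $E\to\{\ast\}$ meets that requirement. The finiteness conditions do not mention types, so they still hold. The instance therefore goes through the shared schema unchanged, and Theorem~\ref{thm:canonical-presentation} still applies.

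The second step is the claim about boundary ports. In Definition~\ref{def:generator-cospan-general}, each generator's boundary is a word, or finite set, of ports. Each port is labelled by the type of the edge it comes from. Since every edge has type $\ast$, every port is labelled $\ast$. The objects of $\mathbf{F}(\Sigma)$ are then words over the one-letter alphabet $\{\ast\}$, so they are determined by their length and can be identified with $\mathbb{N}$. Next I will check that the context walk of Definition~\ref{def:contexts-general} never branches on type. It only follows edges and mediator tags, so for a fixed graph it yields the same contexts whatever the typing. Only the labels on the ports change.

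The third step is composability. In $\mathbf{F}(\Sigma)$, two generators compose when the output boundary of one matches the input boundary of the other, position by position, in type. With a single type, the type-equality clause holds trivially. What remains is the untyped structural condition: the arities agree and the ports are matched along edges of the graph. This is exactly the classical untyped gluing condition, such as place-sharing for Petri nets or binding matching for causal nets. The multiplicity decoration collapses in the same way, because the per-type counts reduce to a single count.

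The main obstacle is the phrase ``the classical untyped formalism''. To show we recover it, and not merely something that looks like it, I would argue that the map forgetting the trivial type is a bijection. It would have to be a bijection on signatures, and it would commute with the signature-to-model reading used in the four construction theorems. I would argue this once, at the level of the general construction, by observing that a $\{\ast\}$-typed graph and an untyped graph are in canonical correspondence, and that every definition in the schema is natural in this forgetful correspondence. The remaining risk is an instance-specific wrinkle, such as silent transitions read as transparent mediators, or process-tree compilation. For each of these I only need to confirm that the wrinkle is defined without reference to $\mathcal{O}$. I would do that as a one-line remark per instance.
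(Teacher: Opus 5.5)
Your top-level route is the right one, and it matches how the paper treats the corollary. The paper gives no separate proof and reads the result off the definitions: the walk of Definition~\ref{def:contexts-general} never branches on types, and only records the set $w_\pi$, which is $\{\ast\}$ on every path when $\mathcal{O}=\{\ast\}$.

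The genuine error is in your second and third steps, where you take a port to be labelled by a type. In the paper a leg is a typed wire, identified by its whole triple $(a,w,b)$. The objects of $\mathbf{F}(\Sigma)$ are finite lists of typed wires (Definition~\ref{def:free-category}), and ``a port joins an output of one occurrence to an input of another on the same typed wire, and nothing else composes''. Setting $\mathcal{O}=\{\ast\}$ makes only the middle component constant. The objects therefore become lists of endpoint pairs $(a,b)$, exactly as Example~\ref{ex:and-fanout} writes them, not words over a one-letter alphabet.

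Identifying them with $\mathbb{N}$ is false, and it changes the semantics. In the one-type example of Lemma~\ref{lem:equiv-strictness}, the generators $a\colon(\bot,a)\to(a,b)$ and $b\colon(a,b)\to(b,\top)$ would both become $1\to 1$. Then $b;a$ and $b;b$ would compose, and the language $\{a,b,ab\}$ would gain $ba$ and $bb$.

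Your clause ``the ports are matched along edges of the graph'' is therefore not a side condition that survives the collapse, since nothing in an $\mathbb{N}$-indexed category could enforce it. It is object equality itself, because the endpoints are part of the wire. The correct step is that forgetting the constant type component is the bijection $(a,\{\ast\},b)\mapsto(a,b)$ on typed wires. Under it, equality of typed wires becomes equality of endpoint pairs, which is the untyped structural condition. The same confusion affects your multiplicity remark. The counts of Definition~\ref{def:multiplicity-decoration} are indexed by legs, not by types, so a generator with legs $(x,\{\ast\},a)$ and $(y,\{\ast\},a)$ still carries two counts.

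The ``main obstacle'' you anticipate does not arise, and the tool you propose for it is not available. The untyped notations are defined as the one-type instances: a Petri net is an object-centric Petri net with $\mathrm{type}(p)=\ast$ for all $p$. Corollary~\ref{cor:petri-untyped} describes the classical presentation as exactly this construction, with composites joined along matching wires. So there is no independent untyped construction to reconcile. Moreover, the four construction theorems give forward maps only, and reverse maps are left open in Section~\ref{subsec:reverse-map}. There is no signature-to-model reading for a bijection to commute with.

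Beyond the wire relabelling, you need only your correct observation that transparent silent mediators, process-tree compilation and the OR expansion are defined without reference to $\mathcal{O}$. Add that the one wrinkle which does mention types, type consistency (Definition~\ref{def:type-consistent}), is vacuous under a single type.
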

\input{sections/03-petri_nets}

\input{sections/04-causal_nets}

\input{sections/05-process_trees}
\input{sections/06-bpmn}

\input{sections/07-validation}
\input{sections/09-conversion_framework}
\input{sections/08-related_work}
\input{sections/09b-conclusion}

\backmatter

\bmhead{Funding}
ARL acknowledges the receipt of studentship awards from the Health Data Research UK-The Alan Turing Institute Wellcome PhD Programme in Health Data Science (Grant Ref: 218529/Z/19/Z). P.~Ti\v{n}o was supported by the EPSRC Prosperity Partnerships grant ARCANE, EP/X025454/1.

\bmhead{Author contributions}
Antony R. Lee: conceptualisation, methodology, software, formal analysis, investigation, validation, visualisation, writing -- original draft, writing -- review and editing. Peter Ti\v{n}o: supervision, writing -- review and editing. Iain B. Styles: supervision, writing -- review and editing.

\begin{appendices}
\input{sections/10-hyper_axioms}
\input{sections/11-process-tree-extras}
\end{appendices}

\bibliography{references}

\end{document}

%% file: styles/packages.tex
\usepackage{amsmath,amssymb,amsfonts,amsthm}
\usepackage{mathrsfs}
\usepackage{MnSymbol}
\let\mnTextCheckmark\checkmark
\renewcommand{\checkmark}{\ensuremath{\mnTextCheckmark}}
\let\cite\citep

\usepackage{xurl}     %
\usepackage{multirow}
\usepackage{placeins} %
\usepackage{tabularx}
\usepackage[title]{appendix}
\usepackage{textcomp}
\usepackage{manyfoot}

\usepackage{xcolor}
\usepackage{soul}      %

\definecolor{placeholder}{HTML}{0B4FA8}

\usepackage{graphicx}
\usepackage{subcaption}

\usepackage{tikz}
\usepackage{forest}

\usepackage{cuted}

%% file: styles/tikz_setup.tex
\usetikzlibrary{
  graphs,
  shapes.geometric,
  arrows.meta,
  bending,
  positioning,
  knots,
  intersections,
  calc,
  math,
  angles,
  quotes,
  petri,
  patterns,
  braids,
  decorations.pathreplacing
}

%% file: styles/string_styles.tex
\tikzset{
  cospanblue/.style={
    draw=boxblueborder,
    minimum width=0.8cm,
    minimum height=0.6cm,
    thick,
    dashed,
    fill=boxblue,
    font=\small\sffamily,
    outer sep=0pt
  },
  morphismblue/.style={
    draw=boxblueborder,
    minimum width=0.8cm,
    minimum height=0.6cm,
    thick,
    fill=boxblue,
    font=\small\sffamily,
    outer sep=0pt
  },
  cospangrey/.style={
    draw=boxgreyborder,
    minimum width=0.8cm,
    minimum height=0.6cm,
    thick,
    dashed,
    fill=boxgrey,
    font=\small\sffamily,
    outer sep=0pt
  },
  lc/.style 2 args={
    coordinate,
    label={[#1, font=\upshape\footnotesize]above:#2}
  },
  wire/.style={},
  gap/.style={line width=3pt, white},
  spider/.style={fill=black, circle, inner sep=1.5pt},
  port/.style={fill=black, circle, inner sep=0.75pt},
  inline/.style={baseline={([yshift=-.5ex]current bounding box.center)}}
}

%% file: styles/petri_styles.tex
\tikzset{%
  petriBase/.style={>=Stealth, thick},
  typedflow/.style={thick, -Stealth, rounded corners=2pt},
  typedwire/.style={-},
  typeRed/.style={},
  typeBlue/.style={},
  typeGreen/.style={},
  redflow/.style={typedflow, typeRed, draw=redcol},
  blueflow/.style={typedflow, typeBlue, draw=bluecol},
  greenflow/.style={typedflow, typeGreen, draw=greencol},
  blackflow/.style={typedflow, draw=black!80},
  redwire/.style={typedwire, typeRed, draw=redcol},
  bluewire/.style={typedwire, typeBlue, draw=bluecol},
  greenwire/.style={typedwire, typeGreen, draw=greencol},
  place/.style={circle, draw=black!80, minimum size=8mm, font=\small},
  blackP/.style={place, fill=black!10, draw=black!80},
  redP/.style={place, fill=redcol!10, draw=redcol},
  blueP/.style={place, fill=bluecol!10, draw=bluecol},
  greenP/.style={place, fill=greencol!10, draw=greencol},
  transition/.style={rectangle, draw=black!80, fill=nullcol, minimum size=8mm},
  T/.style={transition}
}

%% file: styles/math_macros.tex
\newcommand{\xmapsto}[2][]{%
  \mathrel{\shortmid\mkern-7mu\xrightarrow[#1]{#2}}%
}

%% file: styles/theorems.tex
\theoremstyle{plain}
\newtheorem{theorem}{Theorem}
\newtheorem{lemma}{Lemma}[section]
\newtheorem{proposition}{Proposition}[section]
\newtheorem{corollary}{Corollary}[section]
\theoremstyle{definition}
\newtheorem{definition}{Definition}
\theoremstyle{remark}
\newtheorem{exmp}{Example}[section]
\newtheorem{remark}{Remark}

%% file: styles/colors.tex
\definecolor{boxblue}{RGB}{238,238,255}
\definecolor{boxblueborder}{RGB}{170,170,255}
\definecolor{boxgrey}{RGB}{238,238,238}
\definecolor{boxgreyborder}{RGB}{191,191,191}

\definecolor{redcol}{RGB}{200,30,30}
\definecolor{bluecol}{RGB}{30,30,200}
\definecolor{greencol}{RGB}{30,150,30}
\definecolor{nullcol}{RGB}{240,240,240}

\definecolor{typepat}{HTML}{2A78D6}
\definecolor{typelab}{HTML}{008300}
\definecolor{typeimg}{HTML}{EB6834}

%% file: sections/01-introduction.tex
\section{Introduction}\label{sec:introduction}

\begin{figure*}[!t]
  \centering
  \resizebox{\linewidth}{!}{\input{tikz/categories/motive_three_faces}\unskip}
  \caption{One process in four notations and its string diagram. Accept an order with two
    items~($a$), pack each item~($p$), and ship once both are packed~($s$), over the object types
    order (blue) and item (red). Panels~(i) and~(iv) are two object-centric Petri nets, whose
    circles are places holding resources. The object-centric causal net of panel~(ii) has no places
    and carries the same resources as markers on its arcs. Circle nodes in the process tree of
    panel~(iii) encode structure rather than resources, and the diamond gateways of the typed BPMN
    model in panel~(v) do the same. Panel~(vi) is the string diagram, which encodes both minimally,
    structure as topology and resources as wires. Each panel is equivalent to panel~(vi),
    reconstructed from the local connectivity of its activities, and all six emit the language
    $\{\langle a,p,p,s\rangle\}$.}
  \label{fig:oc-two-faces}
\end{figure*}

Organisations audit, redesign, and argue over a picture of the process that a process-mining tool
drew from an event log. A second tool given the same log draws a different picture. The pictures
come in several incompatible kinds, and a
modeller given two of them cannot decide whether they describe the same process without first
choosing a translation from one kind into the other. The comparisons that survive such a translation
run over sequences, either the activity sequences a model allows or the states a model passes
through as it produces them, and both readings replace concurrent activities by the orders in which
they might have occurred. Here we give the four notations used in practice a single common form in
which the concurrency survives, and a test that decides whether two models, drawn in the same
notation or in two, have the same structure. The test computes for each model a signature, a
finite description of which activities hand which objects to which activities, and compares the
two signatures for equality.

No comparison in current use decides whether two models describe the same process. Comparing
trace languages conflates two activities that run concurrently with an exclusive choice between
their two orders, because both allow the same two sequences. The transition systems behind those
sequences make the same identification, one interleaved state graph for the concurrent pair and
one for the choice, and the two graphs are
bisimilar~\cite{sassoneModelsConcurrencyClassification1996}, so bisimulation inherits the
conflation. The equivalences that do see the difference, among them history-preserving
bisimulation~\cite{vanglabbeekRefinementActionsEquivalence2001}, compare runs as partial orders,
but each is defined inside a single notation's unfolding semantics and settled by a matching game
over its state space, so applying one across notations requires the very translation whose
correctness is in question. Replay-based conformance checking measures a model against a log
rather than against a second model. None of the four routes settles the question.

The reason is shared by all four. Every notation records a process through routing devices of its
own, the constructs that steer objects between activities. We take the four that
process-discovery tools return, and between them they exhaust the devices in use, namely a buffer drawn
as a node, a buffer folded into an edge, an operator node, and a gateway.

\emph{Object-centric Petri nets}~\cite{vanderaalstDiscoveringObjectCentricPetri2020} hold objects
in typed places and count them on arc weights. We assume a net is type-consistent, so that no chain
of silent transitions joins places of different object type
(Definition~\ref{def:type-consistent}), which nets produced by object-centric discovery satisfy
automatically. \emph{Object-centric causal nets}~\cite{lissObjectCentricCausalNets2025} have no
places at all and record the same fact in binding markers on their arcs. They are the reference
instance of the construction of Section~\ref{sec:general-framework}, the one in which the routing devices appear in their most
collapsed form. \emph{Object-centric process
trees}~\cite{leemansDiscoveringBlockStructuredProcess2013} route through operator nodes, and are
block-structured and hence series-parallel at the level of
occurrences~(Theorem~\ref{thm:pt-as-sp-language}), the assumption that guarantees their soundness and
excludes the non-series-parallel behaviours. \emph{Typed models in the Business Process Model and
Notation} (BPMN)~\cite{BusinessProcessModel} route through gateways. We take a model to be finite
and expand its OR gateways into combinations of AND and XOR, and require nothing else of it.

No routing device is
determined by the behaviour it encodes, so two correct models of one process may use different
devices, and comparing drawings then compares devices. Section~\ref{sec:related-work} takes up
this literature in detail.

Figure~\ref{fig:oc-two-faces} shows the routing devices at their smallest, on the process its
caption describes, drawn in panels (i)--(v) in the four notations a process-mining tool is most
likely to hand back. Write $a$ for accept order, $p$ for pack item,
and $s$ for ship order, over the two object types $\mathrm{order}$ (blue) and
$\mathrm{item}$ (red).

The five drawings agree on the activities and on little else. The object-centric Petri net of
Figure~\ref{fig:oc-two-faces}(i) records the two items in a typed place carrying the arc weight
$2$. Figure~\ref{fig:oc-two-faces}(iv) draws the same process with one transition per item and
every arc weight $1$, so the device of panel~(i) is a property of the drawing alone.
The object-centric causal net of Figure~\ref{fig:oc-two-faces}(ii) has no places anywhere and
records the same fact on its arcs, so the buffer one notation draws as a node the other folds into
an edge. No picture is a special case of another. The process tree of
Figure~\ref{fig:oc-two-faces}(iii) routes on the arity of a parallel operator, and the BPMN model
of Figure~\ref{fig:oc-two-faces}(v) on a pair of gateways, devices of a third and a fourth kind.

Figure~\ref{fig:oc-two-faces}(vi) is the same process with the routing devices removed, and it is
the picture this paper builds on. A box is an activity occurrence and a wire is a typed object
passing from the activity that produces it to the one that consumes it
(Table~\ref{tab:intuition}). Here $a$ sends one $\mathrm{order}$ wire and two $\mathrm{item}$ wires outward, each
$\mathrm{item}$ wire passes through its own $p$ box, and all three arrive at $s$. Nothing else is
drawn, because the picture records only which activity hands which object to which other activity.

A picture of this kind is a string diagram. It is not a fifth notation to model in.
Nothing in it is left to choose, since every routing device is gone and only the typed hand-overs
remain. The string diagram is the normal
form the four notations share, the way regular expressions are compared through their minimal
automata, and it comes with an algebra that says how two diagrams combine.

The reading of a run as a partially ordered set of occurrences is itself old. Partial orders have
modelled concurrency since the sixties~\cite{prattModelingConcurrencyPartial1986}, net theory
reads each run of a Petri net as such an order, and the equivalences above compare those orders
configuration by configuration. The missing step is from semantics to syntax. In net theory the
partial order is something a model means, computed from the model by unfolding it. Here the partial order is
something a modeller writes, the drawing shared by four notations, and equality of drawings is
decided by comparing finite sets rather than by playing a game.

The signature turns this into the promised test. Every model in the four notations reduces to its
signature, the finite record of its activities and their typed inputs and outputs, and the
reduction removes the routing devices and nothing else. Equal signatures certify the same
structure, the same activities with the same typed inputs and outputs, and hence the same trace language,
whichever notations the two models were drawn in, while unequal signatures witness a structural
difference that no drawing convention removes. Included signatures certify included trace
languages in the same way. The comparison fixes no modelling convention, and
it runs on models as discovery tools emit them. Section~\ref{sec:validation} runs it on the output
of two discovery algorithms from one log. The discovered causal net has exactly the ground-truth
signature, and the discovered Petri net's signature strictly contains it, so the comparison names
each behaviour the Petri net adds. The step from structure to behaviour runs one way, since two models of different
structure can still allow the same traces, and the contribution list below states the direction.

Three properties make the string diagram usable as a common form, and
Table~\ref{tab:intuition} records the vocabulary the rest of the paper uses, with a plain reading
for each term.

\begin{description}
  \item[Connectivity, not order.] The wires record which activities exchange which objects, not
    when anything fires. Two activities with no wire between them are concurrent, and no ordering
    is implied or stored.
  \item[Typed resources with multiplicity.] Each wire carries an object type, and two boxes join
    only where their types agree. An order and a pair of items are therefore different data rather
    than differently labelled data, and object-centric coherence follows from the joining rule with
    no further bookkeeping.
  \item[Deformation invariance.] Only the connections matter. The picture may be stretched, and its
    boxes slid past one another, without changing the process it denotes.
\end{description}

Two rules assemble diagrams from smaller ones. Sequential composition $(;)$ joins two diagrams end
to end by matching outgoing wires to incoming wires of the same type. Parallel composition
$(\otimes)$ sets two diagrams side by side with no interaction between them. A box may have several
output wires of one type, which is how $a$ hands its two $\mathrm{item}$ wires to the two packs of
Figure~\ref{fig:oc-two-faces}(vi). Wires may also split and merge, which copies one object to
several activities and joins the copies again.

\begin{table*}[t]
  \centering
  \caption{The vocabulary of the construction, in the order the paper introduces it. Each term is
    given its plain reading here and its formal definition at the section listed.}
  \label{tab:intuition}
  \begin{tabular}{l p{0.52\linewidth} l}
    \hline
    term & plain reading & defined in \\
    \hline
    string diagram & a picture of one process, with a box for each activity and a wire for each
      object passing between activities & Section~\ref{sec:hypergraphs} \\
    wire & one object of one type, running from the activity that produces it to the activity that
      consumes it & Section~\ref{sec:hypergraphs} \\
    cospan & a fragment with a designated input side and output side, so that two fragments
      glue by matching the wires on the sides they meet & Section~\ref{sec:hypergraphs} \\
    pushout & the gluing itself, which identifies the wires two fragments hand to each other and
      keeps everything else & Section~\ref{sec:hypergraphs} \\
    decorated cospan & a cospan carrying data of its own, which the gluing carries along
      & Section~\ref{sec:hypergraphs} \\
    object type & the kind of thing a wire carries, so that two boxes join only where their types agree
      & Section~\ref{sec:hypergraphs} \\
    generator $g_{a,c}$ & one box, meaning an activity together with the types of the objects it
      consumes and produces. An activity gets one box for each context, one way of joining what it
      receives to what it emits &
      Section~\ref{sec:general-framework} \\
    signature $\Sigma$ & the set of a model's boxes, unconnected, each distinguished by its label, its typed
      boundary of incoming and outgoing wires and its constraints on their counts & Section~\ref{sec:hypergraphs} \\
    mediator & a notation's routing device, meaning a place, a gateway, an operator node, or a
      binding set & Section~\ref{sec:general-framework} \\
    $(;)$ and $(\otimes)$ & join two diagrams end to end, and set two diagrams side by side &
      Section~\ref{sec:hypergraphs} \\
    hypergraph category & the setting in which the two joins and wire splitting and merging are
      always available, subject to the axioms of Appendix~\ref{sec:axioms} &
      Section~\ref{sec:hypergraphs} \\
    trace semantics $\Gamma$ & the map sending a diagram to the sequences its partial order admits
      & Section~\ref{sec:hypergraphs} \\
    \hline
  \end{tabular}
\end{table*}

Because routing devices are never recorded, models that differ only in their devices become the
same string diagram. Figure~\ref{fig:routing-collapse} shows the effect within a single notation.
Two object-centric Petri nets returned by two discovery algorithms from one event log, agreeing on
their activity labels and differing in their places and silent transitions, both reduce to the one
string diagram of Figure~\ref{fig:routing-collapse}(iii) (Remark~\ref{rem:ab-place-vs-tau}).

The same picture keeps the structure a sequence comparison loses. The parallel composite
$a\otimes b$, in which two activities are both required with no constraint on their order, and the
exclusive choice between the sequential runs $a\,;\,b$ and $b\,;\,a$ admit the same two sequences
$\langle a,b\rangle$ and $\langle b,a\rangle$ and have bisimilar transition systems. As string diagrams the two are plainly different, one diagram
whose causal order is discrete against two diagrams that are each a total order. For a modeller
the two models are not interchangeable, and no sequence-based comparison warns of it. The reading
extends to choice in general, which Section~\ref{sec:general-framework} makes formal. A
model with choice denotes a family of diagrams, one for each resolution of its choices.

\begin{figure*}[!t]
  \centering
  \resizebox{\linewidth}{!}{\input{tikz/introduction/routing_collapse}\unskip}
  \caption{One behaviour over two object types, receive order~($a$), pick item~($b$), and
    invoice~($c$). Panels (i) and (ii) are object-centric Petri nets as two discovery algorithms
    emit them, differing only in a silent transition splitting the item place and an implicit
    place duplicating the order constraint. Both reduce to the single string diagram (iii).}
  \label{fig:routing-collapse}
\end{figure*}

The rest of the paper builds the mathematics and the machinery this picture needs. We work in a
hypergraph category, the setting of compositional category
theory~\cite{fongHypergraphCategories2018} in which wires may split and merge, and we assign to each
notation a signature $\Sigma$, the collection of boxes its models are built from. Places,
silent transitions, gateways and tree operators are all mediators, and contribute nothing to it. Every notation is then presented in
one vocabulary and joined by one composition algebra, so comparing and translating models across
notations becomes well defined.

The paper makes three lines of contribution, each stated as a modelling deliverable.

\begin{itemize}

  \item \textbf{Four canonical presentations as one category.} Every model in each of the four
    notations reduces to a signature, and one construction~(Section~\ref{sec:general-framework})
    does it in all four cases. Formally,
    object-centric Petri nets~(Theorem~\ref{thm:petri-to-cospan-signature}), object-centric causal
    nets~(Theorem~\ref{thm:causal-to-cospan-signature}), object-centric process
    trees~(Theorem~\ref{thm:ptree-to-cospan-signature}), and typed
    BPMN~(Theorem~\ref{thm:bpmn-cospan}) each receive a canonical presentation as a cospan-algebra
    signature. BPMN needs no structural precondition, and its OR
    gateways are handled by finite expansion into AND and XOR combinations. The presentation is
    notation-independent, so the same behaviour receives the same signature whichever notation
    expresses it (Remark~\ref{rem:ptree-recovers-pn}).

  \item \textbf{An equivalence certificate more detailed than traces.} Two models can be checked for
    agreement by comparing their signatures, with no modelling convention to fix first. Equal
    signatures certify the same structure~(Lemma~\ref{lem:sig-complete}) and hence the same trace
    language~(Theorem~\ref{thm:signature-equivalence}), and the certificate separates the genuine
    concurrency $a\otimes b$ from the exclusive choice that trace comparison and bisimulation both
    conflate with it. Included signatures certify included trace
    languages~(Corollary~\ref{cor:signature-inclusion}).
    Because the maps preserve $(;)$ and $(\otimes)$, the verdict persists when a fragment is
    composed into a larger model. The implication runs in one direction only and we do not claim its
    converse.

  \item \textbf{Object-centric by construction.} Object types live on the wires. Generator-cospan
    boundaries carry object-type labels, and type-matching composability enforces multi-object
    coherence structurally, with no reachability bookkeeping, so a model that mixes types cannot be
    assembled wrongly in the first place. The classical untyped notations are the one-type
    specialisation.

\end{itemize}

As a byproduct we prove that the sequence and parallel fragment of process trees generates the
finite series-parallel posets, and only those, at the level of
occurrences~(Theorem~\ref{thm:pt-as-sp-language}, Corollary~\ref{cor:pt-sp-iff}).

The construction covers notations that are procedural and local, in which a model is assembled from
pieces each wired to its immediate neighbours. Globally-constrained declarative languages such as
Declare~\cite{diciccioGeneratingEventLogs2015} or Dynamic Condition Response (DCR)
graphs~\cite{deboisDeclarativeProcessMining2017}, whose constraints range over whole traces, fall
outside it.

Section~\ref{sec:hypergraphs} develops the
categorical background and the trace semantics $\Gamma$ that connects string diagrams to process
behaviour, and Section~\ref{sec:general-framework} presents the single construction the four
notation sections instantiate. The construction theorems appear in Sections~\ref{sec:petri-nets},
\ref{sec:causal-nets}, \ref{sec:process_trees}, and~\ref{sec:bpmn}, each exercised on a small
object-centric model that reduces to its cospan signature. Section~\ref{sec:validation} reports
the recovery study, in which each of two discovered models is compared against a known
ground-truth signature. The full pipeline, from log to signature to verdict, is implemented in the
\texttt{proc-posets} package released alongside the paper. Section~\ref{sec:conversion-framework} develops cross-notation equivalence and
conversion, Section~\ref{sec:related-work} places the work, and Section~\ref{sec:conclusion}
concludes. Appendix~\ref{sec:axioms} collects the hypergraph-category axioms, and
Appendix~\ref{sec:pt-extras} develops the series-parallel representation theorem for process
trees.

%% file: tikz/categories/motive_three_faces.tex
\begin{tikzpicture}[petriBase, inline, scale=0.95, every node/.style={transform shape},
    place/.append style={minimum size=6mm},
    transition/.append style={minimum size=6.5mm, font=\small},
    plab/.style={font=\footnotesize, black!60, inner sep=1pt},
    ptitle/.style={font=\small\bfseries, inner sep=2pt},
    note/.style={font=\scriptsize, black!55, inner sep=1pt},
    rule/.style={black!30, line width=0.5pt},
    cnact/.style={rectangle, rounded corners=2pt, draw=black!55, fill=black!5,
                  minimum size=6.5mm, font=\small},
    omk/.style={circle, draw=#1, fill=#1, minimum size=4.2pt, inner sep=0pt},
    smk/.style={rectangle, draw=#1, fill=#1, minimum size=4.8pt, inner sep=0pt},
    card/.style={font=\scriptsize, inner sep=0.5pt},
    bind/.style={black!55, line width=0.5pt, dashed},
    op/.style={circle, draw=black!55, fill=black!4, inner sep=1.2pt, minimum size=5mm,
               font=\small},
    lboth/.style={rectangle, rounded corners=2pt, draw=black!55, fill=black!5,
                  minimum size=6.5mm, font=\small},
    litem/.style={rectangle, rounded corners=2pt, draw=redcol, fill=redcol!12,
                  minimum size=6.5mm, font=\small},
    task/.style={rectangle, draw=black!65, fill=black!4, minimum size=6.5mm, font=\small},
    gw/.style={diamond, draw=black!60, fill=black!6, inner sep=1pt, minimum size=7.6mm,
               font=\small},
    wt/.style={font=\scriptsize, inner sep=1pt},
    trace/.style={font=\small, inner sep=1.5pt},
    uwire/.style={typedwire, draw=black!75},
    sbox/.style={rectangle, draw=boxblueborder, thick, fill=boxblue, outer sep=0pt,
                 minimum width=0.44cm, minimum height=0.44cm, font=\scriptsize\sffamily}]

  \def\ycap{-1.95}
  \def\ycapB{-2.60}
  \def\rowB{-5.10}

  \providecommand{\mtfbrief}{0}
  \ifnum\mtfbrief=1
    \def\xSD{7.10}\def\ySD{0}
    \def\ycap{-2.30}
    \def\capSD{\ycap}
    \def\legendY{-3.55}
    \def\numOCPN{}\def\numSD{}\def\sufOCPN{}
  \else
    \def\xSD{12.35}\def\ySD{-5.10}
    \def\capSD{\ycapB}
    \def\numOCPN{(i) }\def\numSD{(vi) }\def\sufOCPN{ 1}
  \fi

  \begin{scope}[shift={(0,0)}]
    \node[T]                (a)   at (0,0)          {$a$};
    \node[blueP, tokens=1]  (po)  at (2.3,0.95)     {};
    \node[redP,  tokens=2]  (pi)  at (1.15,-0.95)   {};
    \node[T]                (p)   at (2.3,-0.95)    {$p$};
    \node[redP]             (pi2) at (3.45,-0.95)   {};
    \node[T]                (s)   at (4.6,0)        {$s$};

    \draw[blueflow] (a)   -- (po);
    \draw[blueflow] (po)  -- (s);
    \draw[redflow]  (a)   -- (pi) node[wt, pos=0.5, below left]  {$2$};
    \draw[redflow]  (pi)  -- (p);
    \draw[redflow]  (p)   -- (pi2);
    \draw[redflow]  (pi2) -- (s)  node[wt, pos=0.5, below right] {$2$};

    \ifnum\mtfbrief=1
      \node[note, anchor=north] at (2.3,-1.45) {arc weight};
    \fi
    \node[ptitle] at (2.3,\ycap) {\numOCPN object-centric Petri net\sufOCPN};
  \end{scope}

  \ifnum\mtfbrief=0
  \begin{scope}[shift={(5.90,-0.36)}]
    \node[cnact] (ca) at (0,0)     {$a$};
    \node[cnact] (cp) at (2.2,1.0) {$p$};
    \node[cnact] (cs) at (4.4,0)   {$s$};

    \draw[redflow,  name path=rap] (ca) to[out=48,in=180] (cp.west);
    \draw[redflow,  name path=rps] (cp.east) to[out=0,in=132] (cs);
    \draw[blueflow, name path=blu] (ca) to[out=-32,in=212] (cs);

    \path[name path=vA]    (0.85,-1.0) -- (0.85,1.7);
    \path[name path=vS]    (3.55,-1.0) -- (3.55,1.7);
    \path[name path=vPin]  (1.45,-1.0) -- (1.45,1.7);
    \path[name path=vPout] (2.95,-1.0) -- (2.95,1.7);
    \path[name intersections={of=rap and vA,    by=mItemA}];
    \path[name intersections={of=blu and vA,    by=mOrdA}];
    \path[name intersections={of=rps and vS,    by=mItemS}];
    \path[name intersections={of=blu and vS,    by=mOrdS}];
    \path[name intersections={of=rap and vPin,  by=pIn}];
    \path[name intersections={of=rps and vPout, by=pOut}];

    \draw[bind] (mItemA) -- (mOrdA);
    \draw[bind] (mItemS) -- (mOrdS);

    \node[smk=redcol, label={[card,redcol]above left:$2$}]  at (mItemA) {};
    \node[smk=redcol, label={[card,redcol]above right:$2$}] at (mItemS) {};
    \node[omk=redcol]  at (pIn)   {};
    \node[omk=redcol]  at (pOut)  {};
    \node[omk=bluecol] at (mOrdA) {};
    \node[omk=bluecol] at (mOrdS) {};

    \node[ptitle] at (2.2,{\ycap+0.36}) {(ii) object-centric causal net};
  \end{scope}
  \fi

  \ifnum\mtfbrief=0
  \begin{scope}[shift={(13.90,-0.39)}]
    \def\dd{0.06}
    \providecommand{\dualleg}[2]{%
      \draw[bluewire] ($(#1)!\dd cm!90:(#2)$)  -- ($(#2)!\dd cm!-90:(#1)$);
      \draw[redwire]  ($(#1)!\dd cm!-90:(#2)$) -- ($(#2)!\dd cm!90:(#1)$);
    }
    \coordinate (tr)   at (0,1.55);
    \coordinate (tpl)  at (0,0.45);
    \coordinate (tAt)  at (-1.37,0.67);
    \coordinate (tSt)  at (1.37,0.67);
    \coordinate (tP1t) at (-0.6,-0.48);
    \coordinate (tP2t) at (0.6,-0.48);
    \coordinate (trA)  at ($(tr)!0.20cm!(tAt)$);
    \coordinate (trS)  at ($(tr)!0.20cm!(tSt)$);

    \dualleg{tAt}{trA}
    \dualleg{trS}{tSt}
    \draw[redwire] (0,1.37) -- (0,0.62);
    \draw[redwire] ($(tpl)!0.20cm!(tP1t)$) -- (tP1t);
    \draw[redwire] ($(tpl)!0.20cm!(tP2t)$) -- (tP2t);

    \node[op]    (root)  at (tr)          {$\to$};
    \node[lboth] (ta)    at (-1.45,0.45)  {$a$};
    \node[op]    (tplus) at (tpl)         {$+$};
    \node[lboth] (ts)    at (1.45,0.45)   {$s$};
    \node[litem] (tp1)   at (-0.6,-0.70)  {$p$};
    \node[litem] (tp2)   at (0.6,-0.70)   {$p$};

    \node[ptitle] at (0,{\ycap+0.39}) {(iii) object-centric process tree};
  \end{scope}
  \fi

  \ifnum\mtfbrief=0
  \begin{scope}[shift={(5.80,\rowB+0.69)}]
    \node[task] (ba)  at (0,0)        {$a$};
    \node[gw]   (bg1) at (1.15,-0.90) {$+$};
    \node[task] (bp1) at (2.30,-0.20) {$p$};
    \node[task] (bp2) at (2.30,-1.60) {$p$};
    \node[gw]   (bg2) at (3.45,-0.90) {$+$};
    \node[task] (bs)  at (4.60,0)     {$s$};

    \draw[blueflow] (ba) to[out=25,in=155] (bs);
    \draw[redflow]  (ba)  -- (bg1);
    \draw[redflow]  (bg1) -- (bp1);
    \draw[redflow]  (bg1) -- (bp2);
    \draw[redflow]  (bp1) -- (bg2);
    \draw[redflow]  (bp2) -- (bg2);
    \draw[redflow]  (bg2) -- (bs);

    \node[ptitle] at (2.30,{\ycapB-0.69}) {(v) typed BPMN};
  \end{scope}

  \fi

  \begin{scope}[shift={(\xSD,\ySD)}]
    \node[morphismblue, minimum height=2cm, minimum width=0.95cm]    (da)  at (0,0)      {$a$};
    \node[morphismblue, minimum height=0.65cm, minimum width=0.65cm] (dp1) at (1.5,0.75) {$p$};
    \node[morphismblue, minimum height=0.65cm, minimum width=0.65cm] (dp2) at (1.5,0.0)  {$p$};
    \node[morphismblue, minimum height=2cm, minimum width=0.95cm]    (ds)  at (3.1,0)    {$s$};

    \draw[redwire] ($(da.south east)!0.875!(da.north east)$) to[out=0,in=180] (dp1.west);
    \draw[redwire] (dp1.east) to[out=0,in=180] ($(ds.south west)!0.875!(ds.north west)$);
    \draw[redwire] ($(da.south east)!0.5!(da.north east)$) to[out=0,in=180] (dp2.west);
    \draw[redwire] (dp2.east) to[out=0,in=180] ($(ds.south west)!0.5!(ds.north west)$);
    \draw[bluewire] ($(da.south east)!0.125!(da.north east)$) -- ($(ds.south west)!0.125!(ds.north west)$);

    \ifnum\mtfbrief=1
      \node[note, anchor=north] at (1.55,-1.45) {one wire per item};
    \fi
    \node[ptitle] at (1.55,\capSD) {\numSD string diagram};
  \end{scope}

  \ifnum\mtfbrief=0

  \begin{scope}[shift={(0,\rowB+0.18)}]
    \node[T]                (na)   at (0,0)          {$a$};
    \node[blueP, tokens=1]  (npo)  at (2.3,1.15)     {};
    \node[redP,  tokens=1]  (npi1) at (1.15,-0.30)   {};
    \node[T]                (np1)  at (2.3,-0.30)    {$p$};
    \node[redP]             (npq1) at (3.45,-0.30)   {};
    \node[redP,  tokens=1]  (npi2) at (1.15,-1.50)   {};
    \node[T]                (np2)  at (2.3,-1.50)    {$p$};
    \node[redP]             (npq2) at (3.45,-1.50)   {};
    \node[T]                (ns)   at (4.6,0)        {$s$};

    \draw[blueflow] (na)   -- (npo);
    \draw[blueflow] (npo)  -- (ns);
    \draw[redflow]  (na)   -- (npi1);
    \draw[redflow]  (npi1) -- (np1);
    \draw[redflow]  (np1)  -- (npq1);
    \draw[redflow]  (npq1) -- (ns);
    \draw[redflow]  (na)   -- (npi2);
    \draw[redflow]  (npi2) -- (np2);
    \draw[redflow]  (np2)  -- (npq2);
    \draw[redflow]  (npq2) -- (ns);

    \node[ptitle] at (2.3,{\ycapB-0.18}) {(iv) object-centric Petri net 2};
  \end{scope}
  \fi

  \ifnum\mtfbrief=1
    \draw[rule] (5.80,-1.85) -- (5.80,1.65);
    \node[blueP, minimum size=4mm, label distance=2pt, label={[plab]right:order}] at (0.2,\legendY) {};
    \node[redP,  minimum size=4mm, label distance=2pt, label={[plab]right:item}]  at (2.2,\legendY) {};
  \fi

\end{tikzpicture}

%% file: tikz/introduction/routing_collapse.tex
\begin{tikzpicture}[petriBase, inline, scale=1.05, every node/.style={transform shape},
    silentT/.style={transition, fill=black!22},
    place/.append style={minimum size=6mm},
    transition/.append style={minimum size=6mm, font=\small},
    plab/.style={font=\footnotesize, black!60, inner sep=1pt},
    ptitle/.style={font=\small\bfseries, inner sep=2pt},
    note/.style={font=\scriptsize, black!55, inner sep=1pt},
    rule/.style={black!30, line width=0.5pt}]

  \def\ytop{1.0}
  \def\ybot{-1.0}
  \def\ycap{-3.05}

  \begin{scope}[shift={(0,0)}]
    \node[blueP, tokens=1] (Ap0) at (0,0)          {};
    \node[T]               (Aa)  at (1.0,0)        {$a$};
    \node[redP]            (Api) at (2.0,\ytop)    {};
    \node[blueP]           (Apo) at (2.0,\ybot)    {};
    \node[T]               (Ab)  at (3.0,\ytop)    {$b$};
    \node[T]               (Ac)  at (3.0,\ybot)    {$c$};

    \draw[blueflow] (Ap0) -- (Aa);
    \draw[redflow]  (Aa)  -- (Api);
    \draw[blueflow] (Aa)  -- (Apo);
    \draw[redflow]  (Api) -- (Ab);
    \draw[blueflow] (Apo) -- (Ac);

    \node[ptitle] at (1.5,\ycap) {(i) OCPN, miner 1};
  \end{scope}

  \begin{scope}[shift={(4.5,0)}]
    \node[blueP, tokens=1] (Bp0)  at (0,0)         {};
    \node[T]               (Ba)   at (1.0,0)       {$a$};
    \node[redP]            (Bpi1) at (1.95,\ytop)  {};
    \node[silentT]         (Bt)   at (2.85,\ytop)  {$\tau$};
    \node[redP]            (Bpi2) at (3.75,\ytop)  {};
    \node[blueP]           (Bpo)  at (2.4,\ybot)   {};
    \node[blueP]           (Bpo2) at (2.85,-1.95)  {};
    \node[T]               (Bb)   at (4.65,\ytop)  {$b$};
    \node[T]               (Bc)   at (4.65,\ybot)  {$c$};

    \draw[blueflow] (Bp0)  -- (Ba);
    \draw[redflow]  (Ba)   -- (Bpi1);
    \draw[redflow]  (Bpi1) -- (Bt);
    \draw[redflow]  (Bt)   -- (Bpi2);
    \draw[redflow]  (Bpi2) -- (Bb);
    \draw[blueflow] (Ba)   -- (Bpo);
    \draw[blueflow] (Bpo)  -- (Bc);
    \draw[blueflow] (Ba.south) to[out=-90,in=180] (Bpo2);
    \draw[blueflow] (Bpo2) to[out=0,in=-90] (Bc.south);

    \node[note, anchor=south] at (2.85,\ytop+0.42) {silent split};
    \node[note, anchor=north] at (2.85,-2.32)      {implicit place};

    \node[ptitle] at (2.3,\ycap) {(ii) OCPN, miner 2};
  \end{scope}

  \draw[rule] (10.35,-2.75) -- (10.35,1.75);

  \begin{scope}[shift={(11.55,0)}]
    \node[morphismblue] (Da) at (0,0)       {$a$};
    \node[morphismblue] (Db) at (2.0,\ytop) {$b$};
    \node[morphismblue] (Dc) at (2.0,\ybot) {$c$};

    \draw[bluewire] (Da.west) -- ++(-0.55,0);
    \draw[redwire]  ([yshift=1.6mm]Da.east) to[out=0,in=180] (Db.west);
    \draw[bluewire] ([yshift=-1.6mm]Da.east) to[out=0,in=180] (Dc.west);
    \draw[redwire]  (Db.east) -- ++(0.55,0);
    \draw[bluewire] (Dc.east) -- ++(0.55,0);

    \node[ptitle] at (1.0,\ycap) {(iii) one string diagram};
  \end{scope}

  \node[blueP, minimum size=4mm, label distance=2pt, label={[plab]right:order}] at (0.2,-3.95) {};
  \node[redP,  minimum size=4mm, label distance=2pt, label={[plab]right:item}]  at (2.2,-3.95) {};

\end{tikzpicture}

%% file: sections/02-categories.tex
\section{String Diagrams as a Composition Engine for Process Models}\label{sec:hypergraphs}

Section~\ref{sec:introduction} read the string diagram of Figure~\ref{fig:oc-two-faces}(vi)
informally, and Table~\ref{tab:intuition} recorded its vocabulary with a pointer to each term's
definition, most of them into this section. Here that reading becomes precise. The section hands
the reader two things, a way to record one fragment of a process as a diagram whose boundary is
its typed input and output wires, and a way to glue two fragments along a shared boundary. That
is the equipment the rest of the paper runs on. It is what lets a whole model be assembled from a
finite set of small generators, and what lets two models, drawn in different notations, be
compared piece by piece once their routing devices are gone. A reader already at home in
hypergraph categories may skim ahead to Section~\ref{sec:general-framework} and lose nothing.

\subsection{From the picture to the algebra}

The four notations of Figure~\ref{fig:oc-two-faces} record the same structure by different devices.
The Petri net of panel~(i) makes the buffer of pending items an explicit typed place and writes the count as an
arc weight. The causal net has no places and marks the arcs instead, following \citet{lissObjectCentricCausalNets2025}, a circle for a single object against a square for
several, with the dashed links grouping the markers that must be satisfied together. The process
tree has neither, and carries the count in the arity of its parallel operator. The BPMN model
carries it in the branches opened by one gateway and closed by another. Each notation
is recovered from the string diagram by keeping only the structure it records and dropping the rest.
Pooling the two item wires into one typed place and the two $p$ boxes into one transition that fires
twice returns the Petri net, while carrying the same multiplicity on $s$'s binding returns the
causal net. We make these projections precise in Section~\ref{sec:conversion-framework}.

These properties recur across process-mining notations, yet comparing models even
within one notation, and saying what a model minimally is, remain persistently
hard~\cite{vanderaalstRepresentationalBiasProcess2011,vanderaalstImprovingRepresentationalBias2012}.
String diagrams answer both needs. They are a common language in which to state when two
models are equal, and a compositional one in which complex models are assembled
from simple generators. They are the syntax of hypergraph categories, which
we now introduce.

\subsection{Notation and conventions}\label{subsec:conventions}

We fix the notation used throughout. Objects are typed resources drawn from a finite set of object
types $\mathcal{O}$, the untyped case being $\mathcal{O}=\{\ast\}$. Morphisms compose
sequentially by $(;)$ and in parallel by the monoidal product $(\otimes)$, with unit $I$ and
identities $\mathrm{id}_X$. Composing two fragments sequentially means gluing them end to end. The
wires that one hands to the other are identified and everything else is kept. In category theory that gluing
is a pushout of cospans, and Definition~\ref{def:pushout} states it.

Finite multisets over a set $X$ are written $\mathbb{N}^{X}$, the functions $X\to\mathbb{N}$ of
finite support, equivalently the free commutative monoid on $X$, so the value at $x$ is the
multiplicity of $x$. A multiset is displayed as a formal sum, for example $2a+3b$, and multisets
are combined by $+$, $-$, and $\subseteq$. A Petri-net marking is such a multiset,
$M\in\mathbb{N}^{P}$ with $M(p)$ tokens in place $p$, drawn as an object by the tensor power
$\bigotimes_{p\in P}p^{\otimes M(p)}$, and typed multisets form $\mathbf{FinMSet}/\mathcal{O}$. The
letter $M$ denotes a marking throughout, while models are written $m_1,m_2$. Ordinary sets use
single braces $\{\dots\}$. A nested $\{\{\dots\}\}$ is a set whose elements are themselves sets, for
instance a set of subsets, and never a multiset. We do not use bag braces.

\subsection{Split, merge, start, stop}

We model process behaviour with concurrency and causal structure using hypergraph
categories, following the string-diagrammatic framework of
\citet{bonchiStringDiagramRewrite2022,bonchiStringDiagramRewrite2022a,bonchiStringDiagramRewrite2022b}.
Figure~\ref{fig:oc-two-faces}(vi) supplies the running intuition. Wires are
objects, the resources a process manipulates and the interfaces through which
subsystems interact. Boxes are activities, which consume objects and produce
others for later use. A whole model is a particular composition of atomic
boxes, complexity built from simple parts, and for a Petri-net reader the same
sentences read place for wire and transition for box.

\paragraph{Conventions.} String diagrams are read left to right. A box's inputs enter on
its left, its outputs leave on its right, and a wire runs from the box that produces a
resource to the box that consumes it. Two operations build diagrams. Sequential
composition $(;)$ connects the outputs of one box to the inputs of the next, and parallel
composition $(\otimes)$ places boxes side by side without connecting them. We write
$X\xmapsto{\llangle \cdot\cdot\cdot \rrangle} D$ to mean that the algebraic object $X$ is
drawn as the string diagram $D$.

We start with how resources, such as Petri net tokens or the typed objects of
Fig.~\ref{fig:oc-two-faces}, may begin, end, and branch within a diagram. The
structure that governs this is known, somewhat intimidatingly, as a special
commutative Frobenius algebra. The name is not important, and we explain instead
what its four operations do. Working through them is the gentlest way into the
hypergraph-category machinery that follows, since a reader who follows the four
operations has met its central idea.

Figure~\ref{fig:oc-two-faces}(vi) needs two of them. The order enters
at $a$ with no producing activity and leaves after $s$ with no consumer, so a
wire must be able to begin ($\eta_X$) and to end ($\epsilon_X$), the bare
boundaries to which $a$ and $s$ attach. The two item wires need no operation,
being two outputs of the one box $a$. Where one object is shared between
genuinely concurrent activities its wire must be able to split ($\delta_X$), and
the mirror operation merges the copies back into one ($\mu_X$). Most flow uses
none of the four. An object
handed from one activity to the next is a plain wire, a choice between
continuations is several separate diagrams rather than one branching wire, and
a multiplicity of objects is a count carried by parallel wires. The four
operators appear exactly where a process begins, ends, or shares one object
between genuinely concurrent activities, and that economy is what keeps
diagrams readable.

\begin{definition}[Special commutative Frobenius algebra]
Let $X$ be a resource (or object), possibly carrying typing information, in some space of resources. A special commutative Frobenius algebra structure on $X$ is a tuple $(X,\delta_X,\mu_X,\eta_X,\epsilon_X)$ of the four maps drawn below.
\end{definition}
\begin{align*}
\delta_X
\xmapsto{\llangle \cdot\cdot\cdot \rrangle}
\vcenter{\hbox{\input{tikz/categories/fig_03}\unskip}}
\quad\quad
\mu_X
\xmapsto{\llangle \cdot\cdot\cdot \rrangle}
\vcenter{\hbox{\input{tikz/categories/fig_04}\unskip}}
\\
\eta_X
\xmapsto{\llangle \cdot\cdot\cdot \rrangle}
\vcenter{\hbox{\input{tikz/categories/fig_05}\unskip}}
\quad\quad
\epsilon_X
\xmapsto{\llangle \cdot\cdot\cdot \rrangle}
\vcenter{\hbox{\input{tikz/categories/fig_06}\unskip}}
\end{align*}
\medskip
These operators are informally known as ``split'' ($\delta_X$), ``merge'' ($\mu_X$), ``initialise'' ($\eta_X$) and ``terminate'' ($\epsilon_X$).

The operators $(\delta_X,\mu_X,\eta_X,\epsilon_X)$ satisfy a number of equational axioms, listed in
full in Appendix~\ref{sec:axioms}. The most immediately useful is speciality~\eqref{eq:ax-special},
$\delta_X ; \mu_X = \mathrm{id}_X$, which says that splitting a resource and immediately re-merging
the two copies recovers the original wire unchanged.

Merge and split operators are equivalent up to topological deformation, so only the connections
matter and only the connections drawn with a black dot are interactions. The spider
theorem~\cite{bonchiStringDiagramRewrite2022} puts any such diagram into a normal form.

Splitting and merging concern copies of one object. Placing
several objects, or several independent activities, side by side is a different operation,
parallel composition, which sets resources next to one another without connecting them and
so records that they proceed independently. We write it with the tensor product
$(\otimes)$. The two packs of Figure~\ref{fig:oc-two-faces}(vi) form exactly this
composite, $p\otimes p$, two activities with no wire between them and no implied
order. In a Petri net the same operation is what a marking needs. The multiplicity of an
object, the number of parallel wires of a given type, is the number of tokens in the
corresponding place, and a marking is insensitive to the order of its tokens, so it is
represented by a tensor of object copies indexed by places. Let \(P\) be a set of Petri net places viewed as the objects (i.e.\ the space of resources for a model). A marking of tokens is a function
\[
M : P \to \mathbb{N},
\]
and its corresponding representation in the hypergraph category is denoted as,
\[
M
\xmapsto{\llangle \cdot\cdot\cdot \rrangle}
\bigotimes_{p \in P} p^{\otimes M(p)},
\]
where \(p^{\otimes M(p)}\) denotes the \(M(p)\)-fold tensor power of \(p\), and in the case every place has zero tokens we define $p^{\otimes{}0}=I$ where $I$ is the unit (e.g.\ zero tokens) of the category. We interpret the action of spider diagrams as creating or consuming tokens ``for free'', in the sense an intermediate activity is not needed.
This convention makes multiplicity explicit at the level of objects. Each token in place \(p\) is represented by one copy of the category object representing \(p\), so the total token content of a marking is encoded by the number of parallel wires of each object.

Later sections endow other notations (causal nets, process trees, BPMN) with the same token-like semantics, placing all of them on a common compositional footing.

However, for process mining, we are still missing the activities that capture process behaviour. The spiders only carry input or output information, e.g.\ properties of tokens. We require a way to take places (or more accurately token resources) and transform them. This is done by introducing morphisms, known as boxes, which represent an activity in a process model.

Assembling boxes and objects needs a categorical setting with the two compositions we have
used above, sequential $(;)$ and parallel $(\otimes)$. That setting is a symmetric monoidal
category, one with these two compositions in which the order of parallel factors
does not matter and wires may cross. A hypergraph category is one in which, in addition,
every object carries the Frobenius structure above.

\begin{definition}[Hypergraph category]\label{def:hypergraph-category}
A hypergraph category is a symmetric monoidal category in which every
object $X$ carries a chosen special commutative Frobenius algebra structure
$(X,\delta_X,\mu_X,\eta_X,\epsilon_X)$ compatible with the parallel product $(\otimes)$.
Compatibility means the Frobenius structure on $X\otimes Y$ is built component-wise from those of $X$ and $Y$.
\end{definition}

A morphism, or activity or box, can have multiple inputs and outputs, corresponding to the objects of the hypergraph category. The boxes of
Figure~\ref{fig:oc-two-faces}(vi) are all morphisms of this kind. The accept
$a$ has the empty input boundary $I$ and the output boundary
$\mathrm{order}\otimes\mathrm{item}^{\otimes 2}$, each pack $p$ maps
$\mathrm{item}$ to $\mathrm{item}$, and the ship $s$ mirrors $a$. A single
transition thus consumes multiple inputs and produces multiple outputs, like a
Petri net transition. Typing the objects in $P$ yields a formalism for
object-centric Petri nets, which we construct explicitly in
Section~\ref{sec:petri-nets}. For now we give the general definition of an
activity morphism in a hypergraph category.
\begin{definition}[Activity morphism]
Let $P$ denote the set of objects in the hypergraph category $\mathbf{H}$ and $n,n':P\to\mathbb{N}$ denote input and output multiplicity functions. An activity in a process model is represented by a morphism $f$ as,
\[ f : \bigotimes_{X\in{}P}X^{\otimes{}n_X} \rightarrow \bigotimes_{U\in{}P}U^{\otimes{}n'_U} \]
\end{definition}
We now have the basic building blocks for representing process models, namely objects (places), morphisms (activities) and special commutative Frobenius algebra structures (spiders), which allow us to copy, merge, create and terminate resources. To compute with them, and to read them off process models, we need a precise, machine-friendly way to record a diagram together with how it composes. Section~\ref{sec:introduction} previewed the answer, a cospan, a fragment with an input boundary and an output boundary that glue along shared boundaries. We develop this as the cospan-based semantics of hypergraph categories, well suited to reformulating process-mining models as compositional networks.

In particular, we will work with so-called decorated cospans~\cite{fongDecoratedCospans2015}, which record the compositional structure of process models but endow them with extra information that allows us to draw string diagrams.
\begin{definition}[Decorated cospan]\label{def:decorated-cospan}
A decorated cospan is a morphism of the form,
\[
f
\xmapsto{\llangle \cdot\cdot\cdot \rrangle}
\left(X_{1}\xrightarrow{i} A \xleftarrow{o} X_{2},\; d\right)
\]
where $X_1$ and $X_2$ are the left and right boundaries, $A$ is the apex, $i:X_1\to A$ and $o:X_2\to A$ are the boundary maps, and $d$ is a decoration over $A$ which gives the information needed to draw the string diagram.
\end{definition}
\begin{remark}[Decorations]
There are many ways to formulate decorations, but in this paper we will concretely define them as tuples of the form $(E,s,t,\lambda)$ for $E$ a set of hyperedges, $s,t:E\to A^{*}$ source and target maps for the edges, and $\lambda:E\to L$ a labelling function. The source and target maps give the information needed to draw the string diagram, while the labelling function allows us to address possible repeated activities.
\end{remark}

The $p$ box of Figure~\ref{fig:oc-two-faces}(vi) is already such a cospan. Each
boundary is one item port, the apex is those two ports, and the decoration is
the single hyperedge labelled $p$ from the input port to the output port.

\begin{remark}
For more complex morphisms the boundaries expand to uniquely labelled inputs and outputs, the apex is the set of all boundary and internal ports, and the decoration $d$ maps the inputs to the outputs with any necessary internal structure.
\end{remark}

Composition has been the running theme. A model is assembled by gluing fragments
along shared boundaries, and for cospans that gluing is a standard operation, the
pushout. The plain reading comes first. Composing $g_1;g_2$ glues the output
ports of $g_1$ onto the input ports of $g_2$, port by port and type by type, and
keeps both decorations, so the composite is again a valid string diagram. Under
the unique-ID convention the paper works with throughout, where every port
carries a globally unique integer identifier, the whole construction is a set
union.
\begin{definition}[Composition of decorated cospans (cospan pushout)]\label{def:pushout}
Let
\begin{align}
g_1 &= \Bigl(X_1 \xrightarrow{i_1} A_1 \xleftarrow{o_1} X_2,\; d_1\Bigr),\\
g_2 &= \Bigl(X_2 \xrightarrow{i_2} A_2 \xleftarrow{o_2} X_3,\; d_2\Bigr)
\end{align}
be decorated cospans sharing interface $X_2$, with ports drawn from a common
pool of unique identifiers, so that $X_2\subseteq A_1$, $X_2\subseteq A_2$, and
$o_1(x)=x=i_2(x)$ for every $x\in X_2$. Their composite is
\begin{equation}
g_1;g_2 \;:=\;
\Bigl(X_1\hookrightarrow A_1\cup A_2\hookleftarrow X_3,\; d_1\cup d_2\Bigr),
\end{equation}
where the apex is the set union, the boundary maps are the set inclusions, and
the composite decoration carries every hyperedge of $d_1$ and of $d_2$
unchanged. A hyperedge of $d_1$ incident to a port $x\in X_2$ and a hyperedge
of $d_2$ incident to the same port now share that vertex, which is the gluing.
\end{definition}
This is the working form of a general construction. For arbitrary boundary maps
the apex is the quotient of the disjoint union $A_1\sqcup A_2$ by the
identifications $o_1(x)\sim i_2(x)$, a pushout of finite sets, and the unique-ID
convention is what collapses that quotient to the union above, since the two
copies of each interface port carry the same identifier. The general two-step
construction is standard~\cite{fongDecoratedCospans2015} and adds nothing the
paper uses. The convention is maintained inductively. Internal ports of a newly
attached generator take fresh identifiers, and interface ports inherit the
identifiers of the shared boundary.

\begin{remark}[Process mining reading]
Think of ports as representing places, the medium through which transitions communicate with each other.
The interface $X_2$ is the set of ports that $g_1$ hands off to $g_2$.
The map $o_1$ locates those ports inside diagram $A_1$, while $i_2$
locates the same ports inside diagram $A_2$.
Composition fuses places together in the composite apex $A_1\cup A_2$, gluing the output
boundary of $g_1$ directly onto the input boundary of $g_2$.
In Figure~\ref{fig:oc-two-faces}(vi) the composition after $a$ is exactly this fusion,
$a$'s three output ports glued by type onto the input ports of the fragment
$\mathrm{id}_{\mathrm{order}}\otimes p\otimes p$, and the composition before $s$ repeats it.
\end{remark}

\begin{remark}[Typed ports]
If ports carry types, for example token object type in an object-centric
Petri net, the identification of $o_1(x)\in A_1$ with $i_2(x)\in A_2$ is imposed only when
$\mathrm{type}(o_1(x))=\mathrm{type}(i_2(x))$, and
ports of mismatched type cannot be fused.
\end{remark}

The pushout construction is heavier on the page than in practice, and an example makes it concrete.

\begin{exmp}
Consider the cospans $g_1$ and $g_2$ defined below. We compute their composition by pushout, using port types to describe the unique-ID convention,
\begin{strip}
\begin{align}
g_1;g_2
& \xmapsto{\llangle \cdot\cdot\cdot \rrangle}
\vcenter{\hbox{\resizebox{\ifdim\width>0.40\linewidth 0.40\linewidth\else\width\fi}{!}{\input{tikz/categories/fig_22}\unskip}}}
\;{};\;{}
\vcenter{\hbox{\resizebox{\ifdim\width>0.40\linewidth 0.40\linewidth\else\width\fi}{!}{\input{tikz/categories/fig_23}\unskip}}}
\\
& \xmapsto{\llangle \cdot\cdot\cdot \rrangle}
\vcenter{\hbox{\resizebox{\ifdim\width>0.85\linewidth 0.85\linewidth\else\width\fi}{!}{\input{tikz/categories/fig_24}\unskip}}}\label{eq:decoration-example}
\\
& \xmapsto{\llangle \cdot\cdot\cdot \rrangle}
\vcenter{\hbox{\resizebox{\ifdim\width>0.85\linewidth 0.85\linewidth\else\width\fi}{!}{\input{tikz/categories/fig_25}\unskip}}}\label{eq:string-diagram-example}
\end{align}
\end{strip}
\end{exmp}

The composite is the cospan $\left(X_{1}\xrightarrow{i} A \xleftarrow{o} X_{3}, d_{3}\right)$, whose
decoration carries one hyperedge per box and per spider, each recording the ports it reads and the
ports it writes together with its label. Reading those hyperedges back through the labelling
function returns the morphism $d_{3}\sim(f\otimes{}k);(\mu\otimes{}\mathrm{id});h;\mu$. The port
colours are the abstract wire types we return to in the object-centric Petri net analysis.

Many different composites of $(\mu,\delta)$ and $f,k,h$ denote this same string diagram, which is
the point, since only the connectivity of the diagram carries information.

Having described the fundamental components of hypergraphs, and how they informally map across to Petri nets, the remaining step is to describe a whole model with cospans.

The device is a set of cospans from which all behaviour of the model can be reconstructed. Such sets are called cospan signatures.

\subsection{From a model to its signature}

The link between composition and process mining is simple. Each transition of a
model, with its input and output data, is a decorated cospan, a generator, and these
generators are the building blocks. The model's signature is the set of them. Composing
generators yields the model's connected diagrams, and each diagram's traces are its
linearisations (Section~\ref{sec:semantics}), up to the deformation and reordering the
hypergraph structure permits. The signature is thus a finite set of building-block
transitions that generate the category. We now define it and unravel each component.
\begin{definition}[Cospan-algebra signature]
A cospan-algebra signature, also just called a signature, $\Sigma$ is a set of cospans $\left\{g_t = \left(X_i\xrightarrow{i_t} A_t \xleftarrow{o_t} X_j,\; d_t\right) : t \in T\right\}$ where $T$ is some indexing set, $X_i$ and $X_j$ are boundary objects indexed by integers.
\end{definition}
Intuitively, a signature is a finite set of cospan objects from which all possible morphisms in a hypergraph category can be constructed. For process mining, we will be interested in signatures which are generated by the transitions of a process model, and the cospans will be defined by the input and output data of these transitions.
The signature of the running example is the set of its three boxes,
$\Sigma=\{a,p,s\}$ with their typed boundaries, and
Figure~\ref{fig:oc-two-faces}(vi) is one composite of those generators, with
$p$ used twice.

The final ingredient we need to map traditional process models to the language of hypergraph categories is how to recover the full behaviour of a model from its signature. Typically, we construct what is known as the free hypergraph category $\mathbf{H}(\Sigma)$ generated by $\Sigma$, which holds every diagram obtainable by composing the generators in any sequential and parallel arrangement, all realisations of the subprocesses running in every configuration. This is far too large a space for most process mining discovery settings, since it includes wholly independent, concurrently-running processes. We look instead at a constrained space of morphisms which still describes a process model but enables discovery of behaviour.
\begin{definition}[Connected process diagram]
Let $\Sigma$ be a signature. A connected process diagram is a diagram with any boundary, built from generators in $\Sigma$,
that contains at least one generator occurrence and satisfies two conditions.
\begin{itemize}
\item \textbf{Connected.} All generator occurrences are joined by wires, so that there is a path of wires between any two of them
\item \textbf{No branching or merging.} There are no Frobenius operators (i.e.\ no wire mergers or splits) in the diagram
\end{itemize}
\end{definition}
A single executable run therefore uses none of the Frobenius operators introduced
above. The split-and-merge structure lives only in the ambient model, where one
branching wire stands compactly for the many runs that resolve it. This is the
precise sense in which those operators are model-level scaffolding rather than part
of any individual behaviour.
\begin{exmp}[Connected process diagram]
Consider the string diagram in Eq.~\ref{eq:string-diagram-example}. Forgetting types, and replacing any mergers or splits with individual wires results in a connected process diagram. Additionally, rewiring the output of box $k$ so that it passes through the diagram results in a disconnected diagram, which the connected process diagrams do not include. Consequently we have the diagrams as in Eq.~\ref{eq:connected-process-diagram}.
\begin{strip}
\begin{align}\label{eq:connected-process-diagram}
\vcenter{\hbox{\input{tikz/categories/fig_26}\unskip}}
\hspace{1cm}
\vcenter{\hbox{\input{tikz/categories/fig_27}\unskip}}
\end{align}
\end{strip}
\end{exmp}
Connected process diagrams are the morphisms of $\mathbf{F}(\Sigma)$
(Definition~\ref{def:free-category}) whose occurrences stay connected, and they form the
default selection of Section~\ref{subsec:theorem}. In practice a
bounded depth-first search over composites, deduplicated up to isomorphism, suffices. We
do not dwell on the enumeration here. The signature is also the object over which models written in different notations are later compared, since each notation yields one by the same construction (Section~\ref{sec:conversion-framework}).

\subsection{Reading traces off a diagram}\label{sec:semantics}

A connected string diagram in our hypergraph category framework encodes a process at the level of causal structure rather than as a single execution sequence. Intuitively, each box represents an action in the process, while the wires record which actions must occur before others and which may occur independently. Thus the diagram captures the dependency pattern shared by all traces in the same equivalence class.

The traces a string diagram represents are found by extracting a partial order from it and enumerating all its linear extensions, or total orders. We start with obtaining the underlying causal structure of the string diagram.
\begin{definition}[Causal partial order of a connected process diagram]\label{def:causal-partial-order}
Given a connected process diagram, its causal partial order is the partial order generated
by $a \le b$ whenever an output of box $a$ is directly connected to an input of box $b$.
As a relation it is a directed acyclic graph (DAG).
\end{definition}
Each node of the DAG represents a unique occurrence of a process activity, even if multiple nodes share the same activity label via some labelling function $\ell$. The transitive closure of the DAG is the causal partial order on activity occurrences, capturing the causal dependencies between them. The set of traces associated with a connected process diagram is then obtained by enumerating all linear extensions of this partial order and mapping nodes to their activity labels via $\ell$.
\begin{figure}[htbp]
\centering
\input{tikz/categories/fig_28}\unskip
\caption{String diagram with repeated boxes}\label{fig:repeated-boxes}
\end{figure}
\begin{exmp}[Traces of a string diagram]
Consider the string diagram in Fig.~\ref{fig:repeated-boxes}. Algebraically this is written as $(f\otimes{}k);h;(f\otimes{}k)$. Define a labelling by function $\ell:(f\otimes{}k);h;(f\otimes{}k)\to{}(f_1\otimes{}k_1);h;(f_2\otimes{}k_2)$. The partial orders can be read off as $f_1\le{}h$, $k_1\le{}h$, $h\le{}f_2$, $h\le{}k_2$. The Hasse diagram for this partial order is then given by,
\begin{center}
\input{tikz/categories/fig_29}\unskip
\end{center}
The total orders associated with this partial order are $\langle f_1k_1hf_2k_2\rangle$, $\langle k_1f_1hf_2k_2\rangle$, $\langle f_1k_1hk_2f_2\rangle$ and $\langle k_1f_1hk_2f_2\rangle$. Applying the inverse labels $\ell^{-1}$, and taking the unique set of total orders if necessary, gives us the traces represented by the string diagram, $\langle fkhfk\rangle$, $\langle kfhfk\rangle$, $\langle fkhkf\rangle$ and $\langle kfhkf\rangle$.
\end{exmp}
This reading is the trace semantics
\[
\Gamma : \mathbf{F}(\Sigma) \to \mathcal{P}(\mathcal{L}^\ast),
\]
which sends each morphism $f$ of the free symmetric monoidal category $\mathbf{F}(\Sigma)$ to its set of traces $\Gamma(f)$, where $\mathcal{L}^\ast$ denotes all finite sequences of activity labels. Section~\ref{sec:general-framework} defines $\mathbf{F}(\Sigma)$ and $\Gamma$ precisely (Definitions~\ref{def:free-category} and~\ref{def:trace-map}), with $\Gamma(f)$ empty when $f$ is unsatisfiable, and the language $L_{\mathcal{S}}(\Sigma)$ of a selection of morphisms. Here, the causal order of $f$ is the partial order induced by its DAG, and its traces are the linear extensions of that order. Intuitively, traces are obtained by listing activity occurrences in any order that respects the flow of execution defined by the diagram. Thus, trace semantics are obtained by first extracting the causal dependencies between activity occurrences, then enumerating all execution orders consistent with these dependencies, and finally mapping occurrences to activity labels if necessary.

On the running example the causal order has four occurrences, $a$ below the two
packs and both packs below $s$. It has two linear extensions, which differ only
in the order of the two $p$ occurrences, so the labelling folds them into the
single trace $\langle a,p,p,s\rangle$, and $\Gamma$ returns exactly that set.
The packs' concurrency is visible in the diagram and absent from the trace,
the same loss that Section~\ref{sec:introduction} laid out for sequence-based
comparison.

\begin{proposition}[Normal form of a scenario]\label{prop:scenario-normal-form}
Let $f$ be a scenario, a morphism of $\mathbf{F}(\Sigma)$. Any two factorisations of $f$ into generators induce the same causal partial order $D_f$ (Definition~\ref{def:causal-partial-order}) up to isomorphism, called the causal poset of $f$. The factorisations of $f$ into a total generator sequence are in bijection with the linear extensions $\mathrm{Lin}(D_f)$. Consequently $\Gamma(f)$ is the image under $\ell$ of $\mathrm{Lin}(D_f)$, and $f$ determines one labelled poset up to isomorphism. A fully resolved scenario in this sense is what partial-order accounts of process behaviour call a variant, one choice-free run presented as a single labelled poset. A loop in the model gives no cycle in a scenario and appears as repeated composition of generators (Section~\ref{subsec:free-category}).
\end{proposition}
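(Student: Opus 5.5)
The plan is to treat a scenario $f$ as a morphism of the free symmetric monoidal category $\mathbf{F}(\Sigma)$, so that it contains no Frobenius operators, and to pass through its underlying hypergraph. By the pushout construction of Definition~\ref{def:pushout}, under the unique-ID convention, any factorisation of $f$ into generators composed by $(;)$ and $(\otimes)$ yields one decorated cospan. Its apex is the union of the generator apexes, and its decoration is the union of the generator hyperedges. The first step is to show that this decorated cospan is an invariant of $f$ up to isomorphism of labelled hypergraphs fixing the boundary. This is the standard fact that the morphisms of the free symmetric monoidal category on $\Sigma$ correspond to acyclic, Frobenius-free labelled hypergraphs, with the symmetric monoidal axioms (interchange, associativity, naturality of symmetry) acting only by renaming ports. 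I would check this by induction on the syntax: each axiom relates two terms whose port-unions agree after a bijection of internal identifiers.

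The causal order $D_f$ of Definition~\ref{def:causal-partial-order} depends only on this hypergraph. Its elements are the hyperedges (occurrences), and $a<b$ holds when some port is an output of $a$ and an input of $b$, closed transitively. Invariance of the hypergraph therefore gives invariance of $D_f$. Acyclicity follows because each generator in a sequential composite only reads ports produced earlier.

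For the bijection, I would define two maps. The first sends a total factorisation $g_1;g_2;\dots;g_n$ (each $g_k$ tensored with identities) to the sequence of occurrences it fires. This sequence is a linear extension, since a wire from $a$ to $b$ forces $a$ to occur in an earlier layer. The second sends a linear extension $x_1<\dots<x_n$ to the factorisation that fires $x_k$ at step $k$, padded with identities on the currently live ports. This is well defined because all inputs of $x_k$ are produced by earlier elements or lie on the boundary. The step I expect to be the main obstacle is injectivity. Two total factorisations differing only in the placement of symmetries and identities must be counted as the same factorisation, so "factorisation" has to be read modulo the symmetric monoidal axioms. With that reading I would show the two maps are mutually inverse, using interchange $(g\otimes\mathrm{id});(\mathrm{id}\otimes h)=g\otimes h$ to normalise.

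The remaining claims then follow directly. $\Gamma(f)$ is by definition the set of linear extensions pushed through $\ell$, so it equals $\ell(\mathrm{Lin}(D_f))$. The labelled poset $(D_f,\ell)$ is determined up to isomorphism. Loops appear only as repeated generator occurrences, and $D_f$ is acyclic, so no cycle arises.
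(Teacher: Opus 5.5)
Your proposal is correct and follows the paper's own route. The paper likewise argues that the wiring of $f$ (your pushout hypergraph) is untouched by the symmetric monoidal equations, so $D_f$ is independent of the factorisation, that total factorisations are exactly the topological sortings of the occurrence DAG, and that $\Gamma(f)=\ell(\mathrm{Lin}(D_f))$ then holds by definition; your two explicit maps and your quotient by the placement of symmetries and identities only make precise what the paper leaves implicit. One caution, which applies equally to the paper's proof: the quotient must be by symmetries and identities at the cuts between layers, not by all symmetric monoidal axioms, since interchange is exactly what relates distinct total factorisations; and even then a factorisation's occurrences match elements of $D_f$ only up to boundary-fixing automorphisms of the hypergraph, so for $f=g\otimes g$ with a count assignment $g\colon I\to I$ there is one total factorisation $g;g$ but two linear extensions, meaning the bijection holds only with $\mathrm{Lin}(D_f)$ taken modulo automorphisms, although $\Gamma(f)=\ell(\mathrm{Lin}(D_f))$ is unaffected.
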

\begin{proof}
The generators appearing in $f$ are its activity occurrences. Each wire $(a,w,b)$ of $f$ records that an output of the occurrence $a$ feeds an input of the occurrence $b$, so it fixes a covering pair $a\le b$ of the occurrence DAG. The DAG, and hence its transitive closure $D_f$, is therefore determined by the wiring of $f$ alone and does not depend on the chosen factorisation. Two factorisations of $f$ differ only by reordering generators that share no wire, an application of the interchange law that leaves the DAG unchanged, which gives the invariance up to isomorphism. A total factorisation lists the occurrences in an order that respects every wire, so total factorisations are exactly the topological sortings of the DAG, which are the linear extensions of $D_f$. Applying the labelling $\ell$ to each such sequence yields $\Gamma(f)$.
\end{proof}

\begin{remark}
In process mining, and Petri net theory more generally, the language over a net is tied to an initial marking $M_0$, with the possibility of imposing some final marking $M_1$. The trace semantics defined here is instead a property of a diagram, so $\Gamma$ applies to every morphism of $\mathbf{F}(\Sigma)$, whatever its boundary. That makes it notation-independent. A model's trace language is the union of $\Gamma$ over a selection of morphisms. Start and end generators that play the part of $M_0$ and $M_1$ are a choice made in the analysis of a model (Section~\ref{subsec:analysis}), and they leave $\Sigma$ unchanged.
\end{remark}

Computationally, realising $\Gamma$ means listing the linear extensions of the
causal partial order, which standard algorithms do in time proportional to the
output~\cite{knuthArtComputerProgramming2011,ruskeyCombinatorialGeneration2003}.
Counting them instead is
$\#$P-complete~\cite{brightwellCountingLinearExtensions1991}.

To bring the section together, here is the whole construction read back in process-mining
terms. A signature is the set of a model's generators, its building-block transitions,
each a transition with its input and output data. Composing these generators along shared
typed boundaries assembles the morphisms of $\mathbf{F}(\Sigma)$, each
diagram a partial order of activities. The trace semantics $\Gamma$ then turns each diagram
back into observable behaviour, the set of execution orders it admits. Composition is thus
how a model is built from its transitions, and $\Gamma$ is how its trace-language
behaviour is recovered. This makes $\Gamma$ a deliberately lossy projection, since it reduces each diagram to the set of its linear extensions and drops the concurrency that the partial order records, and the signature-level certificate of Section~\ref{sec:conversion-framework} later recovers what $\Gamma$ discards. The one ingredient still missing is where a model's generators come
from. For every notation this paper treats, object-centric Petri nets, causal nets,
process trees, and BPMN, they are not chosen by hand but read off the model by a single
construction, the same for all four. We turn to it now (Section~\ref{sec:general-framework}).

%% file: tikz/categories/fig_03.tex
\begin{tikzpicture}[inline]
\node[spider] (s) at (0,0) {};
\draw[wire] (s) to[out=+80, in=180] (0.5, +0.4);
\draw[wire] (s) to[out=-80, in=180] (0.5, -0.4);
\draw[wire] (s) -- (-0.5, 0);
\end{tikzpicture}

%% file: tikz/categories/fig_04.tex
\begin{tikzpicture}[inline]
  \node[spider] (s) at (0,0) {};
  \draw[wire] (s) to[out=+100, in=0] (-0.5, +0.4);
  \draw[wire] (s) to[out=-100, in=0] (-0.5, -0.4);
  \draw[wire] (s) -- (0.5, 0);
\end{tikzpicture}

%% file: tikz/categories/fig_05.tex
\begin{tikzpicture}[inline]
\node[spider] (s) at (0,0) {};
\draw[wire] (s) -- (+0.6, 0);
\end{tikzpicture}

%% file: tikz/categories/fig_06.tex
\begin{tikzpicture}[inline]
\node[spider] (s) at (0,0) {};
\draw[wire] (s) -- (-0.6, 0);
\end{tikzpicture}

%% file: tikz/categories/fig_22.tex
\begin{tikzpicture}[inline]
  \node[cospanblue, minimum height=3cm] (b1) at (-2.5,0) {};
  \node[lc={black}{2}, spider] (s1) at (-2.5,+0.750) {};
  \node[lc={black}{1}, spider] (s2) at (-2.5,-0.000) {};
  \node[lc={black}{0}, spider] (s5) at (-2.5,-1.000) {};
  \node[] (a1) at (-1.7,0) {$\rightarrow$};
  \node[cospangrey, minimum height=3cm, minimum width=2.5cm] (m1) at (0,0) {};
  \node[draw, rounded corners=5pt, minimum size=0.75cm, fill=white] (f1) at (0,0.375) {$f$};
  \draw[wire] (f1.160) to [out=135, in=0] ++(-0.5,+0.25);
  \draw[wire] (f1.200) to [out=225, in=0] ++(-0.5,-0.25);
  \draw[bluewire] (f1.20) to [out=45, in=180] ++(+0.5,+0.25);
  \draw[bluewire] (f1.340) to [out=315, in=180] ++(+0.5,-0.25);
  \node[lc={black}{2}, spider] at ([shift={(-0.9,0.4)}]f1) {};
  \node[lc={black}{1}, spider] at ([shift={(-0.9,-0.4)}]f1) {};
  \node[lc={blue}{2}, spider] at ([shift={(0.9,0.4)}]f1) {};
  \node[lc={blue}{1}, spider] at ([shift={(0.9,-0.4)}]f1) {};
  \node[draw, rounded corners=5pt, minimum size=0.75cm, fill=white] (f2) at (0,-1.000) {$k$};
  \draw[-] (f2.180) to [out=180, in=0] ++(-0.5,0);
  \draw[bluewire] (f2.0) to [out=0, in=180] ++(+0.5,0);
  \node[lc={black}{0}, spider] at ([shift={(-0.9,0)}]f2) {};
  \node[lc={blue}{0}, spider] at ([shift={(+0.9,0)}]f2) {};
  \node[] (a1) at (+1.7,0) {$\leftarrow$};
  \node[cospanblue, minimum height=3cm] (b2) at (2.5,0) {};
  \node[lc={blue}{2}, spider] (s3) at (+2.5,+0.750) {};
  \node[lc={blue}{1}, spider] (s4) at (+2.5,-0.000) {};
  \node[lc={blue}{0}, spider] (s6) at (+2.5,-1.000) {};
\end{tikzpicture}

%% file: tikz/categories/fig_23.tex
\begin{tikzpicture}[inline]
  \node[cospanblue, minimum height=3cm] (b1) at (-2.5,0) {};
  \node[lc={blue}{2}, spider] (s1) at (-2.5,+0.750) {};
  \node[lc={blue}{1}, spider] (s2) at (-2.5,-0.000) {};
  \node[lc={blue}{0}, spider] (s5) at (-2.5,-1.000) {};
  \node[] (a1) at (-1.7,0) {$\rightarrow$};
  \node[cospangrey, minimum height=3cm, minimum width=4cm] (m1) at (0.75,0) {};
  \node[draw, rounded corners=5pt, minimum size=0.75cm, fill=white] (f1) at (1,0) {$h$};
  \node[spider] (p2) at ([shift={(-1.4,0.375)}]f1) {};
  \draw[bluewire] (p2) to [out=110, in=0] ++(-0.5,+0.4);
  \draw[bluewire] (p2) to [out=-110, in=0] ++(-0.5,-0.4);
  \node[lc={blue}{2}, spider] at ([shift={(-0.5,0.4)}]p2) {};
  \node[lc={blue}{1}, spider] at ([shift={(-0.5,-0.4)}]p2) {};
  \node[lc={blue}{0}, spider] at ([shift={(-1.9,-1)}]f1) {};
  \node[spider] (p3) at ([shift={(0.9,-0)}]f1) {};
  \draw[bluewire] (f1.160) to [out=135, in=0] (p2);
  \draw[bluewire] (f1.200) to [out=225, in=0] ++(-1.5,-0.85);
  \node[lc={blue}{3}, spider] (p4) at ([shift={(-0.9,0.37)}]f1) {};
  \draw[redwire] (f1.25) to [out=0, in=135] (p3);
  \draw[redwire] (f1.-25) to [out=0, in=-135] (p3);
  \draw[redwire] (p3) to ++(0.5,0);
  \node[lc={red}{0}, spider] at ([shift={(0.5,0)}]p3) {};
  \node[] (a2) at (3.125,0) {$\leftarrow$};
  \node[cospanblue, minimum height=3cm] (b2) at (3.875,0) {};
  \node[lc={red}{0}, spider] (s4) at (3.875,0) {};
\end{tikzpicture}

%% file: tikz/categories/fig_24.tex
\begin{tikzpicture}[inline]
  \node[cospanblue, minimum height=3cm] (b1) at (-2.5,0) {};
  \node[lc={black}{2}, spider] (s1) at (-2.5,+0.750) {};
  \node[lc={black}{1}, spider] (s2) at (-2.5,-0.000) {};
  \node[lc={black}{0}, spider] (s5) at (-2.5,-1.000) {};
  \node[] (a1) at (-1.7,0) {$\rightarrow$};
  \node[cospangrey, minimum height=3cm, minimum width=6cm] (m1) at (1.75,0) {};
  \node[draw, rounded corners=5pt, minimum size=0.75cm, fill=white] (f1) at (0,0.375) {$f$};
  \draw[wire] (f1.160) to [out=135, in=0] ++(-0.5,+0.25);
  \draw[wire] (f1.200) to [out=225, in=0] ++(-0.5,-0.25);
  \draw[bluewire] (f1.20) to [out=45, in=180] ++(+0.5,+0.25);
  \draw[bluewire] (f1.340) to [out=315, in=180] ++(+0.5,-0.25);
  \node[lc={black}{2}, spider] at ([shift={(-0.9,0.4)}]f1) {};
  \node[lc={black}{1}, spider] at ([shift={(-0.9,-0.4)}]f1) {};
  \node[lc={blue}{2}, spider] (p4) at ([shift={(0.9,0.4)}]f1) {};
  \node[lc={blue}{1}, spider] (p5) at ([shift={(0.9,-0.4)}]f1) {};
  \node[draw, rounded corners=5pt, minimum size=0.75cm, fill=white] (f2) at (0,-1.000) {$k$};
  \draw[wire] (f2.180) to [out=180, in=0] ++(-0.5,0);
  \draw[bluewire] (f2.0) to [out=0, in=180] ++(+0.5,0);
  \node[lc={black}{0}, spider] at ([shift={(-0.9,0)}]f2) {};
  \node[draw, rounded corners=5pt, minimum size=0.75cm, fill=white] (f3) at (2.95,0) {$h$};
  \node[spider] (p3) at ([shift={(-1.5,0.375)}]f3) {};
  \draw[bluewire] (f3.160) to [out=135, in=0] (p3);
  \draw[bluewire] (f3.200) to [out=225, in=0] ++(-1.7,-0.85);
  \draw[bluewire] (p3) to [out=110, in=0] ++(-0.5,+0.4);
  \draw[bluewire] (p3) to [out=-110, in=0] ++(-0.5,-0.4);
  \node[lc={blue}{3}, spider] (p4) at ([shift={(-0.9,0.37)}]f3) {};
  \node[lc={blue}{0}, spider] (p3) at ([shift={(+0.9,0)}]f2) {};
  \node[spider] (p6) at ([shift={(0.9,0)}]f3) {};
  \draw[redwire] (f3.25) to [out=0, in=135] (p6);
  \draw[redwire] (f3.-25) to [out=0, in=-135] (p6);
  \draw[redwire] (p6) to ([shift={(0.5,0)}]p6);
  \node[lc={red}{0}, spider] at ([shift={(0.5,0)}]p6) {};
  \node[] (a1) at (5,0) {$\leftarrow$};
  \node[cospanblue, minimum height=3cm] (b2) at (5.75,0) {};
  \node[lc={red}{0}, spider] (s4) at (5.75,0) {};
\end{tikzpicture}

%% file: tikz/categories/fig_25.tex
\begin{tikzpicture}[inline]
  \node[morphismblue, minimum height=1.5cm] (m1a) {$f$};
  \node[morphismblue] (m1g) at (0,-1.5) {$k$};
  \node[spider] (s1) at (1,0) {};
  \node[morphismblue, minimum height=1.5cm] (m1b) at (2,-0.375){$h$};
  \node[spider] (s2) at (3,-0.375) {};
  \draw[wire] ($(m1a.south west)!0.75!(m1a.north west)$) -- ++(-0.5,0) node[left, font=\scriptsize] {};
  \draw[wire] ($(m1a.south west)!0.25!(m1a.north west)$) -- ++(-0.5,0) node[left, font=\scriptsize] {};
  \draw[bluewire] ($(m1a.south east)!0.75!(m1a.north east)$) to[out=0, in=100] (s1);
  \draw[bluewire] ($(m1a.south east)!0.25!(m1a.north east)$) to[out=0, in=-100] (s1);
  \draw[bluewire] (s1) to[out=0,in=180] ($(m1b.south west)!0.75!(m1b.north west)$);
  \draw[wire] (m1g.west) -- ++(-0.5,0) node[left] {};
  \draw[bluewire] (m1g.east) to[out=0, in=180] ($(m1b.south west)!0.25!(m1b.north west)$);
  \draw[redwire] ($(m1b.south east)!0.75!(m1b.north east)$) to[out=0,in=100] (s2);
  \draw[redwire] ($(m1b.south east)!0.25!(m1b.north east)$) to[out=0,in=-100] (s2);
  \draw[redwire] (s2) -- ++(0.5,0) node[left,font=\scriptsize] {};
\end{tikzpicture}

%% file: tikz/categories/fig_26.tex
\begin{tikzpicture}[inline]
\node[morphismblue, minimum height=1.5cm] (m1a) {$f$};
\node[morphismblue] (m1g) at (0,-1.5) {$k$};
\node[morphismblue, minimum height=1.5cm] (m1b) at (2,-0.450){$h$};
\draw[wire] ($(m1a.south west)!0.75!(m1a.north west)$) -- ++(-0.5,0) node[left, font=\scriptsize] {};
\draw[wire] ($(m1a.south west)!0.25!(m1a.north west)$) -- ++(-0.5,0) node[left, font=\scriptsize] {};
\draw[wire] ($(m1a.south east)!0.75!(m1a.north east)$) to[out=0,in=180] ($(m1b.south west)!0.75!(m1b.north west)$);
\draw[wire] ($(m1a.south east)!0.25!(m1a.north east)$) to[out=0,in=180] ($(m1b.south west)!0.55!(m1b.north west)$);
\draw[wire] (m1g.west) -- ++(-0.5,0) node[left] {};
\draw[wire] (m1g.east) to[out=0, in=180] ($(m1b.south west)!0.25!(m1b.north west)$);
\draw[wire] ($(m1b.south east)!0.75!(m1b.north east)$) -- ++(+0.5,0) node[left] {};
\draw[wire] ($(m1b.south east)!0.25!(m1b.north east)$) -- ++(+0.5,0) node[left] {};
\end{tikzpicture}

%% file: tikz/categories/fig_27.tex
\begin{tikzpicture}[inline]
\node[morphismblue, minimum height=1.5cm] (m1a) {$f$};
\node[morphismblue] (m1g) at (0,-1.5) {$k$};
\node[morphismblue, minimum height=1.5cm] (m1b) at (2,-0.450){$h$};
\draw[wire] ($(m1a.south west)!0.75!(m1a.north west)$) -- ++(-0.5,0) node[left, font=\scriptsize] {};
\draw[wire] ($(m1a.south west)!0.25!(m1a.north west)$) -- ++(-0.5,0) node[left, font=\scriptsize] {};
\draw[wire] ($(m1a.south east)!0.75!(m1a.north east)$) to[out=0,in=180] ($(m1b.south west)!0.75!(m1b.north west)$);
\draw[wire] ($(m1a.south east)!0.25!(m1a.north east)$) to[out=0,in=180] ($(m1b.south west)!0.55!(m1b.north west)$);
\draw[wire] (m1g.west) -- ++(-0.5,0) node[left] {};
\draw[wire] (m1g.east) -- ++(+2.5,0) node[left] {};
\draw[wire] ($(m1b.south east)!0.75!(m1b.north east)$) -- ++(+0.5,0) node[left] {};
\draw[wire] ($(m1b.south east)!0.25!(m1b.north east)$) -- ++(+0.5,0) node[left] {};
\end{tikzpicture}

%% file: tikz/categories/fig_28.tex
\begin{tikzpicture}[inline]
  \node[morphismblue, minimum height=0.75cm] (1) at (-2,0.5) {$f$};
  \node[morphismblue, minimum height=0.75cm] (2) at (-2,-0.5) {$k$};
  \node[morphismblue, minimum height=1.5cm] (3) at (0,0) {$h$};
  \node[morphismblue, minimum height=0.75cm] (4) at (2,0.5) {$f$};
  \node[morphismblue, minimum height=0.75cm] (5) at (2,-0.5) {$k$};

  \draw[wire] ($(1.south west)!0.50!(1.north west)$) -- ++(-0.5,0) node[left, font=\scriptsize] {};
  \draw[wire] ($(2.south west)!0.50!(2.north west)$) -- ++(-0.5,0) node[left, font=\scriptsize] {};

  \draw[wire] ($(1.south east)!0.50!(1.north east)$) to[out=0,in=180] ($(3.south west)!0.75!(3.north west)$);
  \draw[wire] ($(2.south east)!0.50!(2.north east)$) to[out=0,in=180] ($(3.south west)!0.25!(3.north west)$);

  \draw[wire] ($(3.south east)!0.75!(3.north east)$) to[out=0,in=180] ($(4.south west)!0.50!(4.north west)$);
  \draw[wire] ($(3.south east)!0.25!(3.north east)$) to[out=0,in=180] ($(5.south west)!0.50!(5.north west)$);

  \draw[wire] ($(4.south east)!0.50!(4.north east)$) -- ++(+0.5,0) node[left, font=\scriptsize] {};
  \draw[wire] ($(5.south east)!0.50!(5.north east)$) -- ++(+0.5,0) node[left, font=\scriptsize] {};

\end{tikzpicture}

%% file: tikz/categories/fig_29.tex
\begin{tikzpicture}[]

  \node[] (s1) at (0,+0.5) {$f_1$};
  \node[] (s2) at (0,-0.5) {$k_1$};
  \node[] (s3) at (1,+0) {$h$};
  \node[] (s4) at (2,+0.5) {$f_2$};
  \node[] (s5) at (2,-0.5) {$k_2$};

  \draw[->] (s1) to (s3);
  \draw[->] (s2) to (s3);
  \draw[->] (s3) to (s4);
  \draw[->] (s3) to (s5);

\end{tikzpicture}

%% file: sections/02b-general-framework.tex
\section{One Construction for Every Notation}\label{sec:general-framework}

One construction turns a process model into a signature of generator cospans, and it is the
same for every notation in Sections~\ref{sec:petri-nets}--\ref{sec:bpmn}. Pick an activity,
follow the edges around it through the mediators, and stop at the first activities met on each
side, recording the object types and edge weights seen on the way. Each way of joining what the
activity receives to what it emits is one generator. Composing generators freely gives a
symmetric monoidal category, each of its morphisms is a partial order of activity occurrences,
and its traces are the orderings of those occurrences. For any choice of morphisms made from the
signature alone, models with equal signatures have equal languages. A badly specified model appears as composites that cannot be satisfied.

\subsection{The AND/OR graph}
\label{subsec:firing-graphs}

Every notation is first mapped to one kind of graph. Its nodes are activities, which a trace
records, and mediators, which it does not. A mediator sends flow along all its
branches at once (AND) or along exactly one of them (XOR). Edges carry the typing and weighting
that the notation records.

\begin{definition}[AND/OR graph]
  \label{def:lm-graph}
  An AND/OR graph over a finite set $\mathcal{O}$ of object types is a tuple
  $G=(\mathcal{A},R,E,\ell,\nu,\mathrm{type},\mathrm{wt},\rho)$ in which $(\mathcal{A}\sqcup R,E)$
  is a finite directed graph whose nodes are the activities $\mathcal{A}$ and the mediators $R$,
  $\ell:\mathcal{A}\to\mathcal{L}$ labels the activities, $\nu:R\to\{\mathrm{AND},\mathrm{XOR}\}$
  gives each mediator its kind, $\mathrm{type}:E\to\mathcal{O}$ and
  $\mathrm{wt}:E\to\mathbb{N}_{\ge 1}\cup\{\mathrm{var}\}$ give each edge a type and a weight, and
  $\rho(a)$ is a set of linear equalities and inequalities among the weights of the edges at the
  activity $a$. The weight $\mathrm{var}$ marks an edge that carries zero or more objects, and
  $\rho(a)$ is empty unless the notation relates the weights. Silent transitions are mediators,
  and a notation with no types takes $\mathcal{O}=\{\ast\}$.
\end{definition}

Object-centric Petri nets, object-centric causal nets, process trees and BPMN are instances.
The work specific to a notation is the correctness of its map to AND/OR graphs, which each
notation's section establishes. Everything below is the same for all of them.

\subsection{Reach families, contexts and generators}
\label{subsec:generators}

\begin{definition}[Reach families and contexts]
  \label{def:contexts-general}
  Let $a\in\mathcal{A}$. For a node $x$ reached from $a$ along a path $\pi$ of out-edges, let
  $w_\pi\subseteq\mathcal{O}$ be the set of types on the edges of $\pi$ and let $N(x)$ be the
  out-neighbours $y$ of $x$ with $R_a(y,\pi y)\ne\emptyset$. The reached family $R_a(x,\pi)$, a set of
  finite sets of typed wires, is
  \begin{gather*}
    R_a(x,\pi)=\twocolbreak
    \begin{cases}
      \{\{(a,w_\pi,x)\}\} & x\in\mathcal{A},\\
      \emptyset & x\text{ earlier on }\pi,\\
      \{\{(a,w_\pi,\top)\}\} & \text{no out-edges},\\
      \textstyle\bigcup_{y\in N(x)}R_a(y,\pi y) & \nu(x)=\mathrm{XOR},\\
      \{\textstyle\bigcup_{y\in N(x)}S_y : S_y\in R_a(y,\pi y)\} & \nu(x)=\mathrm{AND},
    \end{cases}
  \end{gather*}
  taking the cases in order, so the last four apply to mediators, with the AND case empty when
  $N(x)$ is. The forward family $\mathcal{F}_a$ joins the out-edges of
  $a$ as an AND mediator does, and $\mathcal{F}_a=\{\emptyset\}$ when no out-edge reaches anything.
  The backward family $\mathcal{B}_a$ is the same construction along in-edges, with typed wires
  $(x,w_\pi,a)$ and $(\bot,w_\pi,a)$. The contexts of $a$ are the pairs
  $\mathrm{Contexts}(a)=\mathcal{B}_a\times\mathcal{F}_a$.
\end{definition}

The walk visits each node at most once per path, so $\mathcal{F}_a$ and $\mathcal{B}_a$ are
finite on every finite graph, cycles included. Activities are identified by their labels, so
typed wires are compared by label.

\begin{definition}[Generator cospan]
  \label{def:generator-cospan-general}
  For $a\in\mathcal{A}$, each context $c=(P,S)\in\mathrm{Contexts}(a)$ gives the generator
  $g_{a,c}=(P\hookrightarrow P+S\hookleftarrow S,\ d_{a,c})$ with one hyperedge $e_a$, labelled
  $\lambda(e_a)=\ell(a)$, whose source is $P$ and whose target is $S$. Its decoration $d_{a,c}$
  also carries the constraint system $\Lambda_{a,c}$ of
  Definition~\ref{def:multiplicity-decoration}. Every context gives a generator and none is
  discarded. The signature of $G$ is
  $\Sigma=\{g_{a,c} : a\in\mathcal{A},\ c\in\mathrm{Contexts}(a)\}$.
\end{definition}

A typed wire is identified by its triple $(a,w,b)$. The set of edge weights met on its paths is
data of the typed wire and plays no part in its identity, and each generator's constraints use the weight of
the edge at its own end. The counts of a generator are fixed by its own legs only. Nothing
outside the generator ties its input counts to its output counts, so an activity may create or
consume objects of a type that appears on one side only.

\begin{definition}[Constraint system]
  \label{def:multiplicity-decoration}
  Each leg $p\in P\cup S$ of $g_{a,c}$ has a count $n_p\in\mathbb{N}$, the number of objects
  on $p$ in one occurrence of $a$. If the edge at $a$'s end of the path has weight $k$ then
  $n_p=k$, and if that edge is variable then $n_p\ge 0$. The equalities and inequalities
  $\rho(a)$ of Definition~\ref{def:lm-graph} apply to the counts of the legs whose edge at $a$'s
  end they name, as the shared keys and cardinality intervals of object-centric causal nets do. The constraint system $\Lambda_{a,c}$ is the conjunction of these
  constraints. Two generators are equal when they have the same label, the same typed wires and
  constraint systems with the same solutions.
\end{definition}

\subsection{The free category and its traces}
\label{subsec:free-category}

A leg with count $n$ is the $n$-fold tensor power of its typed wire, with $n$ ports. A port joins
an output of one occurrence to an input of another on the same typed wire, and nothing else
composes.

\begin{definition}[Free category of a signature]
  \label{def:free-category}
  For $g=g_{a,c}\in\Sigma$ and a count $k_p\in\mathbb{N}$ for each leg $p$, the count assignment of
  $g$ with these counts is the box
  \[
    g^{k}:\ \textstyle\bigotimes_{p\in P}p^{\otimes k_p}\ \to\ \bigotimes_{q\in S}q^{\otimes k_q}
  \]
  labelled $\ell(a)$, with $k_p$ ports on each leg $p$ in a fixed order of the legs.
  $\mathbf{F}(\Sigma)$ is the free symmetric monoidal category whose objects are the finite lists of typed wires and
  whose generating morphisms are all count assignments. A morphism $f$ is satisfiable when every count
  assignment $g^k$ in it satisfies $\Lambda_{a,c}$ with $n_p=k_p$.
\end{definition}

A morphism of $\mathbf{F}(\Sigma)$ is a string diagram built from count assignments, identities and
symmetries by $;$ and $\otimes$. A count assignment is an activity morphism of
Section~\ref{sec:hypergraphs} whose multiplicities on each typed wire are the counts $k_p$, so each
generator of $\mathbf{H}(\Sigma)$ gives one count assignment per choice of counts. $\mathbf{F}(\Sigma)$ is
the Frobenius-free part of $\mathbf{H}(\Sigma)$, the diagrams in which no port is copied, merged,
created or discarded.

A morphism belongs to $\mathbf{F}(\Sigma)$ whatever its boundary. An open input port stands
for an object that enters from outside, and therefore every initialisation of the model is some
morphism. A loop in the model appears as repeated composition, since a diagram has no cycles.

\begin{definition}[Trace semantics of a morphism]
  \label{def:trace-map}
  The occurrences of a morphism $f$ of $\mathbf{F}(\Sigma)$ are its count assignments. An occurrence $e$ precedes an occurrence
  $e'$ when a port of $f$ joins an output of $e$ to an input of $e'$, and the causal partial
  order $D_f$ is the reflexive and transitive closure of this relation. If $f$ is satisfiable then
  \begin{gather*}
    \Gamma(f)=\{\ell(e_1)\cdots\ell(e_m) :\twocolbreak
      e_1,\dots,e_m \text{ a linear extension of } D_f\},
  \end{gather*}
  where a silent label contributes the empty word, and otherwise $\Gamma(f)=\emptyset$.
\end{definition}

The relation $D_f$ is a partial order because diagrams are acyclic. A composite that contains
an unsatisfiable count assignment is itself unsatisfiable, and $\Gamma$ is empty on it.

\begin{exmp}[An AND split and join]
  \label{ex:and-fanout}
  Let $a$ lead through an AND mediator to $b$ and $c$, and let both lead through a second AND mediator to
  $d$, with one type and weight $1$ throughout. Write $(a,b)$ for $(a,\{\ast\},b)$, and write an
  activity's name for the count assignment of its generator with one port on each leg, and $I$ for the
  monoidal unit, the empty list of typed wires. The walk gives
  $\mathcal{F}_a=\{\{(a,b),(a,c)\}\}$ and $\mathcal{B}_d=\{\{(b,d),(c,d)\}\}$, and the count assignments include
  \[
    a:\ I\to(a,b)\otimes(a,c),\qquad b:\ (a,b)\to(b,d).
  \]
  With $c:(a,c)\to(c,d)$ and $d:(b,d)\otimes(c,d)\to I$, the composite
  $f=a\,;(b\otimes c);d$ has the order $a\le b\le d$ and $a\le c\le d$. Its two linear extensions
  give $\Gamma(f)=\{abcd,\ acbd\}$. The open composite $a\,;(b\otimes\mathrm{id}_{(a,c)})$ also
  belongs to $\mathbf{F}(\Sigma)$ and gives $\{ab\}$. Had the edge leaving $a$ carried weight $2$,
  the count assignment of $a$ with one port on each leg would be unsatisfiable and $\Gamma(f)=\emptyset$
  for this $f$, while the count assignment of $a$ with two ports on each leg composes.
\end{exmp}

\begin{lemma}[$\Gamma$ is well defined]
  \label{lem:gamma-well-defined}
  Equal morphisms of $\mathbf{F}(\Sigma)$ have equal images under $\Gamma$.
\end{lemma}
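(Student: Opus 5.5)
The plan is to show that $\Gamma(f)$ depends only on data that every representative of a morphism of $\mathbf{F}(\Sigma)$ shares, namely the labelled, count-annotated occurrence graph. Morphisms of the free symmetric monoidal category are equivalence classes of formal expressions built from count assignments, identities and symmetries by $;$ and $\otimes$. Two expressions are identified exactly when they are related by a finite chain of instances of the symmetric monoidal axioms: associativity and unit laws for $;$ and $\otimes$, functoriality of $\otimes$ (the interchange law), the naturality of the symmetry, and the symmetry's involution and hexagon laws. I would first assign to each expression $f$ an invariant $\Phi(f)$. This is the finite set of its occurrences, each carrying its label $\ell(a)$, its generator $g_{a,c}$ and its counts $k$, together with the set of ports. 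Each port records which output position of which occurrence it joins to which input position of which occurrence, or to the boundary. $\Phi$ is defined by structural induction, and the construction is exactly the unique-ID pushout of Definition~\ref{def:pushout}. Composition $f;g$ takes the disjoint union of occurrences and glues the output ports of $f$ to the input ports of $g$ position by position. Tensor takes a disjoint union and concatenates boundaries. Identities and symmetries contribute no occurrences and only permute boundary positions.

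Second, I would check that $\Phi$ is invariant up to label-, generator- and count-preserving isomorphism of occurrence graphs under each generating axiom, and hence under the equivalence they generate. Associativity and the unit laws hold because disjoint union and pushout of finite sets are associative and unital up to canonical bijection. For interchange, $(f\otimes g);(h\otimes k)$ and $(f;h)\otimes(g;k)$ produce the same occurrences, and in each the ports joining $f$ to $h$ and $g$ to $k$ are the same ones. For naturality of the symmetry, a symmetry merely relabels boundary positions, so sliding a box through a crossing yields the same port incidences. This is the step I expect to be the main obstacle. The difficulty is not conceptual but bookkeeping: one must fix a precise convention for port positions in the tensor powers $p^{\otimes k_p}$ and verify that the symmetry laws act on them coherently. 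I would handle it by citing the standard fact that morphisms of a free symmetric monoidal category on a monoidal signature correspond to isomorphism classes of acyclic port graphs, as in the string-diagram literature of \citet{bonchiStringDiagramRewrite2022}. I would then observe that the count assignments are just boxes of that signature with arities fixed by $k$.

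Third, I would read $\Gamma$ off $\Phi$. The precedence relation of Definition~\ref{def:trace-map} is determined by the ports of $\Phi(f)$, so isomorphic invariants give isomorphic causal orders $D_f$. The same holds for the labelling, which reproduces the invariance already argued in Proposition~\ref{prop:scenario-normal-form}. Satisfiability is also determined by $\Phi(f)$, since it asks only that each occurrence's counts satisfy $\Lambda_{a,c}$. Consequently, if $f=f'$ in $\mathbf{F}(\Sigma)$, then either both are unsatisfiable and both images are $\emptyset$, or both are satisfiable. In the satisfiable case an isomorphism $D_f\cong D_{f'}$ preserving labels maps linear extensions bijectively to linear extensions with the same label words, with silent labels erased in both, so $\Gamma(f)=\Gamma(f')$.
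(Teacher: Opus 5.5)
Your proposal is correct and follows essentially the paper's proof: the paper likewise observes that $\Gamma$ depends on a term only through its occurrences, their labels, satisfiability and the wiring of outputs to inputs. It then checks that each symmetric monoidal equation (associativity, units, interchange, naturality of the symmetry, $\sigma;\sigma=\mathrm{id}$, the hexagons) preserves these data, and reads $D_f$ off the wiring, as you do. The only step the paper states that you leave implicit in ``hence under the equivalence they generate'' is that the wiring of $f;g$ and $f\otimes g$ is fixed by the wirings of $f$ and $g$, so invariance propagates to any term containing an equated part. Your structural-induction definition of $\Phi$ supplies this, provided your isomorphisms fix boundary positions.
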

\begin{proof}
  A morphism is a class of terms under the symmetric monoidal equations, and $\Gamma$ depends on a
  term only through its occurrences, their labels and which output port reaches which input. Identities
  and symmetries have no occurrences and only relay ports. The associativity and unit laws regroup a
  term without adding occurrences or redirecting ports. Interchange, $(f\otimes g);(f'\otimes g')=
  (f;f')\otimes(g;g')$, joins the same outputs to the same inputs on both sides, and naturality
  of the symmetry moves a crossing past a box with every port ending where it did. The law
  $\sigma;\sigma=\mathrm{id}$ and the hexagons relay each port unchanged. Each equation therefore
  preserves the occurrences, satisfiability and the wiring, which records which output reaches
  which input, the boundary included. The wiring of $f;g$ and of $f\otimes g$ is fixed by
  those of $f$ and $g$, so the invariance extends to every term containing an equated part, and
  $D_f$ is determined by the wiring.
\end{proof}

\subsection{The model invariant}
\label{subsec:theorem}
\label{subsec:minimal-signatures}

A selection $\mathcal{S}$ assigns to each signature $\Sigma$ a set $\mathcal{S}(\Sigma)$ of morphisms of
$\mathbf{F}(\Sigma)$ by a rule that uses $\Sigma$ alone. Its language is
$L_{\mathcal{S}}(\Sigma)=\bigcup_{f\in \mathcal{S}(\Sigma)}\Gamma(f)$.

\begin{theorem}[Model invariant]
  \label{thm:canonical-presentation}
  For every selection $\mathcal{S}$, if $\Sigma_1=\Sigma_2$ then $L_{\mathcal{S}}(\Sigma_1)=L_{\mathcal{S}}(\Sigma_2)$.
\end{theorem}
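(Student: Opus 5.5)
The plan is to show that the language is a function of the signature alone, by factoring it as
\[
\Sigma\ \longmapsto\ \mathbf{F}(\Sigma)\ \longmapsto\ \mathcal{S}(\Sigma)\ \longmapsto\ \bigcup_{f\in\mathcal{S}(\Sigma)}\Gamma(f)
\]
and checking that each arrow depends only on its input. Once that is done, equal inputs $\Sigma_1=\Sigma_2$ propagate through the chain to equal outputs. The argument is essentially an unwinding of definitions, and the only substantive point is that $\Gamma$ is well defined, which Lemma~\ref{lem:gamma-well-defined} already supplies.

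The first step is that $\mathbf{F}(\Sigma)$ is determined by $\Sigma$. By Definition~\ref{def:free-category}, its objects are the finite lists of typed wires and its generating morphisms are the count assignments $g^k$ for $g\in\Sigma$. Each $g^k$ carries the label $\ell(a)$ and ports on the typed wires of $g$'s legs. Satisfiability of $g^k$ is fixed by $\Lambda_{a,c}$. Under the equality of generators in Definition~\ref{def:multiplicity-decoration}, namely same label, same typed wires, and constraint systems with the same solutions, equal generators therefore yield the same boxes with the same satisfiability predicate. So $\Sigma_1=\Sigma_2$ gives the same generating data, and hence the same free symmetric monoidal category, with satisfiability of each morphism agreeing. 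The one point needing care is this notion of generator equality. Generators that are equal only up to constraint systems with equal solution sets could still differ syntactically, and the argument must confirm that every ingredient of $\mathbf{F}(\Sigma)$ and $\Gamma$ reads $\Lambda_{a,c}$ only through its solution set. I expect this to be the main, though mild, obstacle, and the route is to inspect Definition~\ref{def:free-category} and Definition~\ref{def:trace-map}.

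The second step is the selection. $\mathcal{S}$ is by hypothesis a rule using $\Sigma$ alone, so $\mathcal{S}(\Sigma_1)=\mathcal{S}(\Sigma_2)$ as sets of morphisms of the common category $\mathbf{F}(\Sigma_1)=\mathbf{F}(\Sigma_2)$.

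The third step is that $\Gamma$ is a function on morphisms, not on terms. This is Lemma~\ref{lem:gamma-well-defined}. By Definition~\ref{def:trace-map}, $\Gamma(f)$ depends on three things:
\begin{itemize}
\item the occurrences and their labels,
\item the wiring, which determines $D_f$,
\item satisfiability, fixed in the first step.
\end{itemize}
It therefore agrees on each $f$. Taking the union over the common set $\mathcal{S}(\Sigma_1)=\mathcal{S}(\Sigma_2)$ gives $L_{\mathcal{S}}(\Sigma_1)=L_{\mathcal{S}}(\Sigma_2)$.
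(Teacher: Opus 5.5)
Your proposal is correct and follows the paper's proof: equal signatures give the same count assignments with the same labels and the same satisfiability, hence $\mathbf{F}(\Sigma_1)=\mathbf{F}(\Sigma_2)$, the selection then picks the same morphisms from both, and Lemma~\ref{lem:gamma-well-defined} makes the unions defining $L_{\mathcal{S}}$ agree. Your extra check, that $\Lambda_{a,c}$ enters only through its solution set via the satisfiability predicate of Definition~\ref{def:free-category}, is correct, and the paper's proof compresses it into the phrase ``the same satisfiable counts''.
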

\begin{proof}
  Equal signatures have the same typed wires and the same count assignments, with the same labels and the same
  satisfiable counts, so $\mathbf{F}(\Sigma_1)=\mathbf{F}(\Sigma_2)$. A rule that uses the
  signature alone selects the same morphisms from both, and by
  Lemma~\ref{lem:gamma-well-defined} the unions defining $L_{\mathcal{S}}$ agree.
\end{proof}

The theorem holds for every selection, and how well the language separates models depends on the
selection. The default selection takes the connected process diagrams of Section~\ref{sec:hypergraphs},
the morphisms with any boundary that have at least one count assignment and whose count assignments are
joined into one piece by their ports. It covers every
initialisation, including a pure cycle $a\to b\to c\to a$ started at whichever activity receives the first object, and it
keeps the order of occurrences, whereas over all of $\mathbf{F}(\Sigma)$ count assignments placed side by side
give every ordering of the labels. The converse of the theorem fails, since different signatures
can give equal languages.

Signatures can also be compared by inclusion. Write $\Sigma_1\subseteq\Sigma_2$ when every generator of
$\Sigma_1$ is equal to a generator of $\Sigma_2$ in the sense of
Definition~\ref{def:multiplicity-decoration}, with the same label, the same typed wires and
constraint systems with the same solutions.

\begin{proposition}[Inclusion of free categories]
  \label{prop:inclusion-functor}
  If $\Sigma_1\subseteq\Sigma_2$ then there is a strict symmetric monoidal functor
  $\mathrm{inc}\colon\mathbf{F}(\Sigma_1)\to\mathbf{F}(\Sigma_2)$ that is the identity on typed wires
  and on count assignments, and $\Gamma(\mathrm{inc}\,f)=\Gamma(f)$ for every morphism $f$.
\end{proposition}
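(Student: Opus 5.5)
The plan is to construct $\mathrm{inc}$ through the universal property of the free symmetric monoidal category $\mathbf{F}(\Sigma_1)$ of Definition~\ref{def:free-category}. A strict symmetric monoidal functor out of a free symmetric monoidal category is determined by two pieces of data, both of which must respect typing: an assignment on generating objects and an assignment on generating morphisms. Here the generating objects are the typed wires and the generating morphisms are the count assignments $g^k$.

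On objects, every typed wire of $\Sigma_1$ is a typed wire of $\Sigma_2$. Each such wire occurs as a leg of some generator of $\Sigma_1$, and that generator equals a generator of $\Sigma_2$ with the same typed wires, since typed wires are identified by their triples $(a,w,b)$. So $\mathrm{inc}$ sends a finite list of typed wires to the same list.

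On morphisms, a count assignment $g^k$ with $g=g_{a,c}\in\Sigma_1$ is sent to $g'^k$, where $g'\in\Sigma_2$ is a generator equal to $g$ in the sense of Definition~\ref{def:multiplicity-decoration}. Because $g'$ has the same label and the same legs, the box $g'^k$ has the same domain and codomain, the same label $\ell(a)$, and the same counts $k_p$ on each leg. Under equality of generators $g'^k$ is literally the same generating morphism, so $\mathrm{inc}$ is the identity on count assignments. The choice of $g'$ is immaterial, because any two generators of $\Sigma_2$ equal to $g$ are equal to each other. The universal property then extends these assignments uniquely to a strict symmetric monoidal functor. It preserves $;$, $\otimes$, identities and symmetries, so it acts on a term by replacing each count assignment by itself.

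It remains to check $\Gamma(\mathrm{inc}\,f)=\Gamma(f)$. I will fix a term representing $f$; by Lemma~\ref{lem:gamma-well-defined} the choice does not matter. The image term has the same occurrences, the same labels, and the same port wiring, since $\mathrm{inc}$ fixes identities and symmetries and preserves both compositions. Hence the causal order $D_{\mathrm{inc}\,f}$ equals $D_f$, and the two have the same linear extensions.

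Satisfiability must also agree. The constraint systems $\Lambda_{a,c}$ of $g$ and $\Lambda$ of $g'$ have the same solutions, so $g^k$ satisfies its constraints if and only if $g'^k$ does. Therefore $f$ is satisfiable exactly when $\mathrm{inc}\,f$ is, and both sides of Definition~\ref{def:trace-map} coincide, including the case where both are empty.

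The main obstacle is stating precisely what "identity on count assignments" means. Generating morphisms of $\mathbf{F}(\Sigma_i)$ are indexed by generators up to the equality of Definition~\ref{def:multiplicity-decoration}, not by syntactic identity. The constraint decoration is part of the generator, but it does not affect the type or label of a box. I would make this explicit by reading $\mathbf{F}(\Sigma)$ as generated by equivalence classes of generators under that equality. Then $\Sigma_1\subseteq\Sigma_2$ is a genuine inclusion of generating sets, and functoriality comes for free, exactly as in the proof of Theorem~\ref{thm:canonical-presentation}.
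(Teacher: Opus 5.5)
Your proposal is correct and takes essentially the same route as the paper. You extend the identity on typed wires and count assignments to a strict symmetric monoidal functor by the universal property, then note that the image term has the same occurrences, labels, wiring and satisfiability, so $D_{\mathrm{inc}\,f}=D_f$ and $\Gamma$ agrees, with Lemma~\ref{lem:gamma-well-defined} absorbing the choice of term; your remark about indexing generating morphisms by generators up to the equality of Definition~\ref{def:multiplicity-decoration} makes explicit an identification the paper's proof uses tacitly.
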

\begin{proof}
  Every typed wire of $\Sigma_1$ is a typed wire of $\Sigma_2$, so every object of
  $\mathbf{F}(\Sigma_1)$ is an object of $\mathbf{F}(\Sigma_2)$. A count assignment $g^{k}$ of
  $\Sigma_1$ has the label, legs and counts of the count assignment $g^{k}$ of the equal generator of
  $\Sigma_2$, and it satisfies one constraint system exactly when it satisfies the other, since the
  two have the same solutions. Sending each generating morphism of $\mathbf{F}(\Sigma_1)$ to the same
  generating morphism of $\mathbf{F}(\Sigma_2)$ extends to a strict symmetric monoidal functor by
  the universal property of the free symmetric monoidal category. The functor sends a term built by
  $;$ and $\otimes$ from count assignments, identities and symmetries to the same term, so
  $\mathrm{inc}\,f$ has the occurrences of $f$ with the same labels and the same satisfiability, and
  each of its ports joins the output and input it joins in $f$. Hence $D_{\mathrm{inc}\,f}=D_f$ and
  $\Gamma(\mathrm{inc}\,f)=\Gamma(f)$. By Lemma~\ref{lem:gamma-well-defined} the equation holds for
  every term representing the morphism.
\end{proof}

A selection $\mathcal{S}$ is monotone when $\Sigma_1\subseteq\Sigma_2$ implies
$\mathrm{inc}(\mathcal{S}(\Sigma_1))\subseteq\mathcal{S}(\Sigma_2)$.

\begin{corollary}[Signature inclusion]
  \label{cor:signature-inclusion}
  For every monotone selection $\mathcal{S}$, if $\Sigma_1\subseteq\Sigma_2$ then
  $L_{\mathcal{S}}(\Sigma_1)\subseteq L_{\mathcal{S}}(\Sigma_2)$.
\end{corollary}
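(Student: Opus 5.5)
The plan is to prove Corollary~\ref{cor:signature-inclusion} by a direct inclusion of the two unions that define the languages, transporting each selected morphism of $\mathbf{F}(\Sigma_1)$ along the functor of Proposition~\ref{prop:inclusion-functor}. First, from $\Sigma_1\subseteq\Sigma_2$, I invoke Proposition~\ref{prop:inclusion-functor} to obtain the strict symmetric monoidal functor $\mathrm{inc}\colon\mathbf{F}(\Sigma_1)\to\mathbf{F}(\Sigma_2)$. This functor satisfies $\Gamma(\mathrm{inc}\,f)=\Gamma(f)$ for every morphism $f$. Second, I fix a trace $t\in L_{\mathcal{S}}(\Sigma_1)$. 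By the definition of $L_{\mathcal{S}}$ there is some $f\in\mathcal{S}(\Sigma_1)$ with $t\in\Gamma(f)$.

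Monotonicity of $\mathcal{S}$ gives $\mathrm{inc}\,f\in\mathrm{inc}(\mathcal{S}(\Sigma_1))\subseteq\mathcal{S}(\Sigma_2)$. Hence $t\in\Gamma(f)=\Gamma(\mathrm{inc}\,f)\subseteq\bigcup_{f'\in\mathcal{S}(\Sigma_2)}\Gamma(f')=L_{\mathcal{S}}(\Sigma_2)$. Since $t$ was arbitrary, the inclusion follows. Written as a single line, the argument is
\[
  L_{\mathcal{S}}(\Sigma_1)=\bigcup_{f\in\mathcal{S}(\Sigma_1)}\Gamma(\mathrm{inc}\,f)\subseteq\bigcup_{f'\in\mathcal{S}(\Sigma_2)}\Gamma(f')=L_{\mathcal{S}}(\Sigma_2).
\]

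All of the substance sits in Proposition~\ref{prop:inclusion-functor}, which I take as given. It supplies both that $\mathrm{inc}$ preserves satisfiability and that it preserves the causal order $D_f$. The only step needing care is that $\Gamma(\mathrm{inc}\,f)=\Gamma(f)$ covers the unsatisfiable case. There both sides are empty, so such $f$ contribute nothing to either union, and the argument goes through unchanged. I would also remark in passing that $\Sigma_1=\Sigma_2$ makes $\mathrm{inc}$ the identity, so every selection is trivially monotone in that case. Applying the corollary in both directions then recovers Theorem~\ref{thm:canonical-presentation}.
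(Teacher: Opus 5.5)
Your proof is correct and is essentially the paper's own argument: pick $f\in\mathcal{S}(\Sigma_1)$ with the word in $\Gamma(f)$, transport it along $\mathrm{inc}$ using $\Gamma(\mathrm{inc}\,f)=\Gamma(f)$ from Proposition~\ref{prop:inclusion-functor}, and use monotonicity to place $\mathrm{inc}\,f$ in $\mathcal{S}(\Sigma_2)$. Your closing remark about recovering Theorem~\ref{thm:canonical-presentation} holds as long as monotonicity is read as the condition at the single pair $\Sigma_1=\Sigma_2$, since that is all the argument uses and the corollary as stated assumes a globally monotone selection.
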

\begin{proof}
  A word of $L_{\mathcal{S}}(\Sigma_1)$ lies in $\Gamma(f)$ for some $f\in\mathcal{S}(\Sigma_1)$. By
  Proposition~\ref{prop:inclusion-functor} it lies in $\Gamma(\mathrm{inc}\,f)$, and
  $\mathrm{inc}\,f\in\mathcal{S}(\Sigma_2)$ because $\mathcal{S}$ is monotone.
\end{proof}

The connected process diagrams form a monotone selection. Having at least one count assignment
and having all count assignments joined into one piece by ports are properties of the occurrences
and the wiring, and $\mathrm{inc}$ preserves both. The corollary needs monotonicity, whereas
Theorem~\ref{thm:canonical-presentation} holds for every selection. The rule that selects the
morphisms in which every generator of $\Sigma$ occurs uses $\Sigma$ alone, yet a larger signature
can lose morphisms under it. The corollary gives no strictness, since a proper inclusion of
signatures may still give equal languages.

\subsection{Analysis of composites}
\label{subsec:analysis}

The analysis of a particular model selects and diagnoses composites of $\mathbf{F}(\Sigma)$, and
the \texttt{proc-posets} implementation performs it without changing $\Sigma$. A start generator
provides objects for one way the process can begin, a set of first activities that receive
objects together, and an end generator takes objects at the matching finish. These generators
form $\Sigma_\gamma$, every report states them as an assumption, and a selection that uses them
needs $\Sigma_\gamma$ equal as well as $\Sigma$ for Theorem~\ref{thm:canonical-presentation} to
apply. It is monotone when the start and end generators of the smaller signature are among those
of the larger, since a closed run stays closed under $\mathrm{inc}$, and
Corollary~\ref{cor:signature-inclusion} then applies. A closed run joins a start generator to an end generator with no open port. A livelock
family is a sequence of composites that another pass of a loop can always extend and that never
closes. A base run is a closed run from which no loop pass can be removed. Removal is
read on labels, so a closed run is a base run when no shorter closed run with the same start and
end generators differs from it, as label counts, by a sum of loop passes. Every closed run is then a
base run with loop passes inserted, and the closed runs fall into classes indexed by how many
passes are inserted where (Section~\ref{sec:validation}). Loops are classed by how one pass changes the object counts. A bounded loop leaves them
unchanged, a growing but closable loop adds objects that an end generator still absorbs, an
explosive loop produces objects that nothing can absorb, and a livelock loop is never left. Run
coverage lists the generators that lie on no run and the reason for each. When counts cannot
balance from the stated start, the report gives the least extra supply that would balance them.
Open typed wires that nothing joins, typed wires whose paths carry more than one type and models whose
activities form separate components are reported as well.

The \texttt{proc-posets} test suite checks the extraction on 26 hand-derived models against an
independent token game. On its bench, equal signatures never gave different languages under any
selection tried, and the connected process diagrams gave the fewest pairs of models with different
signatures and equal languages.

%% file: sections/03-petri_nets.tex
\subsection{Petri nets}\label{sec:petri-nets}

Petri nets are a standard formalism for concurrency and process mining. We treat both classical Petri nets and object-centric Petri nets in a single typed framework. The common structural core is a finite place-transition incidence structure. The difference lies in the interpretation of tokens and markings. In the classical case, tokens are indistinguishable and markings count token multiplicity. In the object-centric case, places carry types and markings record finite collections of object identities of the appropriate type. In both settings, transitions will be interpreted as generator cospans of a cospan-algebra signature, and the flow between them as typed wires of its free symmetric monoidal category.

\begin{definition}[Object-centric Petri net~\cite{vandettenDiscoveringCompactLive2024}]
  An object-centric Petri net (OCPN) is a tuple $N=(P,T,F,\mathcal{O},\mathrm{type},\ell,M_0)$ of places $P$, transitions $T$, a flow relation $F\subseteq(P\times T)\cup(T\times P)$, a finite set of object types $\mathcal{O}$, a typing $\mathrm{type}:P\to\mathcal{O}$, a labelling $\ell:T\to\mathcal{L}\cup\{\tau\}$ over an activity alphabet $\mathcal{L}$ with silent symbol $\tau$, and an initial marking $M_0\in\mathbb{N}^{P}$. Write $T_\tau=\ell^{-1}(\tau)$ for the silent transitions and $\mathrm{src}(t),\mathrm{tgt}(t)$ for the input and output places of a transition $t$. Several transitions may share a label, the firing modes of one activity (Remark~\ref{rem:mode-refinement}).
\end{definition}

\begin{definition}[Petri net]
  A Petri net is an object-centric Petri net where $\mathrm{type}(p)=*$ for all $p\in{}P$.
\end{definition}

Both classes share this incidence structure and differ only in how tokens are read.
The classical case has indistinguishable counts, the object-centric case typed object
identities. One construction serves both. Before reading object types off the
net we record the structural condition that keeps them unambiguous across silent
transitions.

\begin{definition}[Type-consistent OCPN]\label{def:type-consistent}
  Let $(P\cup T_\tau,\,F)$ be the silent-flow subgraph of an OCPN,
  obtained by deleting the observable transitions $T\setminus T_\tau$ together
  with their incident arcs. The net is type-consistent when, along every
  directed path of the silent-flow subgraph between two places, the
  concretely-typed places carry one common object type. Untyped routing places
  and the artificial source and sink are exempt and inherit the type of the
  path on which they lie. Equivalently, no directed chain of silent transitions
  joins two places of different concrete object type. Exclusive routing is
  unaffected. A single untyped place may still fan out to alternatives of
  different type, since these lie on distinct directed paths.
\end{definition}

We assume throughout that object-centric nets are type-consistent. The
condition holds automatically for nets produced by object-centric discovery,
where each object type flows in its own coloured sub-net, and is the Petri-net
form of the prohibition, due to \citet{lissObjectCentricCausalNets2025} and reflected in the causal-net
boundary convention of Section~\ref{sec:causal-nets}, against a transition
that converts one object type into another. Remark~\ref{rem:ocpn-silent-typing}
records how the construction relies on it.

This construction instantiates the general framework of Section~\ref{sec:general-framework}.
Its content is the way we read a net as an AND/OR graph $(V,E)$. We build the mediators the
framework needs instead of reusing $P\cup T$ verbatim. Each place becomes an XOR mediator, since a
token in it is left by one of its producing transitions and taken by one of its consumers, a
choice. Each transition's firing becomes an AND synchronisation, consuming its whole pre-set and
yielding its whole post-set at once, so its input places are joined by an AND mediator and its
output places split from one. An observable transition is in addition the activity it labels,
sitting between its pre-set and post-set AND mediators. The activity does no combining of its own,
exactly as in Definition~\ref{def:lm-graph}, with all concurrency in the AND mediators
and all choice in the XOR places. A silent transition contributes only its transparent AND mediator
and no generator. These choices determine the instance map.
\medskip
\begin{center}
  \small\begin{tabularx}{\linewidth}{@{}lX@{}}
    \hline
    General framework & OCPN \\
    \hline
    Activities $\mathcal{A}$ & observable transitions $T\setminus T_\tau$ \\
    AND mediators & each transition's pre-set join and post-set split; silent transitions $T_\tau$ (transparent, no generator) \\
    XOR mediators & places $P$ \\
    Edges $E$ & the flow $F$, threaded through these mediators \\
    \hline
  \end{tabularx}
\end{center}
\medskip
The walk of Definition~\ref{def:contexts-general} runs on this graph unchanged. From the pre-set
AND mediator of an observable transition it passes through each input place, an XOR mediator that
keeps the reached sets of its producers separate, and through any silent transition, and it stops
at the first observable transitions. The post-set is read in the same way. The backward and
forward families hold one reached set per joint choice of observable neighbour, each endpoint read
through the labelling $\ell$, so transitions sharing an activity label contribute the same labelled
typed wire and are identified at composition. Each transition--context pair yields one generator
cospan (Definition~\ref{def:generator-cospan-general}), and no context is discarded.

\begin{theorem}[Canonical cospan-algebra presentation of typed Petri nets]
  \label{thm:petri-to-cospan-signature}
  Let \(N=(P,T,F,\mathcal{O},\mathrm{type},\ell,M_0)\) be an object-centric Petri net. Then \(N\) canonically determines a cospan-algebra signature \(\Sigma\), and hence the free symmetric monoidal category \(\mathbf{F}(\Sigma)\).
\end{theorem}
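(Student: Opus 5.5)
The plan follows the shared schema stated at the head of Section~\ref{sec:notation-by-notation}, and records only the Petri-net delta. First, from $N$ I build an AND/OR graph $G_N$ as in Definition~\ref{def:lm-graph}. The activities are $\mathcal{A}=T\setminus T_\tau$ with labelling $\ell$. Each place $p\in P$ becomes an XOR mediator. Each transition $t$ gets two AND mediators, $\mathrm{pre}_t$ and $\mathrm{post}_t$. For observable $t$ the edges are $\mathrm{pre}_t\to t\to\mathrm{post}_t$, and for silent $t$ the single edge is $\mathrm{pre}_t\to\mathrm{post}_t$. Each arc $(p,t)\in F$ becomes an edge $p\to\mathrm{pre}_t$, and each arc $(t,p)$ becomes an edge $\mathrm{post}_t\to p$. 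The weight $\mathrm{wt}$ of an edge is the arc weight, or $1$ for an ordinary net. Its type is $\mathrm{type}(p)$ of the adjacent place. Edges internal to a transition inherit the type of the path on which they lie. I set $\rho(a)=\emptyset$, since an ordinary Petri net relates no weights.

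I then check that $G_N$ is a well-formed instance. Finiteness of $P$, $T$ and $F$ makes $(\mathcal{A}\sqcup R,E)$ a finite directed graph. The map uses only $N$ and no arbitrary choices, apart from naming the mediators, and renaming them changes nothing downstream. For this reason the construction is canonical, in the sense that isomorphic nets yield equal signatures.

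The main obstacle is the typing of the wires $(a,w_\pi,b)$. Definition~\ref{def:contexts-general} records the set $w_\pi$ of types along a path. A silent chain could therefore mix types and produce a wire that no place of $N$ carries. Here I invoke type-consistency (Definition~\ref{def:type-consistent}). Every path that the walk follows between two observable transitions passes only through places and silent transitions. So every concretely typed place on it carries one common type, and $w_\pi$ is a singleton, or is inherited by an untyped routing place. Each leg of a generator is then a legitimately typed boundary port, and type-matched composition in $\mathbf{F}(\Sigma)$ agrees with the token flow of $N$.

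With $G_N$ in hand the rest is inherited from Section~\ref{sec:general-framework}. Because $G_N$ is finite, the walk is bounded, cycles of $N$ included, so $\mathcal{B}_a$, $\mathcal{F}_a$ and $\mathrm{Contexts}(a)$ are finite. Definitions~\ref{def:generator-cospan-general} and~\ref{def:multiplicity-decoration} then yield a finite set of generators $\Sigma$. Transitions sharing a label give equal generators in the sense of Definition~\ref{def:multiplicity-decoration}, so the set is independent of how firing modes are split. Definition~\ref{def:free-category} then yields $\mathbf{F}(\Sigma)$. Theorem~\ref{thm:canonical-presentation} makes $\Sigma$ the model invariant, and Corollary~\ref{cor:untyped-general} specialises the result to classical Petri nets with $\mathcal{O}=\{\ast\}$.
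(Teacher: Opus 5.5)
Your route is the paper's own. It uses the same instance map (places as XOR mediators, pre- and post-set AND mediators, silent transitions transparent and generator-free), finiteness for Definition~\ref{def:lm-graph} and for the walk, and type-consistency as the single Petri-net delta. One step of your explicit construction fails, though, and it is the one you single out as the main obstacle. You type the edges $\mathrm{pre}_t\to t$ and $t\to\mathrm{post}_t$ of an observable transition by ``the type of the path on which they lie''. Take a synchronising transition such as $s$ in Example~\ref{ex:running-pn-cospan}, which consumes an order and an item. The edge $\mathrm{pre}_s\to s$ lies on an order path and on an item path, so $\mathrm{type}\colon E\to\mathcal{O}$ is not a function and $G_N$ is not yet a well-formed AND/OR graph. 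No choice rescues it. Whatever type you give $s\to\mathrm{post}_s$, even with one parallel edge per type, the AND mediator $\mathrm{post}_s$ joins all of its branches. The walk from $s$ therefore reaches the item place along a path carrying both types, and $w_\pi$ is not a singleton. This mixing happens at the observable transition's own mediators, which type-consistency, a condition on chains of silent transitions, never reaches. Your claim that the walk passes only through places and silent transitions omits exactly these two edges. The same edges also swallow the weights. Definition~\ref{def:multiplicity-decoration} reads $n_p$ off the edge at $a$'s end of the path, which in your graph is an internal edge you give no weight. So an arc weight such as the $2$ of Figure~\ref{fig:oc-two-faces}(i) never reaches $\Lambda_{a,c}$.

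The repair is small. For observable $t$, drop $\mathrm{pre}_t$ and $\mathrm{post}_t$ and attach each arc directly to $t$, typed by its place and weighted by the arc. Nothing is lost, because Definition~\ref{def:contexts-general} already has $\mathcal{B}_a$ and $\mathcal{F}_a$ join the edges of $a$ as an AND mediator does. Make each silent transition a single AND node whose edges are typed by their adjacent places. Every path from $a$ then starts with an arc of $F$ at $a$'s end, and the counts read the arc weights. Types can then mix along a path only through a chain of silent transitions. Type-consistency excludes that, and it is precisely the delta the paper's proof names. The paper's instance map leaves these edges implicit (``the flow $F$, threaded through these mediators''), so this repair is also what makes it precise. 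Separately, your remark that transitions sharing a label give equal generators is overstated. Only transitions sharing a label and a context do, which is what collapses duplicates in the set $\Sigma$. Distinct firing modes give distinct generators (Remark~\ref{rem:mode-refinement}), so $\Sigma$ is not independent of how modes are split. The remark is not needed for the theorem and should go.
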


\begin{proof}
  The schema of Section~\ref{sec:notation-by-notation} applies at the instance map above,
  finiteness of $P$ and $T$ giving the conditions of Definition~\ref{def:lm-graph} and finite
  reach families (Definition~\ref{def:contexts-general}). The delta is typing across the silent fabric. The
  object-centric case runs over $\mathcal{O}$ with each edge typed by $\mathrm{type}$, and the
  type of each $\tau$-mediated thread is well defined precisely because $N$ is type-consistent
  (Definition~\ref{def:type-consistent}, Remark~\ref{rem:ocpn-silent-typing}).
\end{proof}

\begin{remark}[Relation to prior categorical presentations]\label{rem:smc-positioning}
  The result that Petri nets present free symmetric monoidal categories (SMCs) is due to
  \citet{meseguerPetriNetsAre1990}. The open-net cospan
  framework appears in \citet{baezOpenPetriNets2020}.
  Theorem~\ref{thm:petri-to-cospan-signature} refines these by constructing
  explicit context-indexed generator cospans. Each transition--context pair $(t,c)$
  yields a decorated cospan whose boundary data is read directly from the local
  incidence structure, equipping the SMC presentation with the generator-level
  information required by the cospan-algebra framework of
  Section~\ref{sec:hypergraphs}.
  In these terms the refinement makes the Meseguer and Montanari SMC
  presentation into an explicit generator-level transformation, positioned
  alongside the connector-algebra tradition in the related work of
  Section~\ref{sec:related-work}.
\end{remark}

\begin{corollary}[Untyped Petri nets as a special case]\label{cor:petri-untyped}
  Setting $\mathcal{O}=\{\ast\}$ and $\mathrm{type}(p)=\ast$ for all $p\in P$ in
  Theorem~\ref{thm:petri-to-cospan-signature} recovers the classical Petri-net
  presentation (Corollary~\ref{cor:untyped-general}), in which transitions induce generators,
  every typed wire carries the trivial type, and composites are the morphisms of the free
  symmetric monoidal category $\mathbf{F}(\Sigma)$ (Definition~\ref{def:free-category}), joined
  port to port along matching typed wires.
\end{corollary}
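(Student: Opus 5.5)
The plan is to specialise the proof of Theorem~\ref{thm:petri-to-cospan-signature} at $\mathcal{O}=\{\ast\}$ and read off what each ingredient of the construction becomes. I would first check that the untyped net is a legitimate instance of the construction. With every place typed $\ast$, type-consistency (Definition~\ref{def:type-consistent}) holds vacuously, since no silent chain can join places of different concrete type. So the hypothesis that Theorem~\ref{thm:petri-to-cospan-signature} relies on for typing across the silent fabric is automatic. The instance map of the table is then unchanged: observable transitions become activities, pre-set joins, post-set splits and silent transitions become AND mediators, and places become XOR mediators. The result is an AND/OR graph over $\{\ast\}$ in which every edge carries $\mathrm{type}=\ast$. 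Corollary~\ref{cor:untyped-general} applied to this instance already states that all boundary ports carry the trivial type, and I would cite it for that part.

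The second step is to trace the walk of Definition~\ref{def:contexts-general} in the untyped graph. For every path $\pi$ the type set $w_\pi$ is $\{\ast\}$, so every typed wire has the form $(x,\{\ast\},y)$. A wire is therefore determined by its two labelled endpoints, possibly with $\bot$ or $\top$. Each transition--context pair yields exactly one generator $g_{a,c}$, a box labelled $\ell(a)$ whose legs are the observable predecessor and successor transitions reached through places and silent transitions. This is the sense in which ``transitions induce generators'': every observable transition contributes a generator for each of its contexts, and silent transitions contribute none. The multiplicity decoration is untouched, because arc weights still give the leg counts of Definition~\ref{def:multiplicity-decoration}.

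The last step identifies the composites. Definition~\ref{def:free-category} applies verbatim: objects are lists of trivially typed wires, generating morphisms are count assignments, and ports join an output to an input on the same typed wire. The remark on typed ports says that fusion requires equal types, and here that condition is always true. Boundary matching therefore reduces to equality of the wire triples, which is the untyped port-to-port condition the corollary describes.

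The main obstacle I expect is making precise the claim that this ``recovers the classical presentation''. The paper gives no formal definition of the classical presentation to compare against, so I would read the claim as agreement with the Meseguer--Montanari picture of Remark~\ref{rem:smc-positioning}, in which transitions generate and places become the objects. I would note that the context refinement differs from that picture only in splitting a transition by its context, not in the untyped structure itself. Rather than prove an isomorphism of categories, I would state the result at the level the corollary asserts and leave it there.
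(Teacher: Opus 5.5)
Your proposal is correct and matches the paper. The paper treats this corollary as the direct specialisation of Theorem~\ref{thm:petri-to-cospan-signature} at $\mathcal{O}=\{\ast\}$ via Corollary~\ref{cor:untyped-general}, and your steps are that specialisation: type-consistency (Definition~\ref{def:type-consistent}) holds trivially, every wire becomes $(x,\{\ast\},y)$, and Definition~\ref{def:free-category} applies verbatim with the type-matching condition vacuous.

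One aside is overstated. The objects of $\mathbf{F}(\Sigma)$ are activity-to-activity wires, not places, because places and silent transitions are mediators that vanish (Remark~\ref{rem:ab-place-vs-tau}). So the construction departs from the Meseguer--Montanari picture by more than context-splitting. This does not affect your argument, since you rightly prove only what the corollary itself asserts.
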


\begin{remark}[Silent transitions and type consistency]\label{rem:ocpn-silent-typing}
  The object-centric Petri-net literature attaches no typing of its own to a
  silent transition. None is needed here, because a silent transition reads its
  object types from the colours of its adjacent places,
  exactly as an observable transition does. Since silent transitions are
  transparent mediators (instance map above), a $\tau$-mediated typed wire
  $(x,w,y)$ carries the set $w$ of types on the thread it traces through
  the silent fabric, and type consistency (Definition~\ref{def:type-consistent})
  is precisely the condition that makes $w$ a single type. Under it the
  harvested boundary stays in bijection with the typed dependency arcs
  $D\subseteq T\times\mathcal{O}\times T$ of the object-centric causal net
  (Section~\ref{sec:causal-nets}). Transparent crossing introduces no new boundary
  types, and in particular no spurious duplication of a typed wire across the whole of
  $\mathcal{O}$.

  Dropping type consistency readmits exactly the wild behaviour it is there to
  tame. A type-converting silent transition would make a single thread enter as
  one concrete type and leave as another. The walk records both types in the set $w$
  of the typed wire through it, and the analysis reports a wire whose path carries more than one
  type (Section~\ref{subsec:analysis}). The construction therefore exposes the conversion
  instead of hiding it behind an artificial typed boundary. The
  only escape is the deliberately lossy projection to $\mathcal{O}=\{\ast\}$
  (Corollary~\ref{cor:petri-untyped}), which recovers the connectivity of $D$ but
  discards all typing. This is the Petri-net counterpart of the routing-only
  gateway discipline of Remark~\ref{rem:bpmn-typed-gateways}.
\end{remark}

\begin{remark}[Silent routing leaves the signature fixed]\label{rem:ab-place-vs-tau}
The sequence $a\,;\,b$ presented as one place $a\to p\to b$, and the same sequence with $p$ split by a silent transition into $a\to p_1$, $\tau$, $p_2\to b$, yield the identical signature. In both presentations activity $a$ has successor $b$ and activity $b$ has predecessor $a$. The silent transition passes transparently and contributes no generator, so the split place introduces no structural difference. The same argument absorbs an implicit place, a place duplicating a causal constraint that the net already carries. It contributes an endpoint the port set already holds, and ports are identified by producing activity, object type, and consuming activity, so the duplicate leaves the generator unchanged. This is the invariance the two Petri nets of Figure~\ref{fig:routing-collapse} exhibit. Note the scope. A redundant place carrying another object type is not absorbed, because it contributes a genuinely new typed port.
\end{remark}

\begin{remark}[Firing modes]\label{rem:mode-refinement}
  A single transition fires on its whole pre-set and yields its whole post-set, so it cannot itself
  choose among exclusive typed inputs or outputs. Such a choice is encoded the standard Petri-net
  way, by several transitions sharing one activity label, one per admissible firing mode, which the
  labelling $\ell$ re-identifies so that a downstream activity sees one labelled port rather than one
  per mode. This is the Petri-net analogue of the object-centric causal net's binding sets
  (Section~\ref{sec:causal-nets}). The same activity acquires one generator per mode, a larger
  signature that induces the same language (Section~\ref{sec:conversion-framework}).
\end{remark}

\begin{figure*}[tp]
  \centering
  \resizebox{\textwidth}{!}{\input{tikz/running_example/re_pn_example}\unskip}
  \caption{A small object-centric Petri net. An order (blue) is registered by $b$,
    optionally after a check $a$ that may be skipped through the silent transition
    $\tau$ (drawn filled). The registered order and an item (red) are then shipped
    together by $s$.}
  \label{fig:re-ocpn}
\end{figure*}

\begin{figure*}[tp]
  \centering
  \input{tikz/running_example/re_pn_cospans}\unskip
  \caption{The generators $g_s$, $g_{b,\mathrm{chk}}$ and $g_{b,\mathrm{skip}}$ of Fig.~\ref{fig:re-ocpn}, each as a decorated cospan
    (left, boundary $\to$ apex $\leftarrow$ boundary) and as the string diagram it denotes (right).
    $g_s$ is the object-centric sync. Two object types meet on the left boundary
    and leave on the right, joined by the apex hyperedge $s$. The silent $\tau$
    contributes no generator and gives $b$ two contexts, $g_{b,\mathrm{chk}}$
    (predecessor $a$) and $g_{b,\mathrm{skip}}$ (the open end $\bot$, traced
    through $\tau$), differing only in their left boundary port.}
  \label{fig:re-ocpn-cospans}
\end{figure*}

\begin{exmp}[A small OCPN and its generator cospans]
  \label{ex:running-pn-cospan}
  Take the object-centric Petri net of Fig.~\ref{fig:re-ocpn}, over the object
  types $\mathrm{order}$ and $\mathrm{item}$. Its observable transitions are
  $a,b,s$, with one silent transition $\tau$. The boundary nodes $\bot$ and $\top$ are not
  transitions of the net. They mark where objects enter and leave, and the walk ends there at the
  open ends of Definition~\ref{def:contexts-general}. Instantiating
  Theorem~\ref{thm:petri-to-cospan-signature} reads a generator cospan off
  each transition--context pair (Fig.~\ref{fig:re-ocpn-cospans}). Every edge has weight $1$,
  so each constraint system $\Lambda_{a,c}$ fixes $n_p=1$ on every leg and each wire carries a
  single object. Write $(x,\mathrm{o},y)$ for $(x,\{\mathrm{o}\},y)$.

  \emph{The check $a$.} The check has one observable predecessor route and one
  observable successor, so it has a single context and a single generator,
  \[
    g_a = \Bigl(\{(\bot,\mathrm{o},a)\}
      \xrightarrow{\iota} A \xleftarrow{\iota}
      \{(a,\mathrm{o},b)\},\; d_a\Bigr),
  \]
  with one order wire on each boundary.

  \emph{The sync $s$.} The transition $s$ consumes the registered order together
  with the item and emits both onward, so its generator
  \begin{gather*}
    g_s = \Bigl(\{(b,\mathrm{o},s),(\bot,\mathrm{it},s)\}
      \xrightarrow{\iota} A \xleftarrow{\iota}\twocolbreak 
      \{(s,\mathrm{o},\top),(s,\mathrm{it},\top)\},\; d_s\Bigr)
  \end{gather*}
  carries two typed wires on each boundary and the single apex hyperedge labelled
  $s$.

  \emph{The silent skip.} The check $a$ may be skipped through $\tau$. A silent
  transition is a transparent mediator (Remark~\ref{rem:ocpn-silent-typing}) and
  produces no generator. It gives $b$ a second context instead. Tracing the order
  thread back from $b$ reaches either $a$ (the checked route) or, through $\tau$,
  directly the open end $\bot$ (the skip route), so $b$ has two generator cospans,
  \begin{gather*}
    g_{b,\mathrm{chk}} : \{(a,\mathrm{o},b)\}\to\{(b,\mathrm{o},s)\},\qquad\twocolbreak
    g_{b,\mathrm{skip}} : \{(\bot,\mathrm{o},b)\}\to\{(b,\mathrm{o},s)\},
  \end{gather*}
  differing only in their left boundary port. This is the Petri-net form of an
  exclusive choice. The two contexts give one generator per route, re-identified at the shared
  label $b$ by the construction above.
\end{exmp}

The example shows that the split place and its silent $\tau$
introduce no structural difference in the signature, as
Remark~\ref{rem:ab-place-vs-tau} records and the internal-routing quotient of
Corollary~\ref{cor:internal-routing-quotient} makes precise in the conversion
certificate.

\begin{figure*}[tp]
  \centering
  \input{tikz/running_example/re_pn_slice}\unskip
  \caption{Markings as cuts. A vertical cut through the connected string diagram of
    the net of Fig.~\ref{fig:re-ocpn} fixes a marking. Each wire it crosses is one
    token of that wire's type. The cuts $M_1$ (after $a$) and $M_2$ (after $b$) each
    hold one order and one item token, the contents of the intervening places.
    The diagram is closed by a start generator $\gamma_1$ and an end generator $\gamma_2$, a
    choice of the analysis (Section~\ref{subsec:analysis}), and sweeping the cut from $\gamma_1$
    to $\gamma_2$ replays the net.}
  \label{fig:re-pn-slice}
\end{figure*}

\begin{remark}[Markings as cuts]\label{rem:pn-marking-cut}
  A marking of the net corresponds to a vertical cut through its connected string
  diagram. The wires the cut crosses are exactly the tokens present, one per wire and
  typed by the wire (Fig.~\ref{fig:re-pn-slice}). Sweeping the cut from $\gamma_1$ to
  $\gamma_2$ reads off one execution as a sequence of markings, and the number of
  crossings of a given colour is the token count of that object type, recovering the
  marking semantics of Section~\ref{sec:hypergraphs} on this example. This is the
  Petri-net counterpart of the causal-net slicing of Section~\ref{sec:causal-nets}.
\end{remark}

%% file: tikz/running_example/re_pn_example.tex
\begin{tikzpicture}[petriBase, scale=0.95, inline,
    silentT/.style={transition, fill=black!22}]
  \node[T] (g1) at (0,-1.0625) {$\bot$};
  \node[blueP] (po) at (1.7,0) {};
  \node[T] (a)  at (3.4,0.85) {$a$};
  \node[silentT] (ta) at (3.4,-0.85) {$\tau$};
  \node[blueP] (pc) at (5.1,0) {};
  \node[T] (b)  at (6.7,0) {$b$};
  \node[blueP] (pr) at (8.4,0) {};
  \node[redP] (pi) at (3.4,-2.125) {};
  \node[T] (s) at (10.2,-1.0625) {$s$};
  \node[blueP] (pso) at (12.0,0) {};
  \node[redP] (psi) at (12.0,-2.125) {};
  \node[T] (g2) at (13.8,-1.0625) {$\top$};

  \draw[blueflow] ($(g1.north east)!0.25!(g1.south east)$) -- (po);
  \draw[redflow]  ($(g1.north east)!0.75!(g1.south east)$) -- (pi);
  \draw[blueflow] (po) -- (a);
  \draw[blueflow] (po) -- (ta);
  \draw[blueflow] (a) -- (pc);
  \draw[blueflow] (ta) -- (pc);
  \draw[blueflow] (pc) -- (b);
  \draw[blueflow] (b) -- (pr);
  \draw[blueflow] (pr) -- ($(s.north west)!0.25!(s.south west)$);
  \draw[redflow]  (pi) -- ($(s.north west)!0.75!(s.south west)$);
  \draw[blueflow] ($(s.north east)!0.25!(s.south east)$) -- (pso);
  \draw[redflow]  ($(s.north east)!0.75!(s.south east)$) -- (psi);
  \draw[blueflow] (pso) -- ($(g2.north west)!0.25!(g2.south west)$);
  \draw[redflow]  (psi) -- ($(g2.north west)!0.75!(g2.south west)$);

  \node[blueP, minimum size=5mm, label={[font=\small]right:order}]  at (15.3,-0.55) {};
  \node[redP,  minimum size=5mm, label={[font=\small]right:item}] at (15.3,-1.55) {};
\end{tikzpicture}

%% file: tikz/running_example/re_pn_cospans.tex
\begin{tikzpicture}[inline,
    bspd/.style={spider, fill=bluecol},
    rspd/.style={spider, fill=redcol},
    mbox/.style={draw=black, rounded corners=4pt, fill=white, minimum size=0.75cm, font=\small},
    plabel/.style={font=\tiny}]

  \node[font=\footnotesize] at (2.7,1.4) {decorated cospan};
  \node[font=\footnotesize] at (9.0,1.4) {string diagram};

  \node[font=\small] at (-2.15,0) {$g_s$};
  \node[cospanblue, minimum height=0.9cm] (sL) at (0,0) {};
  \node[bspd] at (0,0.20) {};
  \node[rspd] at (0,-0.20) {};
  \node at (0.9,0) {$\rightarrow$};
  \node[cospangrey, minimum height=1.0cm, minimum width=2.6cm] (sA) at (2.7,0) {};
  \node[mbox, minimum height=0.8cm] (sf) at (2.7,0) {$s$};
  \node[bspd] (sao) at (1.65,0.20) {};
  \node[rspd] (sai) at (1.65,-0.20) {};
  \node[bspd] (sbo) at (3.75,0.20) {};
  \node[rspd] (sbi) at (3.75,-0.20) {};
  \draw[bluewire] (sao) -- (sf.160);
  \draw[redwire]  (sai) -- (sf.200);
  \draw[bluewire] (sf.20)  -- (sbo);
  \draw[redwire]  (sf.-20) -- (sbi);
  \node at (4.5,0) {$\leftarrow$};
  \node[cospanblue, minimum height=0.9cm] (sR) at (5.4,0) {};
  \node[bspd] at (5.4,0.20) {};
  \node[rspd] at (5.4,-0.20) {};
  \node[plabel, anchor=east] at ([shift={(-3pt,0.20cm)}]sL.west) {$(b,\mathrm{o},s)$};
  \node[plabel, anchor=east] at ([shift={(-3pt,-0.20cm)}]sL.west) {$(\bot,\mathrm{it},s)$};
  \node[plabel, anchor=west] at ([shift={(3pt,0.20cm)}]sR.east) {$(s,\mathrm{o},\top)$};
  \node[plabel, anchor=west] at ([shift={(3pt,-0.20cm)}]sR.east) {$(s,\mathrm{it},\top)$};
  \node at (7.4,0) {$\mapsto$};
  \node[morphismblue, minimum height=0.95cm] (sd) at (9.0,0) {$s$};
  \draw[bluewire] ($(sd.south west)!0.70!(sd.north west)$) -- ++(-0.55,0);
  \draw[redwire]  ($(sd.south west)!0.30!(sd.north west)$) -- ++(-0.55,0);
  \draw[bluewire] ($(sd.south east)!0.70!(sd.north east)$) -- ++(0.55,0);
  \draw[redwire]  ($(sd.south east)!0.30!(sd.north east)$) -- ++(0.55,0);

  \begin{scope}[yshift=-1.5cm]
    \node[font=\small] at (-2.15,0) {$g_{b,\mathrm{chk}}$};
    \node[cospanblue, minimum height=0.6cm] (bL) at (0,0) {};
    \node[bspd] at (0,0) {};
    \node at (0.9,0) {$\rightarrow$};
    \node[cospangrey, minimum height=1.0cm, minimum width=2.6cm] (bA) at (2.7,0) {};
    \node[mbox] (bf) at (2.7,0) {$b$};
    \node[bspd] (bao) at (1.65,0) {};
    \node[bspd] (bbo) at (3.75,0) {};
    \draw[bluewire] (bao) -- (bf.west);
    \draw[bluewire] (bf.east) -- (bbo);
    \node at (4.5,0) {$\leftarrow$};
    \node[cospanblue, minimum height=0.6cm] (bR) at (5.4,0) {};
    \node[bspd] at (5.4,0) {};
    \node[plabel, anchor=east] at ([shift={(-3pt,0cm)}]bL.west) {$(a,\mathrm{o},b)$};
    \node[plabel, anchor=west] at ([shift={(3pt,0cm)}]bR.east) {$(b,\mathrm{o},s)$};
    \node at (7.4,0) {$\mapsto$};
    \node[morphismblue, minimum size=0.75cm] (bsd) at (9.0,0) {$b$};
    \draw[bluewire] (bsd.west) -- ++(-0.55,0);
    \draw[bluewire] (bsd.east) -- ++(0.55,0);
  \end{scope}

  \begin{scope}[yshift=-3.0cm]
    \node[font=\small] at (-2.15,0) {$g_{b,\mathrm{skip}}$};
    \node[cospanblue, minimum height=0.6cm] (kL) at (0,0) {};
    \node[bspd] at (0,0) {};
    \node at (0.9,0) {$\rightarrow$};
    \node[cospangrey, minimum height=1.0cm, minimum width=2.6cm] (kA) at (2.7,0) {};
    \node[mbox] (kf) at (2.7,0) {$b$};
    \node[bspd] (kao) at (1.65,0) {};
    \node[bspd] (kbo) at (3.75,0) {};
    \draw[bluewire] (kao) -- (kf.west);
    \draw[bluewire] (kf.east) -- (kbo);
    \node at (4.5,0) {$\leftarrow$};
    \node[cospanblue, minimum height=0.6cm] (kR) at (5.4,0) {};
    \node[bspd] at (5.4,0) {};
    \node[plabel, anchor=east] at ([shift={(-3pt,0cm)}]kL.west) {$(\bot,\mathrm{o},b)$};
    \node[plabel, anchor=west] at ([shift={(3pt,0cm)}]kR.east) {$(b,\mathrm{o},s)$};
    \node at (7.4,0) {$\mapsto$};
    \node[morphismblue, minimum size=0.75cm] (ksd) at (9.0,0) {$b$};
    \draw[bluewire] (ksd.west) -- ++(-0.55,0);
    \draw[bluewire] (ksd.east) -- ++(0.55,0);
  \end{scope}
\end{tikzpicture}

%% file: tikz/running_example/re_pn_slice.tex
\begin{tikzpicture}[inline, scale=1.0,
    cut/.style={dashed, thick, gray!70},
    cutlab/.style={font=\footnotesize, gray!80}]

  \node[morphismblue, minimum height=1.8cm] (g1) at (0,0)   {$\gamma_1$};
  \node[morphismblue] (a)  at (2.2,0.45) {$a$};
  \node[morphismblue] (b)  at (3.9,0.45) {$b$};
  \node[morphismblue, minimum height=1.8cm] (s)  at (6.0,0) {$s$};
  \node[morphismblue, minimum height=1.8cm] (g2) at (7.8,0) {$\gamma_2$};

  \draw[bluewire] ($(g1.north east)!0.25!(g1.south east)$) -- (a.west);
  \draw[bluewire] (a.east) -- (b.west);
  \draw[bluewire] (b.east) -- ($(s.north west)!0.25!(s.south west)$);
  \draw[bluewire] ($(s.north east)!0.25!(s.south east)$) -- ($(g2.north west)!0.25!(g2.south west)$);
  \draw[redwire]  ($(g1.north east)!0.75!(g1.south east)$) -- ($(s.north west)!0.75!(s.south west)$);
  \draw[redwire]  ($(s.north east)!0.75!(s.south east)$) -- ($(g2.north west)!0.75!(g2.south west)$);

  \draw[cut] (3.05,-0.7) -- (3.05,0.7);
  \node[cutlab, above] at (3.05,0.7) {$M_1$};
  \draw[cut] (4.95,-0.7) -- (4.95,0.7);
  \node[cutlab, above] at (4.95,0.7) {$M_2$};

  \node[cutlab, align=center] at (3.9,-1.5)
    {each cut is a marking; one crossed wire $=$ one token of that type\\
     $M_1=M_2=\{1\ \textcolor{bluecol}{\mathrm{order}},\ 1\ \textcolor{redcol}{\mathrm{item}}\}$};
\end{tikzpicture}

%% file: sections/04-causal_nets.tex
\subsection{Causal nets}\label{sec:causal-nets}

Causal nets are a process-mining formalism that refines Petri nets by equipping each activity with explicit input and output bindings. Where a Petri net records token multiplicity along a flow relation, a causal net records which subsets of predecessor activities jointly enable a given activity, and which subsets of successor activities are jointly produced. We translate OC causal nets into decorated cospans by the same cospan-algebra construction as for OCPNs. The only difference is that the interface sets come from bindings rather than from pre/post-set multiplicities. The reading is direct. A binding is an AND mediator, synchronising the typed pairs that fire together. The alternative bindings of an activity exclude one another, so one XOR mediator on each side of the activity chooses among them. We begin with bindings as plain sets of typed pairs. How many objects flow along each is a separate, later layer, the constraint system $\Lambda$ (Remark~\ref{rem:cnet-binding-numbers}).

\begin{definition}[Object-centric causal net]
  \label{def:causal-net}
  The object-centric (OC) causal net used here, or OCCN, is our formulation, drawn from causal
  nets~\cite{vanderaalstCausalNetsModeling2011} and their object-centric
  extension~\cite{lissObjectCentricCausalNets2025}. It is a tuple
  \[
    CN=(T,D,\mathrm{In},\mathrm{Out},\mathcal{O}),
  \]
  where \(T\) is a finite set of activities, \(\mathcal{O}\) is a finite set of object types, and
  \[
    D\subseteq (T\cup\{\bot\})\times\mathcal{O}\times (T\cup\{\top\})
  \]
  is a typed dependency multigraph, where each arc $(s,\omega,t)\in D$ records that $s$ can produce a token of type $\omega\in\mathcal{O}$ consumed by $t$. The symbols $\bot,\top\notin T$ are the open ends of Definition~\ref{def:contexts-general}, where objects enter and leave the net. The maps
  \begin{gather*}
    \mathrm{In}:T\to \mathcal{P}(\mathcal{P}((T\cup\{\bot\})\times\mathcal{O})),\\
    \mathrm{Out}:T\to \mathcal{P}(\mathcal{P}((T\cup\{\top\})\times\mathcal{O}))
  \end{gather*}
  assign to each activity $t$ a set of input bindings and a set of output bindings. Each input binding $X\in \mathrm{In}(t)$ is a finite set of $(s,\omega)$-pairs (typed predecessors jointly enabling $t$), and each output binding $Y\in \mathrm{Out}(t)$ is a finite set of $(t',\omega)$-pairs (typed successors jointly produced when $t$ fires), all consistent with~$D$.
\end{definition}

This construction instantiates the general framework of Section~\ref{sec:general-framework}, and
the mapping is the most direct of the four. An activity is a plain node, and each binding is an
AND mediator. An activity's several input (or output) bindings are exclusive alternatives, so one
XOR mediator on each side of the activity has the AND mediators of its bindings as branches. A
singleton binding is a one-in, one-out AND that passes flow straight through. A pair $(\bot,\omega)$
or $(\top,\omega)$ in a binding gives its AND mediator an edge from a mediator with no in-edges or
to one with no out-edges, so the walk ends there at the open end $\bot$ or $\top$
(Definition~\ref{def:contexts-general}). This determines the instance map.
\medskip
\begin{center}
  \small\begin{tabularx}{\linewidth}{@{}lX@{}}
    \hline
    General framework & Causal net \\
    \hline
    AND/OR graph $(V,E)$ & the activities $T$, with one XOR mediator on each side of an activity and one AND mediator per binding; typed wires read from $D$ \\
    Activities $\mathcal{A}$ & the activities $T$ \\
    AND mediators & one per binding in $\mathrm{In}(t),\mathrm{Out}(t)$, synchronising the typed pairs that fire together (a singleton binding is a pass-through) \\
    XOR mediators & one on each side of each activity $t$, choosing among the bindings of $\mathrm{In}(t)$ or of $\mathrm{Out}(t)$ \\
    \hline
  \end{tabularx}
\end{center}
\medskip
Running the construction of Definition~\ref{def:contexts-general} through this mapping is
immediate, since the bindings are already the contexts. The walk from an activity $t$ steps
through one XOR and one AND mediator on each side and reaches its neighbours at once, so its
backward and forward families are the input and output bindings themselves, $\mathrm{In}(t)$ and
$\mathrm{Out}(t)$. Every pair of an input binding with an output binding is a context,
\[
  \mathrm{Contexts}_{CN}(t):=\mathrm{In}(t)\times \mathrm{Out}(t),
\]
and each context $c=(P,S)$ yields one generator with typed wires
$L_c=\{(s,\{\omega\},t):(s,\omega)\in P\}$ and $R_c=\{(t,\{\omega\},t'):(t',\omega)\in S\}$.
Counts are carried separately by the constraint system $\Lambda$
(Remark~\ref{rem:cnet-binding-numbers}), so a split of one type across two output legs, as at the
review $r$ of Example~\ref{ex:running-cnet-cospan} below, is a context like any other. Generators
sharing a typed wire compose along it in $\mathbf{F}(\Sigma)$ (Definition~\ref{def:free-category}).
Start and end generators attached at $\bot$ and $\top$ belong to $\Sigma_\gamma$, a choice of the
analysis (Section~\ref{subsec:analysis}).

\begin{theorem}[Canonical cospan-algebra presentation of OC causal nets]
  \label{thm:causal-to-cospan-signature}
  Let $CN = (T, D, \mathrm{In}, \mathrm{Out}, \mathcal{O})$ be an OC causal net. Then $CN$ canonically determines a cospan-algebra signature $\Sigma$, and hence the free symmetric monoidal category $\mathbf{F}(\Sigma)$.
\end{theorem}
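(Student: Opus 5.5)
The proof follows the schema fixed at the start of Section~\ref{sec:notation-by-notation}, and records only the causal-net delta. First I check that the instance map in the table above produces an AND/OR graph in the sense of Definition~\ref{def:lm-graph}. The activities are $T$, labelled by the identity. For each $t$ there is one XOR mediator on the input side and one on the output side. For each binding $X\in\mathrm{In}(t)$ or $Y\in\mathrm{Out}(t)$ there is one AND mediator. Each pair $(s,\omega)$ in a binding becomes an edge of type $\omega$ from $s$ into that AND mediator (or out to $t'$). A pair with $\bot$ or $\top$ becomes an edge from a source mediator with no in-edges, or to a sink mediator with no out-edges. Weights are $1$, or $\mathrm{var}$ with $\rho(t)$ carrying the cardinality intervals and shared keys (Remark~\ref{rem:cnet-binding-numbers}). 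Finiteness of $T$, $\mathcal{O}$ and of each binding, which holds because bindings are finite subsets of a finite set, makes the graph finite. Consistency with $D$ ensures every edge has a well-defined type, with no conversion issue as in the Petri case.

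Second, and this is the real delta, I compute the reach families and show that they are exactly the bindings. Take the forward walk from $t$. Its single out-edge meets the output XOR mediator, which forms the union over its branches, the binding AND mediators. Each AND mediator forms unions of the families of its neighbours. Each neighbour is either an activity $t'$, which returns $\{\{(t,\{\omega\},t')\}\}$, or the sink, which returns $\{\{(t,\{\omega\},\top)\}\}$. An AND over singletons therefore yields the single set $R_Y$. The XOR then yields $\{R_Y : Y\in\mathrm{Out}(t)\}$, and the outer AND join at $t$ over its one out-edge leaves this unchanged. The cycle clause never fires, because paths have length at most three before reaching an activity or an open end. The backward case is symmetric, so $\mathrm{Contexts}(t)=\mathrm{In}(t)\times\mathrm{Out}(t)$, up to the bijection $(s,\omega)\mapsto(s,\{\omega\},t)$.

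The main point to watch is whether this bijection is injective. Two distinct bindings could collapse to the same wire set only if they were equal, because the map on pairs is injective. A binding that is empty would give $\mathcal{F}_t=\{\emptyset\}$, which is consistent with the definition. Definition~\ref{def:generator-cospan-general} then yields one generator per context, each carrying $\Lambda_{t,c}$. The signature is therefore a finite set determined by $CN$ alone, with no arbitrary choices. Definition~\ref{def:free-category} gives $\mathbf{F}(\Sigma)$, and Theorem~\ref{thm:canonical-presentation} applies to it. I expect the only delicate step to be matching the multiplicity markers to $\rho$ faithfully. I would handle this by deferring it to the constraint layer, so that the structural part of the signature is independent of it.
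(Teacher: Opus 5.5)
Your route is the paper's. You apply the common schema at the causal-net instance map and observe that the walk is immediate, so the reach families are the bindings and $\mathrm{Contexts}(t)=\mathrm{In}(t)\times\mathrm{Out}(t)$, which is finite. The trouble is that the explicit graph you write down does not support that computation.

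You turn each pair $(s,\omega)$ of an input binding $X\in\mathrm{In}(t)$ into an edge from the activity $s$ into $\mathrm{AND}_X$. You turn each pair $(t',\omega)$ of an output binding $Y$ into an edge from $\mathrm{AND}_Y$ to $t'$. Then every activity $t$ with a successor has more than its edge into its own output XOR. It also has an out-edge into the input-binding AND mediator of each successor binding that mentions $t$, and dually extra in-edges from its predecessors' output-binding AND mediators. So ``its single out-edge meets the output XOR mediator'' is false in your graph.

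Definition~\ref{def:contexts-general} joins \emph{all} out-edges of $t$ as an AND does. Each extra edge continues through the input XOR of a successor $t''$ to $t''$, so it adds a wire from $t$ to $t''$ to every member of $\mathcal{F}_t$. You get $\mathcal{F}_t=\{R_Y\cup W_t : Y\in\mathrm{Out}(t)\}$, where $W_t$ is independent of $Y$ and nonempty as soon as $t$ has a successor. The exclusive choice among output bindings therefore collapses, and the contexts are not $\mathrm{In}(t)\times\mathrm{Out}(t)$.

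The repair has two parts.
\begin{itemize}
\item Let each activity touch only its own two XOR mediators.
\item Join the binding layers to each other by an edge $\mathrm{AND}_Y\to\mathrm{AND}_X$ whenever $(t,\omega)\in Y\in\mathrm{Out}(s)$ and $(s,\omega)\in X\in\mathrm{In}(t)$.
\end{itemize}
The forward walk from $s$ then passes the output XOR, $\mathrm{AND}_Y$, $\mathrm{AND}_X$ and the input XOR of $t$ before stopping at $t$. That is four mediators rather than two, but $t$ is still the first activity met. The XOR and AND clauses now return exactly $\{R_Y : Y\in\mathrm{Out}(s)\}$, and dually $\{L_X : X\in\mathrm{In}(t)\}$.

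This repair relies on bindings and $D$ determining each other: every arc of $D$ must occur in some output binding of its source and some input binding of its target. Without that, a pair's wire is dropped, or it is replaced by an open end when a mediator is left with no out-edges. The paper's proof leaves the graph implicit (``one XOR and one AND mediator on each side''), so the issue surfaces only because you made the construction explicit. An explicit version has to be of this kind.

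Two smaller claims need the same care.
\begin{itemize}
\item An empty binding does not give $\{\emptyset\}$. Its AND mediator has no out-edges, and the ``no out-edges'' clause precedes the AND clause, so it returns the open-end wire $(t,w_\pi,\top)$.
\item Writing each wire as $(t,\{\omega\},t')$ presumes that the edges through the XOR and AND mediators add no type of their own to $w_\pi$. Definition~\ref{def:lm-graph} gives every edge exactly one type, so this fails when a binding mixes types. Consistency with $D$ types only the per-pair edges, not these mediator edges; the paper takes this for granted as well.
\end{itemize}
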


\begin{proof}
  The schema of Section~\ref{sec:notation-by-notation} applies at the instance map above,
  finiteness of $T$ and of all binding sets giving the conditions of
  Definition~\ref{def:lm-graph}, with each edge typed by the arc types in~$D$. The delta is that
  the walk is immediate. It steps through one XOR and one AND mediator on each side of an activity and
  reaches its neighbours at once, returning $\mathrm{In}(t)$ and $\mathrm{Out}(t)$. Every pair
  of them is a context, and
  $\mathrm{Contexts}_{CN}(t)=\mathrm{In}(t)\times \mathrm{Out}(t)$ is finite because $T$ and
  the binding sets are finite.
\end{proof}

\begin{figure*}[tp]
  \centering
  \input{tikz/running_example/re_cn_example}\unskip
  \caption{An object-centric causal net over two object types, order (blue) and item (red).
    Every wire carries objects of one type, and every activity binds both types on each
    side. The solid connector joins the markers of a single binding, whose objects are
    consumed and produced together. On a wire a $\square$ marker is a batch of objects of
    that type and a $\circ$ marker is exactly one. The two bounds on the wires from the
    send $s$ to the review $r$ are declared per leg and independently of each other,
    $[1,5]$ orders and $[2,4]$ items. The review $r$ splits each type by a key of its own,
    $\rho_{o}$ over orders and $\rho_{it}$ over items. Each order and each item, taken singly,
    makes an exclusive choice between the legs that share a key, so each object leaves on one of
    them and the two legs hold the batch between them. The outcome $i$ takes one order and one
    item, and the outcome $n$ takes the remainder of each type. The boundary nodes $\bot$ and
    $\top$ are not activities of the net. They are the open ends where objects enter and leave
    (Definition~\ref{def:causal-net}).}
  \label{fig:re-occn}
\end{figure*}

\begin{figure*}[tp]
  \centering
  \input{tikz/running_example/re_cn_cospans}\unskip
  \caption{The generators $g_r$, $g_i$ and $g_n$ of Fig.~\ref{fig:re-occn}, each shown as a decorated cospan (left)
    and the string diagram it denotes (right). Order wires are blue and item wires red, and the
    labels beside each boundary name its typed ports in the order drawn. The outcome
    generators are $g_i$, one investigated order and one item, and $g_n$, the remainder of
    each type. Each generator has a constraint system of its own, set out below its row and
    flagged on the string diagram by the $\Lambda$ marker spanning both boundaries. A
    system's independent line records the bounds that generator declares on its own legs,
    and its coupling line relates its two boundaries. Composition identifies the variables of
    shared wires and conjoins the systems, so three systems that each hold on their own need not
    hold together.}
  \label{fig:re-occn-cospans}
\end{figure*}

\begin{exmp}[A small OCCN and its generator cospans]
  \label{ex:running-cnet-cospan}
  Consider the OC causal net of Fig.~\ref{fig:re-occn}, over two object types,
  $\mathrm{order}$ and $\mathrm{item}$, each key-distributed at the review $r$. It has no places. Write $(x,\mathrm{o},y)$ for $(x,\{\mathrm{o}\},y)$. Each
  activity carries its bindings, which are already its contexts:
  \begin{gather*}
    \mathrm{Out}(s)=\{\{(r,\mathrm{o}),(r,\mathrm{it})\}\},\qquad\twocolbreak
    \mathrm{In}(r)=\{\{(s,\mathrm{o}),(s,\mathrm{it})\}\},
  \end{gather*}
  \begin{gather*}
    \mathrm{Out}(r)=\{\{(i,\mathrm{o}),(i,\mathrm{it}),\twocolbreak(n,\mathrm{o}),(n,\mathrm{it})\}\},\\
    \mathrm{In}(i)=\mathrm{In}(n)=\{\{(r,\mathrm{o}),(r,\mathrm{it})\}\}.
  \end{gather*}
  Activity $r$ has a single output binding $\{(i,\mathrm{o}),(i,\mathrm{it}),(n,\mathrm{o}),(n,\mathrm{it})\}$. It
  emits both legs on every firing and is a key-distribution split, not an
  exclusive choice. Reading a generator off each binding
  (Fig.~\ref{fig:re-occn-cospans}) gives, for $r$,
  \begin{gather*}
    g_r = \Bigl(\{(s,\mathrm{o},r),(s,\mathrm{it},r)\}
      \xrightarrow{\iota} A \xleftarrow{\iota}\twocolbreak 
      \{(r,\mathrm{o},i),(r,\mathrm{it},i),(r,\mathrm{o},n),(r,\mathrm{it},n)\},\; d_r\Bigr),
  \end{gather*}
  whose decoration carries the constraint system $\Lambda$
  (Remark~\ref{rem:cnet-binding-numbers}). Read off $g_r$'s boundary, the markings say what each
  wire carries in one firing of $r$. The input wire $(s,\mathrm{o},r)$ takes a batch of one to
  five orders ($1\le n_{(s,\mathrm{o},r)}\le 5$), and the investigated leg $(r,\mathrm{o},i)$
  holds exactly one ($n_{(r,\mathrm{o},i)}=1$). The two output legs hold the whole batch between
  them, tied by a within-type key $\rho_{o}$ that partitions it,
  $n_{(r,\mathrm{o},i)}+n_{(r,\mathrm{o},n)}=n_{(s,\mathrm{o},r)}$. Hence one firing of $r$
  receives a batch and splits it, one order to $i$ and the rest to $n$, and a batch of more than five
  forces $r$ to fire again. The item type is split in the same way by its own within-type key $\rho_{it}$, one item
  investigated on $i$ and the rest on $n$, so both types follow the identical key-distribution
  pattern under bounds of their own.
\end{exmp}

\begin{remark}[Binding numbers and keys as the constraint system]
  \label{rem:cnet-binding-numbers}
  Definition~\ref{def:causal-net} records which typed pairs bind, but not
  \emph{how many} objects flow along each. The full OC causal-net
  formalism~\cite{lissObjectCentricCausalNets2025} additionally equips every
  binding pair with a cardinality $[c_{\min},c_{\max}]$ and may correlate
  pairs through a shared object key $\rho$. Both enter the constraint system of
  Definition~\ref{def:multiplicity-decoration}. The binding cardinality is a
  per-leg interval, an inequality of $\rho(t)$ on a variable leg, and the shared key is an equality of $\rho(t)$ at its activity $t$ (Definition~\ref{def:lm-graph})
  among the legs of one generator, a partition of the input wire across the output wires it feeds.
  The routing
  structure, which generators exist, is untouched. The numbers decorate those
  generators, and an unannotated binding defaults to $n_p=1$, the unannotated sense of the
  generators of Example~\ref{ex:running-cnet-cospan}.

  On the small example above this attaches as follows. The send activity $s$ ships
  a batch of orders and a batch of items to $r$ in one firing, so each wire is bounded on its own,
  $1\le n_{(s,\mathrm{o},r)}\le 5$ for orders and $2\le n_{(s,\mathrm{it},r)}\le 4$ for items. The receive activity $r$
  then distributes each type between the two outcomes by its own within-type key. The order key
  $\rho_{o}$ partitions the orders across the investigated and not-investigated legs,
  $n_{(r,\mathrm{o},i)}+n_{(r,\mathrm{o},n)}-n_{(s,\mathrm{o},r)}=0$, with exactly one investigated
  per firing, $n_{(r,\mathrm{o},i)}=1$. The item key $\rho_{it}$ partitions the items the
  same way, $n_{(r,\mathrm{it},i)}+n_{(r,\mathrm{it},n)}-n_{(s,\mathrm{it},r)}=0$ with
  $n_{(r,\mathrm{it},i)}=1$, so the two key constraints sit side by side in $\Lambda$ and make the
  distribution explicit. Hence the number of investigations equals the number of $r$ firings, and a
  batch above either bound forces $r$ to fire again.

  Generators sharing a wire identify its leg variable and conjoin their
  constraints, and a closed composite exists if and only if the glued constraint system is
  solvable over $\mathbb{N}$. The symbolic signature carrying
  $\Lambda$ is bound-independent and is the form compared across notations.
\end{remark}

\begin{remark}[Causal net states]
  The state of a causal net mid-execution is the causal-net version of a Petri-net
  marking as a cut (Remark~\ref{rem:pn-marking-cut}). A cut through the string diagram
  carries the active wire ports, the bindings in progress rather than tokens in places,
  and sweeping the cut from boundary to boundary reads off the state evolution with no
  separate algebraic state machinery. On the OCCN example, after the $r$ generator fires
  the cut carries the four ports leading to $i$ and~$n$ simultaneously, an order and an item wire to each, and firing
  $i$ or $n$ first advances that outcome's two ports to $\top$, two binding sequences of one
  diagram.
\end{remark}

%% file: tikz/running_example/re_cn_example.tex
\begin{tikzpicture}[scale=1.0, inline,
    cnact/.style={rectangle, rounded corners=2pt, draw=black!55, fill=black!5,
                  minimum width=7mm, minimum height=12mm, font=\small},
    bnd/.style={rectangle, rounded corners=2pt, draw=black!55, fill=black!12,
                minimum width=7mm, minimum height=12mm, font=\small},
    omk/.style={circle, draw=#1, fill=#1, minimum size=4.6pt, inner sep=0pt},
    smk/.style={rectangle, sharp corners, draw=#1, fill=#1, minimum size=5.4pt, inner sep=0pt},
    card/.style={font=\small, inner sep=0.5pt},
    klab/.style={font=\tiny, black!70, inner sep=0.5pt},
    andconn/.style={black!75, line width=0.7pt},
    keyconn/.style={black!75, line width=0.7pt}]   %
  \def\dy{1.8mm}
  \def\rgap{4pt}
  \node[bnd]   (g1) at (0,0)      {$\bot$};
  \node[cnact] (s)  at (2.5,0)    {$s$};
  \node[cnact] (r)  at (5.0,0)    {$r$};
  \node[cnact] (i)  at (8.2,1.30) {$i$};
  \node[cnact] (n)  at (8.2,-1.30){$n$};
  \node[bnd]   (g2) at (11.4,0)   {$\top$};

  \draw[blueflow] ([yshift=\dy]g1.east) --
      node[smk=bluecol, pos=0.25] (a1){} node[smk=bluecol, pos=0.75] (a2){} ([yshift=\dy]s.west);
  \draw[redflow]  ([yshift=-\dy]g1.east) --
      node[smk=redcol, pos=0.25] (a3){} node[smk=redcol, pos=0.75] (a4){} ([yshift=-\dy]s.west);
  \draw[andconn] (a1)--(a3);   \draw[andconn] (a2)--(a4);
  \draw[blueflow] ([yshift=\dy]s.east) --
      node[smk=bluecol, pos=0.25] (b1){}
      node[smk=bluecol, pos=0.75, label={[card,bluecol,label distance=2pt]above:$[1,5]$}] (b2){}
      ([yshift=\dy]r.west);
  \draw[redflow]  ([yshift=-\dy]s.east) --
      node[smk=redcol, pos=0.25] (b3){}
      node[smk=redcol, pos=0.75, label={[card,redcol,label distance=2pt]below:$[2,4]$}] (b4){}
      ([yshift=-\dy]r.west);
  \draw[andconn] (b1)--(b3);   \draw[andconn] (b2)--(b4);

  \draw[blueflow] ([yshift=0.54cm]r.east) --
      node[omk=bluecol, pos=0.25] (c1){} node[omk=bluecol, pos=0.75] (d1){} ([yshift=\dy]i.west);   %
  \draw[redflow]  ([yshift=0.18cm]r.east) --
      node[omk=redcol, pos=0.25] (c2){} node[omk=redcol, pos=0.75] (d2){} ([yshift=-\dy]i.west);    %
  \draw[blueflow] ([yshift=-0.18cm]r.east) --
      node[smk=bluecol, pos=0.25] (c3){} node[smk=bluecol, pos=0.75] (e1){} ([yshift=\dy]n.west);   %
  \draw[redflow]  ([yshift=-0.54cm]r.east) --
      node[smk=redcol, pos=0.25] (c4){} node[smk=redcol, pos=0.75] (e2){} ([yshift=-\dy]n.west);    %
  \draw[keyconn] (c1)--(c2)--(c3)--(c4);            %
  \node[klab, anchor=south west] at ([shift={(1pt,\rgap)}]c1) {$\rho_{o}$};  \node[klab, anchor=south west] at ([shift={(1pt,\rgap)}]c3) {$\rho_{o}$};
  \node[klab, anchor=north west] at ([shift={(1pt,-\rgap)}]c2) {$\rho_{it}$}; \node[klab, anchor=north west] at ([shift={(1pt,-\rgap)}]c4) {$\rho_{it}$};
  \draw[andconn] (d1)--(d2);   %
  \draw[andconn] (e1)--(e2);   %

  \draw[blueflow] ([yshift=\dy]i.east) --
      node[omk=bluecol, pos=0.25] (f1){} node[omk=bluecol, pos=0.75] (h1){} ([yshift=0.54cm]g2.west);
  \draw[redflow]  ([yshift=-\dy]i.east) --
      node[omk=redcol, pos=0.25] (f2){} node[omk=redcol, pos=0.75] (h3){} ([yshift=0.18cm]g2.west);
  \draw[andconn] (f1)--(f2);   %
  \draw[blueflow] ([yshift=\dy]n.east) --
      node[smk=bluecol, pos=0.25] (k1){} node[smk=bluecol, pos=0.75] (h5){} ([yshift=-0.18cm]g2.west);
  \draw[redflow]  ([yshift=-\dy]n.east) --
      node[smk=redcol, pos=0.25] (k2){} node[smk=redcol, pos=0.75] (h7){} ([yshift=-0.54cm]g2.west);
  \draw[andconn] (k1)--(k2);   %
  \draw[andconn] (h1)--(h3)--(h5)--(h7);   %
\end{tikzpicture}

%% file: tikz/running_example/re_cn_cospans.tex
\begin{tikzpicture}[inline,
    bspd/.style={spider, fill=bluecol},
    rspd/.style={spider, fill=redcol},
    mbox/.style={draw=black, rounded corners=4pt, fill=white, minimum size=0.75cm, font=\small},
    plabel/.style={font=\tiny},
    cardb/.style={font=\footnotesize, bluecol, inner sep=0.5pt},
    cardr/.style={font=\footnotesize, redcol, inner sep=0.5pt}]

  \node[font=\footnotesize] at (2.7,1.9) {decorated cospan};
  \node[font=\footnotesize] at (9.0,1.9) {string diagram};

  \node[font=\small] at (-2.3,0) {$g_r$};
  \node[cospanblue, minimum height=1.0cm] (rLb) at (0,0) {};
  \node[bspd] (rLo)  at (0,0.22) {};
  \node[rspd] (rLit) at (0,-0.22) {};
  \node at (0.9,0) {$\rightarrow$};
  \node[cospangrey, minimum height=2.0cm, minimum width=2.6cm] (rA) at (2.7,0) {};
  \node[mbox, minimum height=1.0cm] (rf) at (2.7,0) {$r$};
  \node[bspd] (rao) at (1.65,0.22) {};
  \node[rspd] (rat) at (1.65,-0.22) {};
  \node[bspd] (roi) at (3.75,0.72) {};
  \node[bspd] (ron) at (3.75,0.30) {};
  \node[rspd] (rii) at (3.75,-0.30) {};
  \node[rspd] (rin) at (3.75,-0.72) {};
  \draw[bluewire] (rao) -- (rf.170);
  \draw[redwire]  (rat) -- (rf.190);
  \draw[bluewire] (rf.40)  -- (roi);
  \draw[bluewire] (rf.15)  -- (ron);
  \draw[redwire]  (rf.-15) -- (rii);
  \draw[redwire]  (rf.-40) -- (rin);
  \node at (4.5,0) {$\leftarrow$};
  \node[cospanblue, minimum height=1.9cm] (rRb) at (5.4,0) {};
  \node[bspd] (rRoi) at (5.4,0.72) {};
  \node[bspd] (rRon) at (5.4,0.30) {};
  \node[rspd] (rRii) at (5.4,-0.30) {};
  \node[rspd] (rRin) at (5.4,-0.72) {};
  \node[plabel, anchor=east] at ([shift={(-3pt,0.22cm)}]rLb.west) {$(s,\mathrm{o},r)$};
  \node[plabel, anchor=east] at ([shift={(-3pt,-0.22cm)}]rLb.west) {$(s,\mathrm{it},r)$};
  \node[plabel, anchor=west] at ([shift={(3pt,0.72cm)}]rRb.east) {$(r,\mathrm{o},i)$};
  \node[plabel, anchor=west] at ([shift={(3pt,0.30cm)}]rRb.east) {$(r,\mathrm{o},n)$};
  \node[plabel, anchor=west] at ([shift={(3pt,-0.30cm)}]rRb.east) {$(r,\mathrm{it},i)$};
  \node[plabel, anchor=west] at ([shift={(3pt,-0.72cm)}]rRb.east) {$(r,\mathrm{it},n)$};
  \node at (7.4,0) {$\mapsto$};
  \node[morphismblue, minimum height=1.6cm] (rsd) at (9.0,0) {$r$};
  \draw[bluewire] ($(rsd.south west)!0.70!(rsd.north west)$) -- ++(-0.55,0);
  \draw[redwire]  ($(rsd.south west)!0.30!(rsd.north west)$) -- ++(-0.55,0);
  \coordinate (p1) at ($($(rsd.south west)!0.70!(rsd.north west)$)-(0.275,0)$);
  \coordinate (p2) at ($($(rsd.south west)!0.30!(rsd.north west)$)-(0.275,0)$);
  \draw[gray!60, dashed] (p1) -- (p2);
  \foreach \m in {p1,p2}
    \node[circle, draw=gray!65, fill=gray!25, minimum size=3.2pt, inner sep=0] at (\m) {};
  \draw[bluewire] ($(rsd.south east)!0.82!(rsd.north east)$) -- ++(0.55,0);
  \draw[bluewire] ($(rsd.south east)!0.61!(rsd.north east)$) -- ++(0.55,0);
  \draw[redwire]  ($(rsd.south east)!0.39!(rsd.north east)$) -- ++(0.55,0);
  \draw[redwire]  ($(rsd.south east)!0.18!(rsd.north east)$) -- ++(0.55,0);
  \coordinate (m1) at ($($(rsd.south east)!0.82!(rsd.north east)$)+(0.275,0)$);
  \coordinate (m2) at ($($(rsd.south east)!0.61!(rsd.north east)$)+(0.275,0)$);
  \coordinate (m3) at ($($(rsd.south east)!0.39!(rsd.north east)$)+(0.275,0)$);
  \coordinate (m4) at ($($(rsd.south east)!0.18!(rsd.north east)$)+(0.275,0)$);
  \draw[gray!60, dashed] (m1) -- (m4);
  \foreach \m in {m1,m2,m3,m4}
    \node[circle, draw=gray!65, fill=gray!25, minimum size=3.2pt, inner sep=0] at (\m) {};
  \node[gray!75, font=\small] at ($(rsd.north)+(0,0.30)$) {$\Lambda_r$};
  \node[draw=black!45, rounded corners=2pt, fill=black!3, align=left, font=\scriptsize,
        inner sep=4pt, anchor=west] (Lam) at (-0.4,-2.05)
    {\textbf{constraint system} $\Lambda_r$\\[2pt]
     \ \ \emph{independent}\ \ $1\le n_{(s,\mathrm{o},r)}\le 5$,\ \ $2\le n_{(s,\mathrm{it},r)}\le 4$\\[1pt]
     \ \ \emph{coupling}\ \ $n_{(r,\mathrm{o},i)}+n_{(r,\mathrm{o},n)}=n_{(s,\mathrm{o},r)}$,\ \ \
     $n_{(r,\mathrm{it},i)}+n_{(r,\mathrm{it},n)}=n_{(s,\mathrm{it},r)}$};

  \begin{scope}[yshift=-3.8cm]
    \node[font=\small] at (-2.3,0) {$g_i$};
    \node[cospanblue, minimum height=0.9cm] (iL) at (0,0) {};
    \node[bspd] (iLo)  at (0,0.20) {};
    \node[rspd] (iLit) at (0,-0.20) {};
    \node at (0.9,0) {$\rightarrow$};
    \node[cospangrey, minimum height=1.0cm, minimum width=2.6cm] (iA) at (2.7,0) {};
    \node[mbox, minimum height=0.8cm] (iff) at (2.7,0) {$i$};
    \node[bspd] (iao) at (1.65,0.20) {};
    \node[rspd] (iat) at (1.65,-0.20) {};
    \node[bspd] (ibo) at (3.75,0.20) {};
    \node[rspd] (ibt) at (3.75,-0.20) {};
    \draw[bluewire] (iao) -- (iff.160);
    \draw[redwire]  (iat) -- (iff.200);
    \draw[bluewire] (iff.20)  -- (ibo);
    \draw[redwire]  (iff.-20) -- (ibt);
    \node at (4.5,0) {$\leftarrow$};
    \node[cospanblue, minimum height=0.9cm] (iR) at (5.4,0) {};
    \node[bspd] (iRo)  at (5.4,0.20) {};
    \node[rspd] (iRit) at (5.4,-0.20) {};
    \node[plabel, anchor=east] at ([shift={(-3pt,0.20cm)}]iL.west) {$(r,\mathrm{o},i)$};
    \node[plabel, anchor=east] at ([shift={(-3pt,-0.20cm)}]iL.west) {$(r,\mathrm{it},i)$};
    \node[plabel, anchor=west] at ([shift={(3pt,0.20cm)}]iR.east) {$(i,\mathrm{o},\top)$};
    \node[plabel, anchor=west] at ([shift={(3pt,-0.20cm)}]iR.east) {$(i,\mathrm{it},\top)$};
    \node at (7.4,0) {$\mapsto$};
    \node[morphismblue, minimum height=0.95cm] (isd) at (9.0,0) {$i$};
    \draw[bluewire] ($(isd.south west)!0.70!(isd.north west)$) -- ++(-0.55,0);
    \draw[redwire]  ($(isd.south west)!0.30!(isd.north west)$) -- ++(-0.55,0);
    \draw[bluewire] ($(isd.south east)!0.70!(isd.north east)$) -- ++(0.55,0);
    \draw[redwire]  ($(isd.south east)!0.30!(isd.north east)$) -- ++(0.55,0);
    \coordinate (isdp1) at ($($(isd.south west)!0.70!(isd.north west)$)-(0.275,0)$);
    \coordinate (isdp2) at ($($(isd.south west)!0.30!(isd.north west)$)-(0.275,0)$);
    \coordinate (isdm1) at ($($(isd.south east)!0.70!(isd.north east)$)+(0.275,0)$);
    \coordinate (isdm2) at ($($(isd.south east)!0.30!(isd.north east)$)+(0.275,0)$);
    \draw[gray!60, dashed] (isdp1) -- (isdp2);
    \draw[gray!60, dashed] (isdm1) -- (isdm2);
    \foreach \m in {isdp1,isdp2,isdm1,isdm2}
      \node[circle, draw=gray!65, fill=gray!25, minimum size=3.2pt, inner sep=0] at (\m) {};
    \node[gray!75, font=\small] at ($(isd.north)+(0,0.30)$) {$\Lambda_i$};
  \end{scope}

  \begin{scope}[yshift=-7.0cm]
    \node[font=\small] at (-2.3,0) {$g_n$};
    \node[cospanblue, minimum height=0.9cm] (nL) at (0,0) {};
    \node[bspd] (nLo)  at (0,0.20) {};
    \node[rspd] (nLit) at (0,-0.20) {};
    \node at (0.9,0) {$\rightarrow$};
    \node[cospangrey, minimum height=1.0cm, minimum width=2.6cm] (nA) at (2.7,0) {};
    \node[mbox, minimum height=0.8cm] (nf) at (2.7,0) {$n$};
    \node[bspd] (nao) at (1.65,0.20) {};
    \node[rspd] (nat) at (1.65,-0.20) {};
    \node[bspd] (nbo) at (3.75,0.20) {};
    \node[rspd] (nbt) at (3.75,-0.20) {};
    \draw[bluewire] (nao) -- (nf.160);
    \draw[redwire]  (nat) -- (nf.200);
    \draw[bluewire] (nf.20)  -- (nbo);
    \draw[redwire]  (nf.-20) -- (nbt);
    \node at (4.5,0) {$\leftarrow$};
    \node[cospanblue, minimum height=0.9cm] (nR) at (5.4,0) {};
    \node[bspd] (nRo)  at (5.4,0.20) {};
    \node[rspd] (nRit) at (5.4,-0.20) {};
    \node[plabel, anchor=east] at ([shift={(-3pt,0.20cm)}]nL.west) {$(r,\mathrm{o},n)$};
    \node[plabel, anchor=east] at ([shift={(-3pt,-0.20cm)}]nL.west) {$(r,\mathrm{it},n)$};
    \node[plabel, anchor=west] at ([shift={(3pt,0.20cm)}]nR.east) {$(n,\mathrm{o},\top)$};
    \node[plabel, anchor=west] at ([shift={(3pt,-0.20cm)}]nR.east) {$(n,\mathrm{it},\top)$};
    \node at (7.4,0) {$\mapsto$};
    \node[morphismblue, minimum height=0.95cm] (nsd) at (9.0,0) {$n$};
    \draw[bluewire] ($(nsd.south west)!0.70!(nsd.north west)$) -- ++(-0.55,0);
    \draw[redwire]  ($(nsd.south west)!0.30!(nsd.north west)$) -- ++(-0.55,0);
    \draw[bluewire] ($(nsd.south east)!0.70!(nsd.north east)$) -- ++(0.55,0);
    \draw[redwire]  ($(nsd.south east)!0.30!(nsd.north east)$) -- ++(0.55,0);
    \coordinate (nsdp1) at ($($(nsd.south west)!0.70!(nsd.north west)$)-(0.275,0)$);
    \coordinate (nsdp2) at ($($(nsd.south west)!0.30!(nsd.north west)$)-(0.275,0)$);
    \coordinate (nsdm1) at ($($(nsd.south east)!0.70!(nsd.north east)$)+(0.275,0)$);
    \coordinate (nsdm2) at ($($(nsd.south east)!0.30!(nsd.north east)$)+(0.275,0)$);
    \draw[gray!60, dashed] (nsdp1) -- (nsdp2);
    \draw[gray!60, dashed] (nsdm1) -- (nsdm2);
    \foreach \m in {nsdp1,nsdp2,nsdm1,nsdm2}
      \node[circle, draw=gray!65, fill=gray!25, minimum size=3.2pt, inner sep=0] at (\m) {};
    \node[gray!75, font=\small] at ($(nsd.north)+(0,0.30)$) {$\Lambda_n$};
  \end{scope}
  \node[draw=black!45, rounded corners=2pt, fill=black!3, align=left, font=\scriptsize,
        inner sep=4pt, anchor=west] (Lami) at (-0.4,-5.3)
    {\textbf{constraint system} $\Lambda_i$\\[2pt]
     \ \ \emph{independent}\ \ $n_{(r,\mathrm{o},i)}=1$,\ \ $n_{(r,\mathrm{it},i)}=1$\\[1pt]
     \ \ \emph{coupling}\ \ $n_{(i,\mathrm{o},\top)}=n_{(r,\mathrm{o},i)}$,\ \ \ $n_{(i,\mathrm{it},\top)}=n_{(r,\mathrm{it},i)}$};
  \node[draw=black!45, rounded corners=2pt, fill=black!3, align=left, font=\scriptsize,
        inner sep=4pt, anchor=west] (Lamn) at (-0.4,-8.5)
    {\textbf{constraint system} $\Lambda_n$\\[2pt]
     \ \ \emph{independent}\ \ $n_{(r,\mathrm{o},n)}\ge 1$,\ \ $n_{(r,\mathrm{it},n)}\ge 1$\\[1pt]
     \ \ \emph{coupling}\ \ $n_{(n,\mathrm{o},\top)}=n_{(r,\mathrm{o},n)}$,\ \ \ $n_{(n,\mathrm{it},\top)}=n_{(r,\mathrm{it},n)}$};
\end{tikzpicture}

%% file: sections/05-process_trees.tex
\subsection{Process trees}\label{sec:process_trees}

Process trees are a block-structured process-mining formalism widely used in discovery algorithms and model-quality evaluation~\cite{leemansDiscoveringBlockStructuredProcess2013,buijsQualityDimensionsProcess2014}. Each internal node carries one of four control-flow operators (sequence, XOR-choice, AND-parallel, redo) and the compositional structure guarantees soundness of the resulting workflow-net translation. We translate OC process trees (OCPTs) into decorated cospans by reading interface sets directly from the operator structure, following the same cospan-algebra construction as for OCPNs and OC causal nets.

\begin{definition}[Object-centric process tree]
  \label{def:process-tree}
  An object-centric (OC) process tree~\cite{leemansDiscoveringBlockStructuredProcess2013,vandettenDiscoveringCompactLive2024}
  over a finite set of object types $\mathcal{O}$ is a finite rooted tree of the following form.
  \begin{itemize}
    \item Each internal node is a control-flow operator, applied to its child subtrees, one of
      sequence $\to$, exclusive choice $\times$, concurrency $+$, or redo $\circlearrowleft$.
    \item Each leaf is an object-centric activity $(a,\mathrm{rel})$ with an activity name $a$ and
      the object types $\mathrm{rel}\subseteq\mathcal{O}$ it involves.
  \end{itemize}
  The open ends $\bot$ and $\top$ of Definition~\ref{def:contexts-general} bound the root, and the
  analysis may attach a start generator $\gamma_1$ and an end generator $\gamma_2$ there
  (Section~\ref{subsec:analysis}). Each object type runs
  through the operators on its own, independently of the others.
\end{definition}

\begin{figure}[tb]
  \centering
  \input{tikz/running_example/re_pt_tree}\unskip
  \caption{The OCPT example $\to(a,+(b,c),s)$, with tree connectors coloured by object type (order
    blue, item red). A connector carrying both types is two parallel coloured wires, a single type
    one coloured wire.}
  \label{fig:re-ocpt}
\end{figure}

The operators apply per object. $\to(S_1,\ldots,S_n)$ runs its subtrees in order.
$\times(S_1,\ldots,S_n)$ runs exactly one of them. Under $+(S_1,\ldots,S_n)$ they all run
concurrently. Finally, $\circlearrowleft(S,B_1,\ldots,B_m)$ runs $S$ once, then any number of times
takes some redo $B_j$ and runs $S$ again. Each object type follows its own path, so one type may
take a $\times$-branch another does not, and a type outside a leaf's $\mathrm{rel}$ skips that
leaf. Leaf labels may repeat, each occurrence firing independently. We write $T$ for the leaf
activities and $T_\tau\subseteq T$ for the silent leaves. A silent leaf carries a type but no
observable label and models a skip, the XOR alternative $\times(a,\tau)$ to the activity it
bypasses. The induced typed directly-follows arcs are $D_{PT}\subseteq T\times\mathcal{O}\times T$,
with $(x,\omega,t)\in D_{PT}$ when type $\omega$ flows from leaf $x$ to leaf $t$. When a leaf
relates to a single type $\mathrm{type}(t)$, the arcs $D_{PT}$ are forced by the tree. The source
formalism also marks a leaf's types as divergent, convergent or deficient. Those marks are
existential statements over an event log~\cite{vandettenDiscoveringCompactLive2024}, properties of
the flattened object-centric data a tree is discovered from, and the cospan language needs none of
them. A count constraint on a leaf's legs is written in $\Lambda$, as for every notation
(Definition~\ref{def:multiplicity-decoration}).

This construction instantiates the general framework of Section~\ref{sec:general-framework}, and
the two mapping choices are simple. An observable leaf is an activity. Each control-flow operator
becomes a mediator, the choice $\times$ an XOR and the parallel $+$ an AND. Sequence $\to$ and redo
$\circlearrowleft$ need no mediator of their own. They are built from these two, and a
degree-$(1,1)$ mediator, where AND, XOR and a SEQ pass-through all coincide, simply forwards its one
branch. A silent leaf is a transparent mediator and contributes no generator, as for the silent
transitions of Section~\ref{sec:petri-nets}. Compiling the operator tree this way gives an AND/OR
graph, the compiled flow graph $\mathcal{G}_{PT}$ (Definition~\ref{def:pt-flow-graph}), over which
the construction is the walk of Definition~\ref{def:contexts-general}. This is the instance where
the walk is not immediate. Where the object-centric causal net pre-compiles its routing into binding
sets and the walk ends at once, the process tree spreads its routing across operator structure that
the walk must cross. The instance map is as follows.
\medskip
\begin{center}
  \small\begin{tabularx}{\linewidth}{@{}lX@{}}
    \hline
    General framework & Process tree \\
    \hline
    AND/OR graph $(V,E)$ & the compiled flow graph $\mathcal{G}_{PT}$ (Definition~\ref{def:pt-flow-graph}) \\
    Activities $\mathcal{A}$ & observable leaves $T\setminus T_\tau$ \\
    AND mediators & the parallel operator $+$ (split and join) \\
    XOR mediators & the choice operator $\times$, and a loop's redo-or-exit \\
    SEQ pass-throughs & the sequence operator $\to$ and the silent leaves $T_\tau$ (transparent); the degree-$(1,1)$ case where AND, XOR and SEQ coincide \\
    \hline
  \end{tabularx}
\end{center}
\medskip
\begin{definition}[Compiled flow graph]
  \label{def:pt-flow-graph}
  The compiled flow graph $\mathcal{G}_{PT}=\mathcal{G}(\Omega)$, for $\Omega$ the operator tree of $PT$, is built bottom-up from the
  operator tree, the map $\mathcal{G}(\cdot)$ sending each subtree to a directed graph with one
  entry and one exit. It is the AND/OR graph $(V,E)$ of the instance map, not the raw
  parent-to-child tree. The cases are as follows.
  \begin{itemize}
    \item An observable leaf $t\in T\setminus T_\tau$ gives the single activity node $t$.
    \item A silent leaf $\tau\in T_\tau$ gives a degree-$(1,1)$ SEQ pass-through carrying type
      $\mathrm{type}(\tau)$, transparent to the walk.
    \item For $\to(S_1,\ldots,S_n)$, chain the children through $n{-}1$ fresh SEQ pass-throughs,
      $\mathcal{G}(S_1)\to s_1\to\cdots\to s_{n-1}\to\mathcal{G}(S_n)$. The ordering is carried by the
      chain, not by any one node.
    \item For $+(S_1,\ldots,S_n)$, a fresh AND-split $p^+$ fans into each $\mathcal{G}(S_i)$ and a
      fresh AND-join $q^+$ collects them.
    \item For $\times(S_1,\ldots,S_n)$, take a fresh XOR-split $x^\times$ and XOR-join $y^\times$, each
      branch $\mathcal{G}(S_i)$ running between them. The walk keeps the reached sets of the branches separate at $x^\times$.
    \item For $\circlearrowleft(S,B_1,\ldots,B_m)$, take a mandatory $\mathcal{G}(S)$, then an XOR mediator
      that either exits to the next block or takes a back-edge through some $\mathcal{G}(B_j)$ and
      repeats $S$. A repeated pass of the loop is a composite of generators and adds no signature
      element.
  \end{itemize}
\end{definition}

\begin{figure*}[tp]
  \centering
  \input{tikz/running_example/re_pt_flow}\unskip
  \caption{The compiled flow graph $\mathcal{G}_{PT}$ of the example tree $\to(a,+(b,c),s)$
    (Fig.~\ref{fig:re-ocpt}). The $+$ node becomes an AND-split and AND-join ($+$) around the
    two concurrent typed lifecycles, $b$ on the order and $c$ on the item. The sequence links are
    degree-$(1,1)$ pass-throughs and collapse onto the edges.}
  \label{fig:re-pt-flow}
\end{figure*}

A silent leaf is needed exactly where a typed token crosses a span of the operator tree with
no observable activity on it. Where an observable leaf already lies on that span, none is
needed. The engine of Section~\ref{sec:general-framework} then runs unchanged, with one
instance-specific fact. The walk traverses the compiled flow graph $\mathcal{G}_{PT}$ rather
than the raw tree, keeping the branches of each XOR-split separate, crossing silent leaves
transparently, and stopping at the first activity on each side of a $\circlearrowleft$ back-edge,
so no unrolling is needed. Branches that leave the root end at the open ends $\bot$ and $\top$, and
start and end generators there are a choice of the analysis (Section~\ref{subsec:analysis}). Each context $(P,S)$ of a leaf $t$
yields a generator on the typed wires of $D_{PT}$. Each leg carries a count, by default pinned to
$n_p=1$ (one object per firing), and any other constraint on the leaf's own legs enters its
constraint system $\Lambda$ as a per-leg interval or a linear equality or inequality among its counts.
Generators sharing a typed wire compose along it in $\mathbf{F}(\Sigma)$ (Definition~\ref{def:free-category}).

\begin{theorem}[Canonical cospan-algebra presentation of OC process trees]
  \label{thm:ptree-to-cospan-signature}
  Let $PT$ be an OC process tree (Definition~\ref{def:process-tree}). Then $PT$ canonically determines a cospan-algebra signature $\Sigma$ in which each typed wire $(x,w,t)$ carries the set $w$ of object types $\omega$ with $(x,\omega,t)\in D_{PT}$ on its path, and hence the free symmetric monoidal category $\mathbf{F}(\Sigma)$.
\end{theorem}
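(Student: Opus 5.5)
The proof follows the shared schema of Section~\ref{sec:notation-by-notation}, applied at the instance map for process trees, where the compiled flow graph $\mathcal{G}_{PT}$ of Definition~\ref{def:pt-flow-graph} is the AND/OR graph. The first step is to check that $\mathcal{G}_{PT}$ meets Definition~\ref{def:lm-graph}. We argue by structural induction on the operator tree $\Omega$ that each $\mathcal{G}(S)$ is a finite directed graph with one entry and one exit. Leaves give a single activity node or a single degree-$(1,1)$ pass-through. Each operator case adds finitely many fresh mediators and edges to finitely many finite subgraphs. Each mediator receives a kind in $\{\mathrm{AND},\mathrm{XOR}\}$: $+$ gives AND, while $\times$ and the loop's redo-or-exit give XOR. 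Sequence links and silent leaves are degree-$(1,1)$, so their kind may be chosen arbitrarily, since there the AND and XOR cases of Definition~\ref{def:contexts-general} return the same family $R_a(y,\pi y)$ of the unique out-neighbour. Edges are typed per object type, because each type runs through the operators on its own. Weights default to $1$, and $\rho(t)$ records any count constraint on the leaf's legs. Observable leaves are exactly the activities $\mathcal{A}=T\setminus T_\tau$.

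The second step is finiteness of the reach families. The $\circlearrowleft$ back-edge makes $\mathcal{G}_{PT}$ cyclic. However, the walk of Definition~\ref{def:contexts-general} returns $\emptyset$ on revisiting a node of the current path and stops at the first activity met. On a finite graph it therefore explores finitely many simple paths, so $\mathcal{B}_t$ and $\mathcal{F}_t$ are finite. Definitions~\ref{def:generator-cospan-general} and~\ref{def:free-category} then yield $\Sigma$ and $\mathbf{F}(\Sigma)$. Canonicity holds because $\mathcal{G}(\cdot)$ is defined deterministically by the tree, up to the choice of fresh mediator names, and no context is discarded. Fresh names do not matter, since mediators never appear in typed wires. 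Typed wires record only their endpoints, which are leaves read through the labelling or the open ends $\bot,\top$, together with the type set.

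The main obstacle is the extra clause of the statement: the set $w$ on a typed wire $(x,w,t)$ must consist exactly of the types $\omega$ with $(x,\omega,t)\in D_{PT}$ on its path. By Definition~\ref{def:contexts-general}, $w_\pi$ is the set of edge types along the path $\pi$. The argument would show, by induction over the construction of $\mathcal{G}$, that an edge of type $\omega$ lies on a mediator-only path from leaf $x$ to leaf $t$ exactly when type $\omega$ flows from $x$ to $t$ under the operator semantics, that is, exactly when $(x,\omega,t)\in D_{PT}$. The delicate cases are $\times$ and $\circlearrowleft$. For $\times$, the walk keeps XOR branches separate, so types from distinct branches never share a path. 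For $\circlearrowleft$, the walk stops at the first activity on either side of the back-edge, so only directly-following leaves are related. Silent leaves carry their own type and must be checked to introduce no foreign type. I would close this gap by the tree analogue of type consistency (Definition~\ref{def:type-consistent}): each type runs independently through $\mathcal{G}_{PT}$, so any path between leaves is assembled from the per-type flow, and its type set is the union of the $D_{PT}$-types witnessed along it.
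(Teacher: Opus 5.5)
Your first two steps match the paper's proof. The paper's proof consists of the schema at the process-tree instance map, finiteness of $T$ and of $\Omega$ for Definition~\ref{def:lm-graph}, and finite reach families despite the $\circlearrowleft$ back-edge, because a branch stops at any node already on its path. Two of your remarks are correct refinements the paper leaves implicit: the kind of a degree-$(1,1)$ mediator is immaterial, and fresh mediator names never enter a typed wire.

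The gap is your third step. The paper does not derive the type clause from the operator semantics. It builds the clause into the instance map by typing each edge of $\mathcal{G}_{PT}$ by the arc types of $D_{PT}$. Then $w_\pi$, which Definition~\ref{def:contexts-general} defines as the set of types on the edges of $\pi$, is by construction the set of $D_{PT}$-types on the path, and nothing more is argued.

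Your target is strictly stronger: an edge of type $\omega$ lies on a mediator-only path from $x$ to $t$ exactly when $(x,\omega,t)\in D_{PT}$. Its forward direction fails on $\mathcal{G}_{PT}$ as Definition~\ref{def:pt-flow-graph} builds it, because each operator's mediators are shared by all object types. Take $\to(a,+(b,c),s)$ (Fig.~\ref{fig:re-pt-flow}). The path that leaves $a$ on its order edge, enters the AND-split $p^+$, and continues on the item edge to $c$ is mediator-only and carries an order edge. Yet $(a,\mathrm{o},c)\notin D_{PT}$.

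The ``tree analogue of type consistency'' you invoke to close this is not something the tree supplies. For object-centric Petri nets, type consistency is a hypothesis (Definition~\ref{def:type-consistent}), not a consequence. Nothing in the compiled flow graph keeps a walk on one type at a shared $+$ or sequence mediator, so the induction cannot be completed as planned.

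You have two ways out.
\begin{itemize}
  \item Compile one copy of each operator's mediators per object type, with the leaves shared. Every walk path is then single-typed and the clause is immediate.
  \item Drop the biconditional and take the paper's definitional route. That is what your closing sentence (``the union of the $D_{PT}$-types witnessed along it'') actually asserts, and it is weaker than the statement you set out to prove.
\end{itemize}
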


\begin{proof}
  The schema of Section~\ref{sec:notation-by-notation} applies at the instance map above,
  finiteness of $T$ and of the operator tree $\Omega$ giving the conditions of Definition~\ref{def:lm-graph}, with
  each edge typed by the arc types in~$D_{PT}$. The delta concerns the loops. The walk runs on the
  compiled flow graph $\mathcal{G}_{PT}$ (Definition~\ref{def:pt-flow-graph}), which is finite, and a
  branch stops at any node already on its path, so every reach family and every set of contexts is
  finite (Definition~\ref{def:contexts-general}). A further pass of a $\circlearrowleft$ is a
  composite of generators already present and adds none, and the firing rule is self-dual across
  all four operators.
\end{proof}

\begin{figure*}[tp]
  \centering
  \input{tikz/running_example/re_pt_cospans}\unskip
  \caption{The generator cospans of Fig.~\ref{fig:re-ocpt}, each as a decorated cospan and the
    string diagram it denotes. $g_a$ takes the order and item threads from the open end $\bot$, $g_b,g_c$ carry them, and
    $g_s$ synchronises both.}
  \label{fig:re-ocpt-cospans}
\end{figure*}

\begin{exmp}[A small OCPT and its generator cospans]
  \label{ex:running-ptree-cospan}
  Consider the OC process tree $\to(a, +(b,c), s)$ of Fig.~\ref{fig:re-ocpt}, compiled to the
  flow graph of Fig.~\ref{fig:re-pt-flow}, over the object types $\mathrm{order}$ and
  $\mathrm{item}$, with typed dependency arcs
  \[
    D_{PT} = \{(a,\mathrm{o},b),\,(a,\mathrm{it},c),\,
      (b,\mathrm{o},s),\,(c,\mathrm{it},s)\}.
  \]
  The $+$ node carries the object-centricity. It runs two concurrent typed
  object lifecycles, the order through $b$ and the item through $c$. Reading a
  generator off each leaf context (Fig.~\ref{fig:re-ocpt-cospans}) gives the first activity
  $g_a:\{(\bot,\mathrm{o},a),(\bot,\mathrm{it},a)\}\to\{(a,\mathrm{o},b),(a,\mathrm{it},c)\}$, the two lifecycle
  threads $g_b:\{(a,\mathrm{o},b)\}\to\{(b,\mathrm{o},s)\}$ and
  $g_c:\{(a,\mathrm{it},c)\}\to\{(c,\mathrm{it},s)\}$, and the synchronisation
  \begin{gather*}
    g_s = \Bigl(\{(b,\mathrm{o},s),(c,\mathrm{it},s)\}
      \xrightarrow{\iota} A \xleftarrow{\iota}\twocolbreak  \{(s,\mathrm{o},\top),(s,\mathrm{it},\top)\},\; d_s\Bigr),
  \end{gather*}
  whose left boundary carries both object types. Each typed wire here carries a single type,
  the order thread staying order-typed and the item thread item-typed, with $a$ and $s$ the only
  generators touching both. The two lifecycles are independent until $s$, which is exactly the
  concurrency the $+$ node expresses. Every leg here carries the default count of one.
  This concurrency is exactly what the certificate separates from exclusive choice (Lemma~\ref{lem:sig-complete}).
\end{exmp}

\begin{figure}[tb]
  \centering
  \input{tikz/running_example/re_loop_tree}\unskip
  \caption{The redo tree $\circlearrowleft(a,b,c)$, a mandatory body $a$ (solid ``do'' edge) and two
    redo bodies $b,c$ (dashed ``redo'' edges). Every leaf relates to both object types, order (blue) and
    item (red), one of each per firing, so each connector is two parallel coloured wires.}
  \label{fig:re-loop-tree}
\end{figure}

\begin{figure}[tb]
  \centering
  \input{tikz/running_example/re_loop}\unskip
  \caption{The compiled flow graph of $\circlearrowleft(a,b,c)$. After $a$, one XOR redo mediator
    either exits to the block after the loop or takes one of two back-edges, through redo body $b$ or
    through redo body $c$, and repeats $a$. Each connector carries both types as parallel coloured
    wires.}
  \label{fig:re-loop}
\end{figure}

\begin{figure*}[tp]
  \centering
  \input{tikz/running_example/re_pt_unroll}\unskip
  \caption{The redo $\circlearrowleft(a,b,c)$ unrolled to depths $0$, $1$, and $2$, both types
    threaded per pass, one order and one item. The depth-$2$ unroll takes
    both redo bodies once, $a\to b\to a\to c\to a$. The middle $a$ then sits between two distinct
    redo bodies, one of its contexts, and deeper unrolling composes only generators already present.}
  \label{fig:re-pt-unroll}
\end{figure*}

\begin{exmp}[A redo, unrolled to its cospans]
  \label{ex:running-ptree-loop}
  The examples above are loop-free. This one shows the construction on a cycle. Take the redo tree $\circlearrowleft(a,b,c)$ over two object types,
  $\mathrm{order}$ (blue) and $\mathrm{item}$ (red) (Fig.~\ref{fig:re-loop-tree}). The mandatory body
  $a$ runs once, then the redo either exits or takes one of the two redo bodies $b,c$ and runs $a$
  again, any number of times. Every leaf relates to both types, one order and one item per firing, the default
  multiplicity. The compiled flow graph
  (Definition~\ref{def:pt-flow-graph}, Fig.~\ref{fig:re-loop}) joins $a,b,c$ by one XOR redo mediator
  with two back-edges, through $b$ and through $c$.

  Unrolling the redo makes the finite generator set visible (Fig.~\ref{fig:re-pt-unroll}). The
  depth-$2$ unroll takes both redo bodies once, $a\to b\to a\to c\to a$. The middle $a$ then sits
  between two distinct redo bodies, a redo body on the input side and one on the output for each
  type, which is one of its contexts. Deeper unrolling only composes generators already present, so
  the loop reduces to finitely many generators, $g_b$, $g_c$ and one $g_a$ per pairing of a predecessor ($\bot$, $b$ or $c$) with a successor ($b$, $c$ or $\top$), which the walk from $a$ finds by
  stopping at the first activity past the redo mediator on each side.

  Reading a generator off each leaf context (Fig.~\ref{fig:re-loop-cospans}) gives the loop-entry
  $g_a$ and the two redo-body generators $g_b,g_c$, each threading one order and
  one item. Repeating the loop composes $g_a$ with the redo blocks
  $g_b\,g_a$ and $g_c\,g_a$ in any order and adds no signature element. The loop language is
  recovered at the composition level, and the typed redo needs no new machinery.
\end{exmp}

\begin{figure*}[tp]
  \centering
  \resizebox{\textwidth}{!}{\input{tikz/running_example/re_loop_cospans}\unskip}
  \caption{The generator cospans of $\circlearrowleft(a,b,c)$, each as a decorated cospan (left) and
    the string diagram it denotes (right), and the labels beside each boundary name its typed ports.
    Both legs, order (blue) and item
    (red), carry the default count of one, so no constraint system is drawn. $g_a$ is shown in its $b$-redo context, the
    $c$-context is analogous, and the entry and exit contexts pair $a$ with the open ends $\bot,\top$.}
  \label{fig:re-loop-cospans}
\end{figure*}

The sequence and parallel operators of Definition~\ref{def:process-tree} have an exact
characterisation in terms of series-parallel partial orders, developed in
Appendix~\ref{subsec:pt-sp-appendix} (Theorem~\ref{thm:pt-as-sp-language},
Corollary~\ref{cor:pt-sp-iff}).

%% file: tikz/running_example/re_pt_tree.tex
\newcommand{\pwire}[2]{%
  \draw[bluewire] let \p1=($(#2.center)-(#1.center)$), \n1={atan2(\y1,\x1)} in
    ($(#1.\n1)!1.9pt!90:(#2.center)$) -- ($(#2.\n1+180)!1.9pt!-90:(#1.center)$);
  \draw[redwire]  let \p1=($(#2.center)-(#1.center)$), \n1={atan2(\y1,\x1)} in
    ($(#1.\n1)!1.9pt!-90:(#2.center)$) -- ($(#2.\n1+180)!1.9pt!90:(#1.center)$);}
\begin{tikzpicture}[
    op/.style={circle, draw=black!55, fill=black!4, inner sep=1.2pt, minimum size=5mm,
               font=\small},
    lboth/.style={rectangle, rounded corners=2pt, draw=black!55, fill=black!5,
                  minimum size=6mm, font=\small},
    lord/.style={rectangle, rounded corners=2pt, draw=bluecol, fill=bluecol!12,
                 minimum size=6mm, font=\small},
    litem/.style={rectangle, rounded corners=2pt, draw=redcol, fill=redcol!12,
                  minimum size=6mm, font=\small}]
  \node[op]    (root) at (0,1.8)     {$\to$};
  \node[lboth] (a)    at (-1.6,0.6)  {$a$};
  \node[op]    (plus) at (0,0.6)     {$+$};
  \node[lboth] (s)    at (1.6,0.6)   {$s$};
  \node[lord]  (b)    at (-0.65,-0.6){$b$};
  \node[litem] (c)    at (0.65,-0.6) {$c$};
  \pwire{root}{a}
  \pwire{root}{plus}
  \pwire{root}{s}
  \draw[bluewire] (plus) -- (b);
  \draw[redwire]  (plus) -- (c);
\end{tikzpicture}

%% file: tikz/running_example/re_pt_flow.tex
\begin{tikzpicture}[petriBase, scale=0.95, inline,
    task/.style={rectangle, rounded corners=3pt, draw=black!60, fill=black!4,
                 minimum size=7mm, font=\small},
    btask/.style={task, draw=bluecol, fill=bluecol!6},
    rtask/.style={task, draw=redcol,  fill=redcol!6},
    agw/.style={diamond, draw=black!60, fill=black!4, inner sep=1pt, minimum size=6mm,
                font=\footnotesize}]
  \node[task]  (a)  at (0,0)      {$a$};
  \node[agw]   (ps) at (1.9,0)    {$+$};
  \node[btask] (b)  at (3.9,0.85) {$b$};
  \node[rtask] (c)  at (3.9,-0.85){$c$};
  \node[agw]   (qj) at (5.9,0)    {$+$};
  \node[task]  (s)  at (7.8,0)    {$s$};
  \draw[blueflow] ([yshift=1.4mm]a.east)  -- (1.74,0.14);
  \draw[redflow]  ([yshift=-1.4mm]a.east) -- (1.74,-0.14);
  \draw[blueflow] (ps) -- (b);
  \draw[redflow]  (ps) -- (c);
  \draw[blueflow] (b)  -- (qj);
  \draw[redflow]  (c)  -- (qj);
  \draw[blueflow] (6.06,0.14)  -- ([yshift=1.4mm]s.west);
  \draw[redflow]  (6.06,-0.14) -- ([yshift=-1.4mm]s.west);
\end{tikzpicture}

%% file: tikz/running_example/re_pt_cospans.tex
\begin{tikzpicture}[inline,
    bspd/.style={spider, fill=bluecol},
    rspd/.style={spider, fill=redcol},
    mbox/.style={draw=black, rounded corners=4pt, fill=white, minimum size=0.75cm, font=\small},
    plabel/.style={font=\tiny}]

  \node[font=\footnotesize] at (2.7,1.4) {decorated cospan};
  \node[font=\footnotesize] at (9.0,1.4) {string diagram};

  \node[font=\small] at (-2.3,0) {$g_a$};
  \node[cospanblue, minimum height=0.9cm] (aL) at (0,0) {};
  \node[bspd] at (0,0.20) {};
  \node[rspd] at (0,-0.20) {};
  \node at (0.9,0) {$\rightarrow$};
  \node[cospangrey, minimum height=1.0cm, minimum width=2.6cm] (aA) at (2.7,0) {};
  \node[mbox, minimum height=0.8cm] (af) at (2.7,0) {$a$};
  \node[bspd] (aao) at (1.65,0.20) {};
  \node[rspd] (aai) at (1.65,-0.20) {};
  \node[bspd] (abo) at (3.75,0.20) {};
  \node[rspd] (abi) at (3.75,-0.20) {};
  \draw[bluewire] (aao) -- (af.160);
  \draw[redwire]  (aai) -- (af.200);
  \draw[bluewire] (af.20)  -- (abo);
  \draw[redwire]  (af.-20) -- (abi);
  \node at (4.5,0) {$\leftarrow$};
  \node[cospanblue, minimum height=0.9cm] (aR) at (5.4,0) {};
  \node[bspd] at (5.4,0.20) {};
  \node[rspd] at (5.4,-0.20) {};
  \node[plabel, anchor=east] at ([shift={(-3pt,0.20cm)}]aL.west) {$(\bot,\mathrm{o},a)$};
  \node[plabel, anchor=east] at ([shift={(-3pt,-0.20cm)}]aL.west) {$(\bot,\mathrm{it},a)$};
  \node[plabel, anchor=west] at ([shift={(3pt,0.20cm)}]aR.east) {$(a,\mathrm{o},b)$};
  \node[plabel, anchor=west] at ([shift={(3pt,-0.20cm)}]aR.east) {$(a,\mathrm{it},c)$};
  \node at (7.4,0) {$\mapsto$};
  \node[morphismblue, minimum height=0.95cm] (asd) at (9.0,0) {$a$};
  \draw[bluewire] ($(asd.south west)!0.70!(asd.north west)$) -- ++(-0.55,0);
  \draw[redwire]  ($(asd.south west)!0.30!(asd.north west)$) -- ++(-0.55,0);
  \draw[bluewire] ($(asd.south east)!0.70!(asd.north east)$) -- ++(0.55,0);
  \draw[redwire]  ($(asd.south east)!0.30!(asd.north east)$) -- ++(0.55,0);

  \begin{scope}[yshift=-1.5cm]
    \node[font=\small] at (-2.3,0) {$g_b$};
    \node[cospanblue, minimum height=0.6cm] (bL) at (0,0) {};
    \node[bspd] at (0,0) {};
    \node at (0.9,0) {$\rightarrow$};
    \node[cospangrey, minimum height=1.0cm, minimum width=2.6cm] (bA) at (2.7,0) {};
    \node[mbox] (bf) at (2.7,0) {$b$};
    \node[bspd] (bao) at (1.65,0) {};
    \node[bspd] (bbo) at (3.75,0) {};
    \draw[bluewire] (bao) -- (bf.west);
    \draw[bluewire] (bf.east) -- (bbo);
    \node at (4.5,0) {$\leftarrow$};
    \node[cospanblue, minimum height=0.6cm] (bR) at (5.4,0) {};
    \node[bspd] at (5.4,0) {};
    \node[plabel, anchor=east] at ([shift={(-3pt,0cm)}]bL.west) {$(a,\mathrm{o},b)$};
    \node[plabel, anchor=west] at ([shift={(3pt,0cm)}]bR.east) {$(b,\mathrm{o},s)$};
    \node at (7.4,0) {$\mapsto$};
    \node[morphismblue, minimum size=0.75cm] (bsd) at (9.0,0) {$b$};
    \draw[bluewire] (bsd.west) -- ++(-0.55,0);
    \draw[bluewire] (bsd.east) -- ++(0.55,0);
  \end{scope}

  \begin{scope}[yshift=-3.0cm]
    \node[font=\small] at (-2.3,0) {$g_c$};
    \node[cospanblue, minimum height=0.6cm] (cL) at (0,0) {};
    \node[rspd] at (0,0) {};
    \node at (0.9,0) {$\rightarrow$};
    \node[cospangrey, minimum height=1.0cm, minimum width=2.6cm] (cA) at (2.7,0) {};
    \node[mbox] (cf) at (2.7,0) {$c$};
    \node[rspd] (cai) at (1.65,0) {};
    \node[rspd] (cbi) at (3.75,0) {};
    \draw[redwire] (cai) -- (cf.west);
    \draw[redwire] (cf.east) -- (cbi);
    \node at (4.5,0) {$\leftarrow$};
    \node[cospanblue, minimum height=0.6cm] (cR) at (5.4,0) {};
    \node[rspd] at (5.4,0) {};
    \node[plabel, anchor=east] at ([shift={(-3pt,0cm)}]cL.west) {$(a,\mathrm{it},c)$};
    \node[plabel, anchor=west] at ([shift={(3pt,0cm)}]cR.east) {$(c,\mathrm{it},s)$};
    \node at (7.4,0) {$\mapsto$};
    \node[morphismblue, minimum size=0.75cm] (csd) at (9.0,0) {$c$};
    \draw[redwire] (csd.west) -- ++(-0.55,0);
    \draw[redwire] (csd.east) -- ++(0.55,0);
  \end{scope}

  \begin{scope}[yshift=-4.5cm]
    \node[font=\small] at (-2.3,0) {$g_s$};
    \node[cospanblue, minimum height=0.9cm] (sL) at (0,0) {};
    \node[bspd] at (0,0.20) {};
    \node[rspd] at (0,-0.20) {};
    \node at (0.9,0) {$\rightarrow$};
    \node[cospangrey, minimum height=1.0cm, minimum width=2.6cm] (sA) at (2.7,0) {};
    \node[mbox, minimum height=0.8cm] (sf) at (2.7,0) {$s$};
    \node[bspd] (sao) at (1.65,0.20) {};
    \node[rspd] (sai) at (1.65,-0.20) {};
    \node[bspd] (sbo) at (3.75,0.20) {};
    \node[rspd] (sbi) at (3.75,-0.20) {};
    \draw[bluewire] (sao) -- (sf.160);
    \draw[redwire]  (sai) -- (sf.200);
    \draw[bluewire] (sf.20)  -- (sbo);
    \draw[redwire]  (sf.-20) -- (sbi);
    \node at (4.5,0) {$\leftarrow$};
    \node[cospanblue, minimum height=0.9cm] (sR) at (5.4,0) {};
    \node[bspd] at (5.4,0.20) {};
    \node[rspd] at (5.4,-0.20) {};
    \node[plabel, anchor=east] at ([shift={(-3pt,0.20cm)}]sL.west) {$(b,\mathrm{o},s)$};
    \node[plabel, anchor=east] at ([shift={(-3pt,-0.20cm)}]sL.west) {$(c,\mathrm{it},s)$};
    \node[plabel, anchor=west] at ([shift={(3pt,0.20cm)}]sR.east) {$(s,\mathrm{o},\top)$};
    \node[plabel, anchor=west] at ([shift={(3pt,-0.20cm)}]sR.east) {$(s,\mathrm{it},\top)$};
    \node at (7.4,0) {$\mapsto$};
    \node[morphismblue, minimum height=0.95cm] (ssd) at (9.0,0) {$s$};
    \draw[bluewire] ($(ssd.south west)!0.70!(ssd.north west)$) -- ++(-0.55,0);
    \draw[redwire]  ($(ssd.south west)!0.30!(ssd.north west)$) -- ++(-0.55,0);
    \draw[bluewire] ($(ssd.south east)!0.70!(ssd.north east)$) -- ++(0.55,0);
    \draw[redwire]  ($(ssd.south east)!0.30!(ssd.north east)$) -- ++(0.55,0);
  \end{scope}
\end{tikzpicture}

%% file: tikz/running_example/re_loop_tree.tex
\newcommand{\pwire}[3]{%
  \draw[bluewire,#1] let \p1=($(#3.center)-(#2.center)$), \n1={atan2(\y1,\x1)} in
    ($(#2.\n1)!1.9pt!90:(#3.center)$) -- ($(#3.\n1+180)!1.9pt!-90:(#2.center)$);
  \draw[redwire,#1]  let \p1=($(#3.center)-(#2.center)$), \n1={atan2(\y1,\x1)} in
    ($(#2.\n1)!1.9pt!-90:(#3.center)$) -- ($(#3.\n1+180)!1.9pt!90:(#2.center)$);}
\begin{tikzpicture}[
    op/.style={circle, draw=black!55, fill=black!4, inner sep=1.2pt, minimum size=5.5mm,
               font=\small},
    leaf/.style={rectangle, rounded corners=2pt, draw=black!55, fill=black!5,
                 minimum size=6mm, font=\small},
    elab/.style={font=\tiny, black!60}]
  \node[op]   (root) at (0,1.5)     {$\circlearrowleft$};
  \node[leaf] (a)    at (-1.7,0)    {$a$};
  \node[leaf] (b)    at (0,0)       {$b$};
  \node[leaf] (c)    at (1.7,0)     {$c$};
  \pwire{solid}{root}{a}
  \pwire{dashed}{root}{b}
  \pwire{dashed}{root}{c}
  \node[elab] at (-1.15,0.85) {do};
  \node[elab, anchor=west] at (0.14,0.75) {redo};
  \node[elab] at (1.42,0.85)  {redo};
  \node[font=\tiny, black!60, align=center] at (0,-0.75)
    {every leaf: order $n=1$, item $n=1$};
\end{tikzpicture}

%% file: tikz/running_example/re_loop.tex
\begin{tikzpicture}[petriBase, scale=1.0, inline,
    evt/.style={circle, draw=black!70, fill=black!6, minimum size=6.5mm, font=\small},
    task/.style={rectangle, rounded corners=3pt, draw=black!60, fill=black!4,
                 minimum size=7mm, font=\small},
    xgw/.style={diamond, draw=black!60, fill=black!4, inner sep=1pt, minimum size=6.5mm, font=\footnotesize}]
  \node[evt]  (g1) at (0,0)      {$\bot$};
  \node[task] (a)  at (2.0,0)    {$a$};
  \node[xgw]  (x)  at (4.0,0)    {$\times$};
  \node[evt]  (g2) at (6.1,0)    {$\top$};
  \node[task] (b)  at (3.0,-1.5) {$b$};
  \node[task] (c)  at (3.0,-3.1) {$c$};

  \draw[blueflow] ([yshift=2.3pt]g1.east) -- ([yshift=2.3pt]a.west);
  \draw[redflow]  ([yshift=-2.3pt]g1.east) -- ([yshift=-2.3pt]a.west);
  \draw[blueflow] ([yshift=2.3pt]a.east)  -- (3.76,0.081);
  \draw[redflow]  ([yshift=-2.3pt]a.east) -- (3.76,-0.081);
  \draw[blueflow] (4.24,0.081)  -- ([yshift=2.3pt]g2.west) node[pos=0.55, above=1pt, font=\tiny] {exit};
  \draw[redflow]  (4.24,-0.081) -- ([yshift=-2.3pt]g2.west);

  \draw[blueflow] (x.235) to[out=235,in=75]  ([xshift=-2.3pt]b.north);
  \draw[redflow]  (x.265) to[out=235,in=75]  ([xshift=2.3pt]b.north);
  \draw[blueflow] ([yshift=2.3pt]b.west)  to[out=155,in=-80] ([xshift=2.3pt]a.south);
  \draw[redflow]  ([yshift=-2.3pt]b.west) to[out=155,in=-80] ([xshift=-2.3pt]a.south);

  \draw[blueflow] (x.290) to[out=290,in=25]  ([yshift=2.3pt]c.east);
  \draw[redflow]  (x.320) to[out=290,in=25]  ([yshift=-2.3pt]c.east);
  \draw[blueflow] ([yshift=2.3pt]c.west)  to[out=170,in=-118] ([xshift=-3pt]a.south);
  \draw[redflow]  ([yshift=-2.3pt]c.west) to[out=170,in=-118] ([xshift=-7pt]a.south);

  \node[font=\tiny, align=center, above=1pt of a] {order $n=1$\\ item $n=1$};
  \node[font=\tiny, black!60, below=2pt of b] {redo body};
  \node[font=\tiny, black!60, below=2pt of c] {redo body};
\end{tikzpicture}

%% file: tikz/running_example/re_pt_unroll.tex
\newcommand{\dwire}[2]{%
  \draw[blueflow] ([yshift=2.1pt]#1.east) -- ([yshift=2.1pt]#2.west);%
  \draw[redflow]  ([yshift=-2.1pt]#1.east) -- ([yshift=-2.1pt]#2.west);}
\begin{tikzpicture}[petriBase, scale=0.85, inline,
    evt/.style={circle, draw=black!70, fill=black!6, minimum size=6mm, font=\small},
    task/.style={rectangle, rounded corners=3pt, draw=black!60, fill=black!4,
                 minimum size=6.5mm, font=\small, inner sep=2.5pt},
    dlab/.style={font=\footnotesize, anchor=east, black!70}]

  \node[font=\tiny, black!70, align=left, anchor=west] at (-0.9,1.05)
    {every pass: order $n=1$, item $n=1$};

  \node[dlab] at (-0.9,0) {depth $0$};
  \node[evt]  (a0) at (0,0)   {$\bot$};
  \node[task] (b0) at (1.6,0) {$a$};
  \node[evt]  (e0) at (3.2,0) {$\top$};
  \dwire{a0}{b0} \dwire{b0}{e0}

  \node[dlab] at (-0.9,-1.5) {depth $1$};
  \node[evt]  (a1) at (0,-1.5)   {$\bot$};
  \node[task] (b1) at (1.6,-1.5) {$a$};
  \node[task] (r1) at (3.2,-1.5) {$b$};
  \node[task] (c1) at (4.8,-1.5) {$a$};
  \node[evt]  (e1) at (6.4,-1.5) {$\top$};
  \dwire{a1}{b1} \dwire{b1}{r1} \dwire{r1}{c1} \dwire{c1}{e1}

  \node[dlab] at (-0.9,-3.0) {depth $2$};
  \node[evt]  (a2) at (0,-3.0)    {$\bot$};
  \node[task] (b2) at (1.6,-3.0)  {$a$};
  \node[task] (r2) at (3.2,-3.0)  {$b$};
  \node[task] (c2) at (4.8,-3.0)  {$a$};
  \node[task] (s2) at (6.4,-3.0)  {$c$};
  \node[task] (d2) at (8.0,-3.0)  {$a$};
  \node[evt]  (e2) at (9.6,-3.0)  {$\top$};
  \dwire{a2}{b2} \dwire{b2}{r2} \dwire{r2}{c2} \dwire{c2}{s2} \dwire{s2}{d2} \dwire{d2}{e2}
  \draw[decorate, decoration={brace, amplitude=4pt, mirror}, black!55]
    ($(r2.south west)+(0,-0.12)$) -- ($(s2.south east)+(0,-0.12)$)
    node[midway, below=4pt, font=\scriptsize, black!70]
    {$a$'s context between two redo bodies};
\end{tikzpicture}

%% file: tikz/running_example/re_loop_cospans.tex
\begin{tikzpicture}[inline,
    bspd/.style={spider, fill=bluecol},
    rspd/.style={spider, fill=redcol},
    mbox/.style={draw=black, rounded corners=4pt, fill=white, minimum size=0.75cm, font=\small},
    plabel/.style={font=\tiny}]

  \node[font=\footnotesize] at (2.7,1.7) {decorated cospan};
  \node[font=\footnotesize] at (9.0,1.7) {string diagram};

  \foreach \row/\g/\src/\tgt in {0/a/b/b, -1.6/b/a/a, -3.2/c/a/a}{%
    \begin{scope}[yshift=\row cm]
      \node[font=\small] at (-2.3,0) {$g_{\g}$};
      \node[cospanblue, minimum height=0.9cm] (L) at (0,0) {};
      \node[bspd] at (0,0.20) {};
      \node[rspd] at (0,-0.20) {};
      \node at (0.9,0) {$\rightarrow$};
      \node[cospangrey, minimum height=1.0cm, minimum width=2.6cm] (A) at (2.7,0) {};
      \node[mbox, minimum height=0.8cm] (f) at (2.7,0) {$\g$};
      \node[bspd] (ao) at (1.65,0.20) {};
      \node[rspd] (ai) at (1.65,-0.20) {};
      \node[bspd] (bo) at (3.75,0.20) {};
      \node[rspd] (bi) at (3.75,-0.20) {};
      \draw[bluewire] (ao) -- (f.160);
      \draw[redwire]  (ai) -- (f.200);
      \draw[bluewire] (f.20)  -- (bo);
      \draw[redwire]  (f.-20) -- (bi);
      \node at (4.5,0) {$\leftarrow$};
      \node[cospanblue, minimum height=0.9cm] (R) at (5.4,0) {};
      \node[bspd] at (5.4,0.20) {};
      \node[rspd] at (5.4,-0.20) {};
      \node[plabel, anchor=east] at ([shift={(-3pt,0.20cm)}]L.west) {$(\src,\mathrm{o},\g)$};
      \node[plabel, anchor=east] at ([shift={(-3pt,-0.20cm)}]L.west) {$(\src,\mathrm{it},\g)$};
      \node[plabel, anchor=west] at ([shift={(3pt,0.20cm)}]R.east) {$(\g,\mathrm{o},\tgt)$};
      \node[plabel, anchor=west] at ([shift={(3pt,-0.20cm)}]R.east) {$(\g,\mathrm{it},\tgt)$};
      \node at (7.4,0) {$\mapsto$};
      \node[morphismblue, minimum height=0.95cm] (sd) at (9.0,0) {$\g$};
      \draw[bluewire] ($(sd.south west)!0.70!(sd.north west)$) -- ++(-0.55,0);
      \draw[redwire]  ($(sd.south west)!0.30!(sd.north west)$) -- ++(-0.55,0);
      \draw[bluewire] ($(sd.south east)!0.70!(sd.north east)$) -- ++(0.55,0);
      \draw[redwire]  ($(sd.south east)!0.30!(sd.north east)$) -- ++(0.55,0);
    \end{scope}}
\end{tikzpicture}

%% file: sections/06-bpmn.tex
\FloatBarrier
\subsection{BPMN}\label{sec:bpmn}

Business Process Model and Notation (BPMN) is a widely-used graphical notation for
specifying business processes~\cite{BusinessProcessModel}, standardised as ISO/IEC 19510:2013.
BPMN is thus a first-class modelling language in its own right.
In the process-mining context the core elements are activities, XOR gateways (exclusive
choice), AND gateways (parallel split/join), OR gateways (inclusive choice/merge), and
sequence flows. We type BPMN's control flow directly, each sequence flow and gateway carrying an
object type, and give the resulting object-centric BPMN (OCBPMN) its cospan-algebra presentation, one further
instance of the general framework. Earlier object-centric treatments of BPMN take a different route
(Section~\ref{subsec:bpmn-related}).
Read at the level of models, this generalises the Kalenkova conversion between BPMN and Petri nets~\cite{kalenkovaProcessMiningUsing2017} to a single framework that covers four notations at once.

\begin{definition}[BPMN model]\label{def:bpmn}
  A BPMN model is a tuple
  \[
    B = (N, F, \ell, \mathcal{O}),
  \]
  where $N$ is a finite set of flow objects and $\mathcal{O}$ is a finite set of
  object types.
  Here $F \subseteq N \times \mathcal{O} \times N$ is the set of typed sequence flows,
  a typed multigraph in which an arc $(x,\omega,y)\in F$ records that an object of type
  $\omega\in\mathcal{O}$ flows from $x$ to $y$, mirroring the typed dependency multigraph
  $D$ of Definition~\ref{def:causal-net}, and
  $\ell : N \to \{\mathrm{activity}, \mathrm{event}, \mathrm{AND}, \mathrm{XOR}, \mathrm{OR}\}$
  assigns a BPMN node type to each flow object. A start event has no incoming flow and an end event
  has no outgoing flow.
  Write $A = \ell^{-1}(\mathrm{activity})$ for the activities and $G = N \setminus A$ for the
  gateways and events. Each flow carries an object type $\omega$, which is the edge typing the
  general framework uses. Gateways carry no
  object type of their own and are control-flow routing constructs
  (Remark~\ref{rem:bpmn-typed-gateways}).
  The control-flow graph of $B$ is the directed graph $(N, F)$.
\end{definition}

BPMN's OR gateways (inclusive choice) are the one construct without a direct AND/XOR
reading, and the least settled part of the standard's token semantics. The difficulty is the
OR-join, which must fire once exactly those incoming branches the matching split activated
have completed, in general a non-local decision that operational token semantics settle only
with global lookahead~\cite{dijkmanSemanticsAnalysisBusiness2008}. We sidestep the
operational question denotationally. Our concern is the structural (trace) language of a
model, what it \emph{allows} to happen, not whether a particular token run deadlocks, and
read this way an OR gateway is just the finite family of branch subsets it may activate,
each subset an ordinary AND-binding. The join then consumes exactly the subset the split
committed, so no global state is read, which is the reading object-centric causal nets and
Petri nets already use for inclusive choice.

\begin{definition}[OR-gateway semantics]\label{def:or-semantics}
  We read an OR gateway denotationally, as the set of branch subsets it allows.
  For a $k$-ary OR gateway with branch set $S$, write
  $\mathcal{P}^+(S) = \{ U \subseteq S \mid U \neq \emptyset \}$ for its $2^k - 1$
  non-empty subsets. The gateway denotes the inclusive-choice family indexed by
  $\mathcal{P}^+(S)$. An OR-split commits one $U \in \mathcal{P}^+(S)$ and activates exactly the
  branches in $U$ concurrently, and the matching OR-join fires once exactly the branches in $U$ have
  completed, the committed $U$ recorded as the gateway's binding. Each $U$ is thus an ordinary
  AND-binding and the gateway is their XOR. It is the object-centric causal-net activity with
  output-binding set $\mathrm{Out}(g)=\mathcal{P}^+(S)$, and in OCPN form the $2^k-1$ silent AND-transitions
  selected by an XOR.
\end{definition}

Because the committed $U$ is recorded as a binding, the OR-join reads exactly what the split wrote
and consults no global state. Replacing each OR gateway by its $\mathcal{P}^+(S)$ alternatives
(Definition~\ref{def:or-semantics}), each an XOR-selected AND-binding, leaves a finite AND/OR graph whose mediators are all tagged AND or XOR (Definition~\ref{def:lm-graph}), with the same signature. From there the construction is the
OCPN and OCCN one unchanged. The AND/OR graph is walked, and each activity's
gateway-mediated neighbourhoods are read off it.

This instantiates the general framework of Section~\ref{sec:general-framework}, and the two mapping
choices mirror the earlier sections. An activity is a generator. A gateway is a mediator, an AND
gateway an AND, an XOR gateway an XOR, and an OR gateway the XOR-of-ANDs above. A $1$-ary gateway is
the degree-$(1,1)$ mediator where AND, XOR and a SEQ pass-through all coincide, and simply forwards
its one branch. The gateways carry no generator of their own, exactly as the operators of the
process tree do not (Remark~\ref{rem:bpmn-typed-gateways}). Given
$B=(N,F,\ell,\mathcal{O})$, the instance map is as follows.
\medskip
\begin{center}
  \small\begin{tabularx}{\linewidth}{@{}lX@{}}
    \hline
    General framework & BPMN \\
    \hline
    AND/OR graph $(V,E)$ & the control-flow graph $(N,F)$ \\
    Activities $\mathcal{A}$ & tasks $A$ \\
    AND mediators & AND gateways, and the start and end events, where a walk ends at $\bot$ or $\top$ \\
    XOR mediators & XOR gateways, and OR gateways (exploded to XOR-of-ANDs over $\mathcal{P}^+(S)$) \\
    SEQ pass-throughs & $1$-ary gateways, where AND, XOR and SEQ coincide \\
    \hline
  \end{tabularx}
\end{center}
\medskip
The signature comes from $B$ as given. Inserting a type-preserving gateway of one input and one
output on any flow leaves every reached family, every context and hence every generator unchanged,
since a $1$-ary mediator forwards under both the AND and the XOR rule and wires are named by the
activities at their ends. The construction is therefore invariant under the normalisation of
\citet{dijkmanSemanticsAnalysisBusiness2008} without requiring it, which
matters because that normalisation is stated for their Petri-net mapping and leaves the OR-join
aside. Nothing here depends on resolving an OR-join. The signature enumerates the paths a model
permits and says nothing about how a running system would choose between them.

The engine of Section~\ref{sec:general-framework} runs unchanged, with one instance-specific
fact. Reachability is through the gateways only, an AND gateway joining one reached set from each
of its branches, since they occur together, and an XOR gateway keeping the reached sets of its
branches separate, since exactly one of them occurs. The context, generator, and composition steps
are those of Definitions~\ref{def:contexts-general} and~\ref{def:generator-cospan-general} on the
typed flows.

\begin{figure*}[tp]
  \centering
  \input{tikz/running_example/re_bpmn_example}\unskip
  \caption{A small object-centric BPMN model in canonical form, over order (blue)
    and item (red). The start emits an order and an item, and an AND-split gateway
    ($+$) routes them down parallel branches. An XOR gateway ($\times$) verifies
    ($b$) or expedites ($e$) the order while $c$ packs the item, and an AND-join gateway
    ($+$) synchronises them for shipping ($s$). Every gateway preserves object
    types, order flows staying order-typed and item flows item-typed.}
  \label{fig:re-obpmn}
\end{figure*}

\begin{figure*}[tp]
  \centering
  \input{tikz/running_example/re_bpmn_cospans}\unskip
  \caption{The generator cospans of Fig.~\ref{fig:re-obpmn}. Gateways are
    mediators and yield no generators, while the activities do. The XOR gateway gives one
    generator per branch ($g_b$, $g_e$). The item thread is $g_c$, and $g_s$ is the
    object-centric synchronisation, its order port arriving from $b$ or $e$ (the
    XOR resolved upstream), so $s$ has two contexts.}
  \label{fig:re-obpmn-cospans}
\end{figure*}

\begin{exmp}[A small OCBPMN and its generator cospans]
  \label{ex:running-bpmn-cospan}
  Consider the OCBPMN of Fig.~\ref{fig:re-obpmn}, over the object types
  $\mathrm{order}$ and $\mathrm{item}$. Every split and join is an explicit gateway,
  and each gateway preserves object types. The AND-split routes the order and item
  by type, the XOR offers two same-type alternatives for the order, and nothing
  creates or converts a type (Remark~\ref{rem:bpmn-typed-gateways}).

  The gateways are mediators, so only the activities yield generators
  (Fig.~\ref{fig:re-obpmn-cospans}). The XOR gateway gives one generator per branch,
  $g_b:\{(\bot,\mathrm{o},b)\}\to\{(b,\mathrm{o},s)\}$ and
  $g_e:\{(\bot,\mathrm{o},e)\}\to\{(e,\mathrm{o},s)\}$. The item thread is
  $g_c:\{(\bot,\mathrm{it},c)\}\to\{(c,\mathrm{it},s)\}$. The AND-join is
  the object-centric synchronisation, shown in its $b$-context,
  \begin{gather*}
    g_s = \Bigl(\{(b,\mathrm{o},s),(c,\mathrm{it},s)\}
      \xrightarrow{\iota} A \xleftarrow{\iota}\twocolbreak 
      \{(s,\mathrm{o},\top),(s,\mathrm{it},\top)\},\; d_s\Bigr).
  \end{gather*}
  Its left boundary carries both object types. The order arrives from $b$ or from $e$ (the XOR
  resolved upstream), so $s$ has two contexts, the $e$-context replacing $b$ by $e$. The AND
  gateways correspond to $s$'s AND-binding and the XOR gateway to the order's alternatives, matching
  the causal-net bindings of Section~\ref{sec:causal-nets}. Any count constraint (a key-distribution
  split, say) relates the leg counts of $s$, is recorded in its constraint system $\Lambda$
  (Definition~\ref{def:multiplicity-decoration}) and needs no further machinery.
\end{exmp}

\begin{figure*}[tp]
  \centering
  \resizebox{\textwidth}{!}{\input{tikz/running_example/re_bpmn_or}\unskip}
  \caption{(a) An OR gadget over $S=\{v,u\}$, order-typed. It denotes the XOR over the non-empty
    subsets $\mathcal{P}^+(S)=\{\{v\},\{u\},\{v,u\}\}$, the last an AND-binding. (b) The resulting
    three contexts of $s$, each a decorated cospan and the string diagram it denotes. The
    $\{v,u\}$ context gives $s$ two order ports.}
  \label{fig:re-bpmn-or}
\end{figure*}

\begin{exmp}[An OR gateway and its alternatives]
  \label{ex:bpmn-or}
  Suppose an order must undergo at least one of two optional checks, verify
  ($v$) and audit ($u$), before ship ($s$). An OR-split over the branch
  set $S=\{v,u\}$ feeds an OR-join into $s$ (Fig.~\ref{fig:re-bpmn-or}). By
  Definition~\ref{def:or-semantics} the gateway denotes the family indexed by
  $\mathcal{P}^+(S)=\bigl\{\,\{v\},\{u\},\{v,u\}\,\bigr\}$, so $s$ gains one context per committed
  subset, with left boundaries
  \[
    \{(v,\mathrm{o},s)\},\quad \{(u,\mathrm{o},s)\},\quad
    \{(v,\mathrm{o},s),(u,\mathrm{o},s)\},
  \]
  the last being the AND-binding in which both checks were taken. The branches
  contribute $g_v:\{(\bot,\mathrm{o},v)\}\to\{(v,\mathrm{o},s)\}$ and, likewise,
  $g_u$. The inclusive choice is thus an XOR over these three AND-bindings, exactly the
  $\mathcal{P}^+(S)$ explosion, and the OR-join consumes precisely the binding the split
  committed, so no global token state is consulted.
\end{exmp}

\begin{remark}[Typed gateways are routing-only]
  \label{rem:bpmn-typed-gateways}
  Every gateway is type-preserving, carrying the same set of object types on its incoming and
  outgoing flows and serving only to route objects. A gateway is a mediator, so it has no port of
  its own and the walk of Definition~\ref{def:contexts-general} passes through it. A gateway may not
  convert one type into another. Any count constraint on the routing (a key-distribution split, a
  batch cardinality) relates the leg counts of an activity it routes to and enters that activity's
  constraint system $\Lambda$ (Definition~\ref{def:multiplicity-decoration}) without further
  machinery.
\end{remark}

\begin{theorem}[Canonical cospan-algebra presentation of BPMN]\label{thm:bpmn-cospan}
  Let $B$ be any finite BPMN model.
  Then $B$ canonically determines a cospan-algebra signature $\Sigma$, and hence the free
  symmetric monoidal category $\mathbf{F}(\Sigma)$.
\end{theorem}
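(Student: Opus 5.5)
The plan is to follow the schema of Section~\ref{sec:notation-by-notation} at the BPMN instance map. The only genuine delta is the OR gateway, so the argument has three steps: build a finite AND/OR graph $G_B$ from $B$, check it against Definition~\ref{def:lm-graph}, and then invoke Definitions~\ref{def:contexts-general}, \ref{def:generator-cospan-general} and~\ref{def:free-category}.

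\textbf{Step 1: remove the OR gateways.} Starting from the control-flow graph $(N,F)$, replace each OR gateway with branch set $S$ by one XOR mediator whose branches are $2^{|S|}-1$ fresh AND mediators, one for each $U\in\mathcal{P}^+(S)$ (Definition~\ref{def:or-semantics}). The AND mediator for $U$ is joined by typed edges to exactly the branches in $U$; this is done on the split side and dually on the join side. Each replacement adds finitely many nodes and edges, because $S$ is a subset of the finite flow set $F$. Since $B$ has finitely many gateways, $G_B$ is again finite. Every edge inherits its type $\omega$ from the flow $(x,\omega,y)\in F$ it comes from, and the copies created by the expansion inherit the type of the flow they duplicate. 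Weights default to $1$ and $\rho$ is empty, so $\mathrm{type}$, $\mathrm{wt}$ and $\rho$ are well defined.

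\textbf{Step 2: check Definition~\ref{def:lm-graph}.} Read the remaining nodes through the instance table. Tasks become activities labelled by $\ell$. AND gateways, start events and end events become AND mediators. XOR gateways, together with the XOR heads introduced in Step 1, become XOR mediators. A $1$-ary gateway may be given either tag, since, as noted above, a degree-$(1,1)$ mediator forwards its branch under both rules. This choice must be shown not to affect canonicity, and the observation above, that inserting a type-preserving $1$-ary gateway leaves every reached family unchanged, gives exactly that. A start event has no in-edges and an end event has no out-edges, so the walk ends at $\bot$ and $\top$ there. No structural hypothesis on $B$ is used: cycles through gateways are allowed, because the walk cuts any node already on its path.

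\textbf{Step 3: finiteness and canonicity.} Because $G_B$ is finite, each path walked in Definition~\ref{def:contexts-general} is simple, so there are finitely many of them. Hence $\mathcal{B}_a$, $\mathcal{F}_a$ and $\mathrm{Contexts}(a)$ are finite, and Definition~\ref{def:generator-cospan-general} yields a finite signature $\Sigma$. Definition~\ref{def:free-category} then gives $\mathbf{F}(\Sigma)$, and Theorem~\ref{thm:canonical-presentation} applies. Canonicity means that $\Sigma$ depends only on $B$. The walk is deterministic once $G_B$ is fixed, generators are compared by label and by typed wires (Definition~\ref{def:multiplicity-decoration}), and fresh node names never appear in a typed wire, since wires are named by their activity endpoints.

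I expect the main obstacle to be showing that the OR expansion is itself canonical. One has to check that pairing split and join through their shared $U$ leaves the reached family independent of how the expansion is drawn, and that two OR gateways with equal branch sets give the same generators. I would prove this by computing the reached family at the expanded XOR directly. By the XOR clause it is the union over $U$ of the AND clause on $U$, and this depends only on $\mathcal{P}^+(S)$ and the reached families of the branches, not on the auxiliary nodes.
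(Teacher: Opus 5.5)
Your route is the paper's: expand each OR gateway into an XOR of AND-bindings, read the remaining nodes through the instance table, and get finite reach families from the finiteness of the graph and the cut at repeated nodes. The gap is in the gadget of Step~1. Joining each $\mathrm{AND}_U$ directly to the branch nodes gives a branch $y$ one in-edge for every $U\ni y$. Your neutrality check, the union over $U$ of the AND clause at the XOR head, covers only walks that enter the gadget from its stem.

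A walk that enters from a branch meets those parallel edges first. The backward family $\mathcal{B}_y$ of an activity joins its in-edges as an AND mediator does (Definition~\ref{def:contexts-general}), and the AND clause at an AND gateway behaves the same way. So the walk chooses one member of the family independently from each copy. Every copy leads back to the same head and carries the same family $R$. The result is therefore the set of unions of one member of $R$ per copy, which strictly contains $R$ whenever two members of $R$ have a union outside $R$.

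Concretely, let $a_1$ and $a_2$ feed an XOR-join that feeds an OR-split over $\{v,u\}$, with one type $\omega$ throughout. Then $R=\{\{(a_1,\omega,v)\},\{(a_2,\omega,v)\}\}$, and $\mathcal{B}_v$ gains $\{(a_1,\omega,v),(a_2,\omega,v)\}$. This is a context in which one occurrence of $v$ consumes from both $a_1$ and $a_2$. It is absent when each branch passes through a single place, as in the Petri-net form of Definition~\ref{def:or-semantics}, and it puts $a_1a_2v$ into the language of connected process diagrams. The join side fails dually once the family beyond the OR-join has two such members, for instance when an XOR-split follows it.

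Your $\Sigma$ is still a deterministic function of $B$, so the bare existence claim survives. But it is not the signature the OR reading describes, and your expansion is not the signature-neutral replacement the paper's proof invokes.

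The repair is to subdivide each branch flow by a fresh XOR node $m_y$. On the split side, $m_y$ receives an edge from every $\mathrm{AND}_U$ with $y\in U$ and sends the single edge to $y$; the join side is dual. This is exactly how the Petri-net form routes each branch through one place. With this change:
\begin{itemize}
  \item $y$ keeps one edge into the gadget, so a walk entering from $y$ takes at $m_y$ the XOR union of identical families and recovers $R$ itself;
  \item your computation at the head still covers walks entering from the stem;
  \item a path that re-enters the gadget is cut at or before its head, as it would be at the original gateway.
\end{itemize}

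No pairing of split and join through a shared $U$ is needed, because the walk stops at the first activity and a reached family keeps no record of the committed subset. Finally, tag-independence of a $1$-ary gateway follows directly from the XOR and AND clauses coinciding when $N(x)$ has at most one element. It does not follow from the insertion invariance you cite.
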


\begin{proof}
  The delta is the inclusive choice. By Definition~\ref{def:or-semantics} each OR gateway is the
  $\mathcal{P}^+(S)$ family of XOR-selected AND-bindings, a finite replacement that changes no
  signature, and afterwards every mediator is an AND or an XOR, a $1$-ary gateway being the
  degree-$(1,1)$ case, and an edge joining two activities directly is the degenerate case,
  traversed in one step (Definition~\ref{def:lm-graph}), and inserting a type-preserving $1$-ary
  gateway on such an edge changes no reached family and so no generator. The schema of Section~\ref{sec:notation-by-notation} then applies at the instance
  map above, the tasks being the activities, each edge carrying its flow type $\omega$
  (Definition~\ref{def:bpmn}), and finiteness of $N$ giving the conditions of
  Definition~\ref{def:lm-graph}. The mediators are the same AND/XOR kinds as the Petri- and
  causal-net instances, and the reach families are finite even under cyclic gateways, since the
  graph is finite and a branch stops at any node already on its path
  (Definition~\ref{def:contexts-general}).
\end{proof}

The presentation needs no structural precondition on $B$, and the OR-join, which operational token semantics settle only with global lookahead~\cite{dijkmanSemanticsAnalysisBusiness2008}, is handled denotationally.

%% file: tikz/running_example/re_bpmn_example.tex
\begin{tikzpicture}[scale=0.95, inline,
    evt/.style={circle, draw=black!70, fill=black!6, minimum size=7mm, font=\small},
    task/.style={rectangle, draw=black!65, fill=black!4, minimum size=7.5mm, font=\small},
    gw/.style={diamond, draw=black!60, fill=black!6, inner sep=1pt, minimum size=7mm, font=\small}]
  \node[evt]  (g1)  at (0,0)      {$\bot$};
  \node[gw]   (ag1) at (2.0,0)    {$+$};
  \node[gw]   (xg1) at (3.8,1.1)  {$\times$};
  \node[task] (b)   at (5.6,1.8)  {$b$};
  \node[task] (e)   at (5.6,0.5)  {$e$};
  \node[gw]   (xg2) at (7.4,1.1)  {$\times$};
  \node[task] (c)   at (5.6,-1.4) {$c$};
  \node[gw]   (ag2) at (9.2,0)    {$+$};
  \node[task] (s)   at (10.9,0)   {$s$};
  \node[evt]  (g2)  at (12.6,0)   {$\top$};

  \draw[blueflow] ([yshift=2.8pt]g1.east) -- ([yshift=2.8pt]ag1.west);
  \draw[redflow]  ([yshift=-2.8pt]g1.east) -- ([yshift=-2.8pt]ag1.west);
  \draw[blueflow] (ag1) -- (xg1);
  \draw[redflow]  (ag1) -- (c);
  \draw[blueflow] (xg1) -- (b);
  \draw[blueflow] (xg1) -- (e);
  \draw[blueflow] (b)   -- (xg2);
  \draw[blueflow] (e)   -- (xg2);
  \draw[blueflow] (xg2) -- (ag2);
  \draw[redflow]  (c)   -- (ag2);
  \draw[blueflow] ([yshift=2.8pt]ag2.east) -- ([yshift=2.8pt]s.west);
  \draw[redflow]  ([yshift=-2.8pt]ag2.east) -- ([yshift=-2.8pt]s.west);
  \draw[blueflow] ([yshift=2.8pt]s.east) -- ([yshift=2.8pt]g2.west);
  \draw[redflow]  ([yshift=-2.8pt]s.east) -- ([yshift=-2.8pt]g2.west);
\end{tikzpicture}

%% file: tikz/running_example/re_bpmn_cospans.tex
\begin{tikzpicture}[inline,
    bspd/.style={spider, fill=bluecol},
    rspd/.style={spider, fill=redcol},
    mbox/.style={draw=black, rounded corners=4pt, fill=white, minimum size=0.75cm, font=\small},
    plabel/.style={font=\tiny}]

  \node[font=\footnotesize] at (2.7,1.4) {decorated cospan};
  \node[font=\footnotesize] at (9.0,1.4) {string diagram};

  \foreach \row/\g in {0/b, -1.5/e}{%
    \begin{scope}[yshift=\row cm]
      \node[font=\small] at (-2.3,0) {$g_{\g}$};
      \node[cospanblue, minimum height=0.6cm] (L) at (0,0) {};
      \node[bspd] at (0,0) {};
      \node at (0.9,0) {$\rightarrow$};
      \node[cospangrey, minimum height=1.0cm, minimum width=2.6cm] (A) at (2.7,0) {};
      \node[mbox] (f) at (2.7,0) {$\g$};
      \node[bspd] (ai) at (1.65,0) {};
      \node[bspd] (ao) at (3.75,0) {};
      \draw[bluewire] (ai) -- (f.west);
      \draw[bluewire] (f.east) -- (ao);
      \node at (4.5,0) {$\leftarrow$};
      \node[cospanblue, minimum height=0.6cm] (R) at (5.4,0) {};
      \node[bspd] at (5.4,0) {};
      \node[plabel, anchor=east] at ([shift={(-3pt,0cm)}]L.west) {$(\bot,\mathrm{o},\g)$};
      \node[plabel, anchor=west] at ([shift={(3pt,0cm)}]R.east) {$(\g,\mathrm{o},s)$};
      \node at (7.4,0) {$\mapsto$};
      \node[morphismblue, minimum size=0.75cm] (sd) at (9.0,0) {$\g$};
      \draw[bluewire] (sd.west) -- ++(-0.55,0);
      \draw[bluewire] (sd.east) -- ++(0.55,0);
    \end{scope}}

  \begin{scope}[yshift=-3.0cm]
    \node[font=\small] at (-2.3,0) {$g_c$};
    \node[cospanblue, minimum height=0.6cm] (cL) at (0,0) {};
    \node[rspd] at (0,0) {};
    \node at (0.9,0) {$\rightarrow$};
    \node[cospangrey, minimum height=1.0cm, minimum width=2.6cm] (cA) at (2.7,0) {};
    \node[mbox] (cf) at (2.7,0) {$c$};
    \node[rspd] (cai) at (1.65,0) {};
    \node[rspd] (cao) at (3.75,0) {};
    \draw[redwire] (cai) -- (cf.west);
    \draw[redwire] (cf.east) -- (cao);
    \node at (4.5,0) {$\leftarrow$};
    \node[cospanblue, minimum height=0.6cm] (cR) at (5.4,0) {};
    \node[rspd] at (5.4,0) {};
    \node[plabel, anchor=east] at ([shift={(-3pt,0cm)}]cL.west) {$(\bot,\mathrm{it},c)$};
    \node[plabel, anchor=west] at ([shift={(3pt,0cm)}]cR.east) {$(c,\mathrm{it},s)$};
    \node at (7.4,0) {$\mapsto$};
    \node[morphismblue, minimum size=0.75cm] (csd) at (9.0,0) {$c$};
    \draw[redwire] (csd.west) -- ++(-0.55,0);
    \draw[redwire] (csd.east) -- ++(0.55,0);
  \end{scope}

  \begin{scope}[yshift=-4.5cm]
    \node[font=\small] at (-2.3,0) {$g_s$};
    \node[cospanblue, minimum height=0.9cm] (sL) at (0,0) {};
    \node[bspd] at (0,0.20) {};
    \node[rspd] at (0,-0.20) {};
    \node at (0.9,0) {$\rightarrow$};
    \node[cospangrey, minimum height=1.0cm, minimum width=2.6cm] (sA) at (2.7,0) {};
    \node[mbox, minimum height=0.8cm] (sf) at (2.7,0) {$s$};
    \node[bspd] (sao) at (1.65,0.20) {};
    \node[rspd] (sai) at (1.65,-0.20) {};
    \node[bspd] (sbo) at (3.75,0.20) {};
    \node[rspd] (sbi) at (3.75,-0.20) {};
    \draw[bluewire] (sao) -- (sf.160);
    \draw[redwire]  (sai) -- (sf.200);
    \draw[bluewire] (sf.20)  -- (sbo);
    \draw[redwire]  (sf.-20) -- (sbi);
    \node at (4.5,0) {$\leftarrow$};
    \node[cospanblue, minimum height=0.9cm] (sR) at (5.4,0) {};
    \node[bspd] at (5.4,0.20) {};
    \node[rspd] at (5.4,-0.20) {};
    \node[plabel, anchor=east] at ([shift={(-3pt,0.20cm)}]sL.west) {$(b,\mathrm{o},s)$};
    \node[plabel, anchor=east] at ([shift={(-3pt,-0.20cm)}]sL.west) {$(c,\mathrm{it},s)$};
    \node[plabel, anchor=west] at ([shift={(3pt,0.20cm)}]sR.east) {$(s,\mathrm{o},\top)$};
    \node[plabel, anchor=west] at ([shift={(3pt,-0.20cm)}]sR.east) {$(s,\mathrm{it},\top)$};
    \node at (7.4,0) {$\mapsto$};
    \node[morphismblue, minimum height=0.95cm] (ssd) at (9.0,0) {$s$};
    \draw[bluewire] ($(ssd.south west)!0.70!(ssd.north west)$) -- ++(-0.55,0);
    \draw[redwire]  ($(ssd.south west)!0.30!(ssd.north west)$) -- ++(-0.55,0);
    \draw[bluewire] ($(ssd.south east)!0.70!(ssd.north east)$) -- ++(0.55,0);
    \draw[redwire]  ($(ssd.south east)!0.30!(ssd.north east)$) -- ++(0.55,0);
    \node[plabel, black!60] at (2.7,-0.72) {order from $b$ (or $e$)};
  \end{scope}
\end{tikzpicture}

%% file: tikz/running_example/re_bpmn_or.tex
\begin{tikzpicture}[inline,
    evt/.style={circle, draw=black!70, fill=black!6, minimum size=6.5mm, font=\small},
    task/.style={rectangle, draw=black!65, fill=black!4, minimum size=7mm, font=\small},
    gw/.style={diamond, draw=black!60, fill=black!6, inner sep=0.5pt, minimum size=7mm,
               font=\scriptsize},
    bspd/.style={spider, fill=bluecol},
    mbox/.style={draw=black, rounded corners=4pt, fill=white, minimum size=0.75cm, font=\small},
    plabel/.style={font=\tiny}]

  \node[font=\footnotesize, anchor=west] at (-1.6,1.9) {(a) OCBPMN OR gadget};
  \node[evt]  (g1) at (0,0)     {$\bot$};
  \node[gw]   (os) at (1.5,0)   {OR};
  \node[task] (v)  at (3.0,0.7) {$v$};
  \node[task] (u)  at (3.0,-0.7){$u$};
  \node[gw]   (oj) at (4.5,0)   {OR};
  \node[task] (s)  at (6.0,0)   {$s$};
  \draw[blueflow] (g1) -- (os);
  \draw[blueflow] (os) -- (v);  \draw[blueflow] (os) -- (u);
  \draw[blueflow] (v)  -- (oj);  \draw[blueflow] (u) -- (oj);
  \draw[blueflow] (oj) -- (s);
  \node[font=\footnotesize, anchor=west] at (7.1,0.35)
    {denotes the XOR over $\mathcal{P}^+(\{v,u\})$};
  \node[font=\footnotesize, anchor=west] at (7.1,-0.35)
    {$=\{\{v\},\{u\},\{v,u\}\}$ (last is $\mathrm{AND}(v,u)$)};

  \node[font=\footnotesize, anchor=west] at (-1.6,-1.5) {(b) the three contexts of $s$};
  \begin{scope}[xshift=1.0cm]
  \node[font=\footnotesize] at (2.7,-2.1) {decorated cospan};
  \node[font=\footnotesize] at (9.0,-2.1) {string diagram};
  \foreach \row/\lab/\src in {-3.1/{U=\{v\}}/v, -4.6/{U=\{u\}}/u}{%
    \begin{scope}[yshift=\row cm]
      \node[font=\scriptsize] at (-2.3,0) {$\lab$};
      \node[cospanblue, minimum height=0.6cm] (L) at (0,0) {};
      \node[bspd] at (0,0) {};
      \node at (0.9,0) {$\rightarrow$};
      \node[cospangrey, minimum height=1.0cm, minimum width=2.6cm] (A) at (2.7,0) {};
      \node[mbox] (f) at (2.7,0) {$s$};
      \node[bspd] (ai) at (1.65,0) {};
      \node[bspd] (ao) at (3.75,0) {};
      \draw[bluewire] (ai) -- (f.west);
      \draw[bluewire] (f.east) -- (ao);
      \node at (4.5,0) {$\leftarrow$};
      \node[cospanblue, minimum height=0.6cm] (R) at (5.4,0) {};
      \node[bspd] at (5.4,0) {};
      \node[plabel, anchor=east] at ([shift={(-3pt,0cm)}]L.west) {$(\src,\mathrm{o},s)$};
      \node[plabel, anchor=west] at ([shift={(3pt,0cm)}]R.east) {$(s,\mathrm{o},\top)$};
      \node at (7.4,0) {$\mapsto$};
      \node[morphismblue, minimum size=0.75cm] (sd) at (9.0,0) {$s$};
      \draw[bluewire] (sd.west) -- ++(-0.55,0);
      \draw[bluewire] (sd.east) -- ++(0.55,0);
    \end{scope}}
  \begin{scope}[yshift=-6.1cm]
    \node[font=\scriptsize] at (-2.3,0) {$U=\{v,u\}$};
    \node[cospanblue, minimum height=0.9cm] (L) at (0,0) {};
    \node[bspd] at (0,0.20) {};
    \node[bspd] at (0,-0.20) {};
    \node at (0.9,0) {$\rightarrow$};
    \node[cospangrey, minimum height=1.0cm, minimum width=2.6cm] (A) at (2.7,0) {};
    \node[mbox, minimum height=0.8cm] (f) at (2.7,0) {$s$};
    \node[bspd] (ai1) at (1.65,0.20) {};
    \node[bspd] (ai2) at (1.65,-0.20) {};
    \node[bspd] (ao) at (3.75,0) {};
    \draw[bluewire] (ai1) -- (f.160);
    \draw[bluewire] (ai2) -- (f.200);
    \draw[bluewire] (f.east) -- (ao);
    \node at (4.5,0) {$\leftarrow$};
    \node[cospanblue, minimum height=0.6cm] (R) at (5.4,0) {};
    \node[bspd] at (5.4,0) {};
    \node[plabel, anchor=east] at ([shift={(-3pt,0.20cm)}]L.west) {$(v,\mathrm{o},s)$};
    \node[plabel, anchor=east] at ([shift={(-3pt,-0.20cm)}]L.west) {$(u,\mathrm{o},s)$};
    \node[plabel, anchor=west] at ([shift={(3pt,0cm)}]R.east) {$(s,\mathrm{o},\top)$};
    \node at (7.4,0) {$\mapsto$};
    \node[morphismblue, minimum height=0.95cm] (sd) at (9.0,0) {$s$};
    \draw[bluewire] ($(sd.south west)!0.70!(sd.north west)$) -- ++(-0.55,0);
    \draw[bluewire] ($(sd.south west)!0.30!(sd.north west)$) -- ++(-0.55,0);
    \draw[bluewire] (sd.east) -- ++(0.55,0);
  \end{scope}
  \end{scope}
\end{tikzpicture}

%% file: sections/07-validation.tex
\section{A ground-truth recovery study}\label{sec:validation}

We exercise the framework in two complementary ways. First we show that the four
worked examples together exercise the object-centric constructs the construction must absorb, one
notation at a time. Then we run the construction end to end on a constructed object-centric log,
discovering two notations from the same data and checking that both recover a known ground-truth
signature. The second part is the detailed case study, and it exhibits the notation-independence of
Section~\ref{sec:general-framework} on discovered models rather than on hand-built ones.

\subsection{Coverage across the notation space}\label{subsec:coverage}

Table~\ref{tab:coverage} records the object-centric construct each notation section
stresses. Every notation is object-typed throughout, and each adds one feature the construction
absorbs uniformly. Taken together the four examples cover typing, object multiplicity,
true concurrency, silent routing, iteration, and inclusive choice, and the same
construction reduces each to its cospan signature.

\begin{table*}[t]
  \centering
  \caption{Object-centric constructs exercised by the four worked examples. Each notation is
    reduced to its cospan signature by the single construction of
    Section~\ref{sec:general-framework}.}
  \label{tab:coverage}
  \begin{tabular}{l c c c c c c l}
    \hline
    & typing & mult. & conc. & silent & loop & OR & example \\
    \hline
    OCPN   & \checkmark &            & \checkmark & \checkmark &            &            & Ex.~\ref{ex:running-pn-cospan} \\
    OCCN   & \checkmark & \checkmark & \checkmark &            &            &            & Ex.~\ref{ex:running-cnet-cospan} \\
    OCPT   & \checkmark &            & \checkmark &            & \checkmark &            & Ex.~\ref{ex:running-ptree-cospan}, \ref{ex:running-ptree-loop} \\
    OCBPMN & \checkmark &            & \checkmark &            &            & \checkmark & Ex.~\ref{ex:running-bpmn-cospan}, \ref{ex:bpmn-or} \\
    \hline
  \end{tabular}
\end{table*}

\subsection{Two discoveries, nested signatures}\label{subsec:recovery}

The case study runs the construction on a single object-centric event log and asks
how the signatures of two different discovery algorithms, applied to the same data, relate to each
other and to the process that produced the data. We fix a known ground-truth model, generate a log
from it, discover an object-centric Petri net and an object-centric causal net, extract a signature
from each, and compare the signatures and their runs.

The ground-truth model is a chest-pain serial-troponin rule-out over three object types, a
patient thread and two test channels for troponin and ECG. A round examines the patient and reads one
troponin and one ECG. The round continues only while both come back clear, and it stops at a
disposition otherwise. We define the model directly as a signature $\Sigma_{\mathrm{GT}}$ in the form of
Section~\ref{sec:general-framework}, with no model to extract it from. Its 48 generators are written
from these rules, each leg carrying one object. A round sends the patient to a next step that agrees
with both results, each result goes on to an activity that reads it, and a case leaves through one
disposition. The model is object-centric by construction and carries a loop.

We keep the ground-truth model free of key-bound object-distribution relations, the
shared-key partitions that an object-centric causal net can express but an object-centric Petri net
cannot recover. This is deliberate. On a clinically meaningful process that stays within the
constructs both notations recover, we can ask whether two independent discoveries agree, rather than
picking an example whose behaviours the two notations could never match and where a disagreement
would say nothing.

We build the object-centric event log from $\Sigma_{\mathrm{GT}}$ by a selection of its closed
runs. A start generator places one patient, one lab and one imaging object before \emph{arrive}, an
end generator takes them after \emph{depart}, and the selected runs hold at most two
re-examinations. Every linear extension of every selected run becomes one case. Each leg carries one
object, so each object type forms one chain of typed wires through a run, and a case relates one
patient, one lab and one imaging object. The log is therefore constructed rather than observed, a
point we return to below. From the log we discover an object-centric Petri net with the
standard algorithm of \citet{vanderaalstDiscoveringObjectCentricPetri2020} and an object-centric
causal net with the object-centric causal-net miner of \citet{lissObjectCentricCausalNets2025}. Each discovered model is passed through the construction of
Section~\ref{sec:general-framework} to extract its cospan signature. As diagrams the two discovered
models look quite different (Figure~\ref{fig:ed-native-models}), and their signatures below
record exactly how they differ.

\begin{figure*}[tp]
  \centering
  \begin{subfigure}[b]{0.85\linewidth}
    \includegraphics[width=\linewidth]{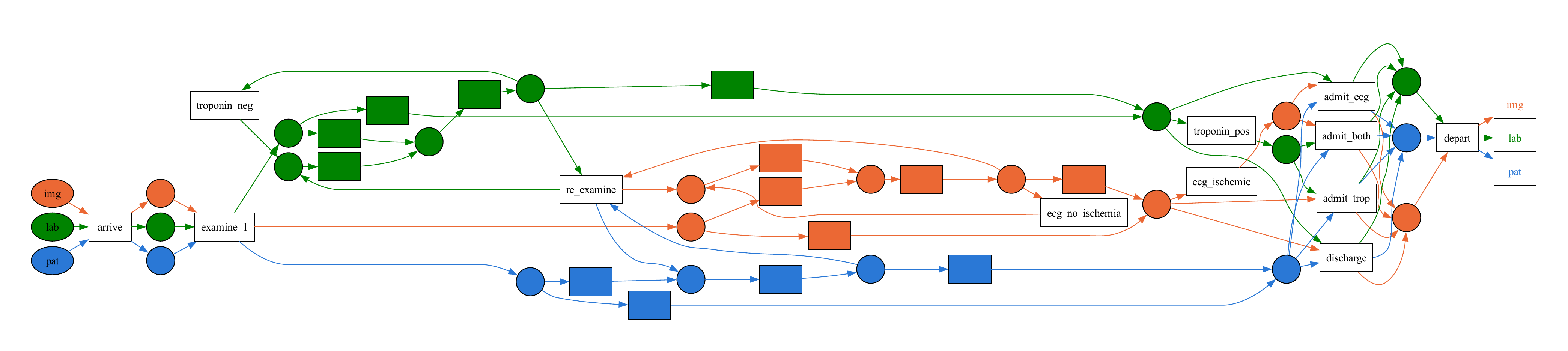}
    \caption{discovered object-centric Petri net}
  \end{subfigure}
  \\[1.5ex]
  \begin{subfigure}[b]{0.85\linewidth}
    \includegraphics[width=\linewidth]{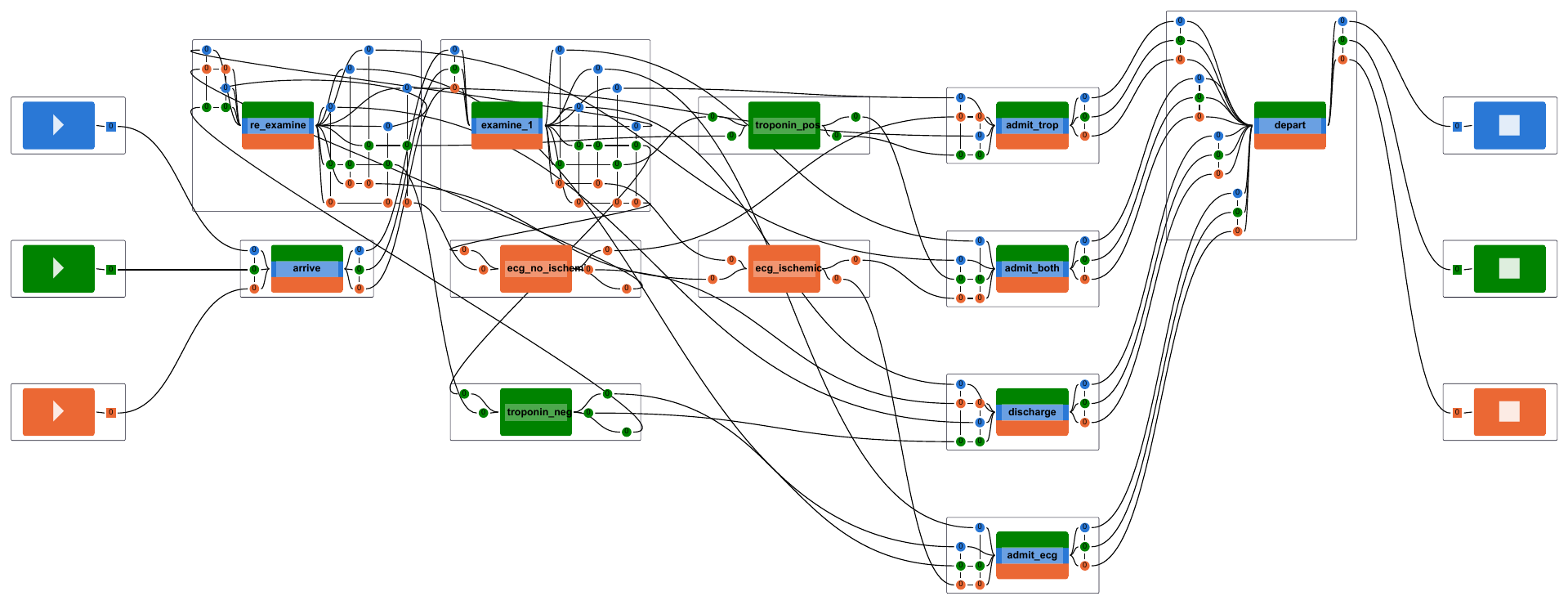}
    \caption{discovered object-centric causal net, in the reference interactive visualiser of
      \citet{lissObjectCentricCausalNets2025}}
  \end{subfigure}
  \caption{Models discovered from the same ED chest-pain log. Panel~(a) is an object-centric Petri
    net and panel~(b) an object-centric causal net, drawn in the reference interactive visualiser of
    Liss et al. The typed wires of the two extracted signatures are compared in
    Figure~\ref{fig:ed-wire-incidence}.}
  \label{fig:ed-native-models}
\end{figure*}

The extraction of Section~\ref{sec:general-framework} gives the causal net 48 generators and the
Petri net 2514. The causal-net signature is equal to $\Sigma_{\mathrm{GT}}$, generator for
generator, with the same labels, typed wires and constraint systems and the same start and end
generators. Every generator of $\Sigma_{\mathrm{GT}}$ is also equal to a generator of the Petri net,
so
\[
  \Sigma_{\mathrm{CN}}=\Sigma_{\mathrm{GT}}\subsetneq\Sigma_{\mathrm{PN}}.
\]
By Theorem~\ref{thm:canonical-presentation} the causal net has the language of the ground truth under
every selection, and by Corollary~\ref{cor:signature-inclusion} the Petri net's language contains it
under every monotone selection. The rest of this section shows
where the two differ, first on typed wires and generators and then on runs.

\begin{figure}[t]
  \centering
  \includegraphics[width=0.9\linewidth]{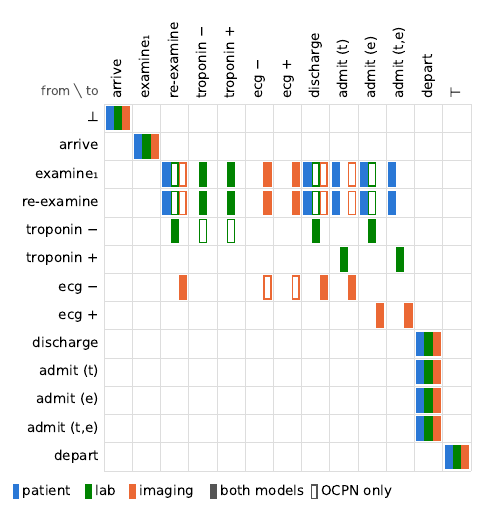}
  \caption{Typed wires of the two extracted signatures. A cell in row $a$ and column $b$ holds the
    typed wires $(a,w,b)$, coloured by object type. Solid cells occur in both signatures, 49 typed
    wires in all. The 16 hollow cells occur only in the object-centric Petri net, and each is a silent
    skip in one per-type net of the discovered Petri net.}
  \label{fig:ed-wire-incidence}
\end{figure}

Figure~\ref{fig:ed-wire-incidence} compares the typed wires. The causal net's 49 typed wires all
occur in the Petri net, which has 16 more. Each of the 16 is a silent skip in one of the per-type nets
the Petri-net discovery produces. The lab and imaging wires from \emph{examine} to
\emph{re-examine}, for instance, pass two silent transitions that route a test object past its test.

The wire chart is a projection of the signature and records which typed wires meet at an activity
and not which of them occur together in one generator. Every generator of either model takes one
typed wire per object type on each side, so choosing such a wire in every possible way bounds the
number of generators an activity can have. The Petri net meets
that bound at every activity, since its discovery builds one net per object type and joins them at
the activity, so the chart determines all 2514 of its generators. The causal net keeps 48 of the 153
combinations its own typed wires allow, because its bindings keep the object types of an activity
correlated. At \emph{re-examine} the chart allows 40 combinations and the causal net has 10, two input
bindings with five output bindings, since the next step of the patient must agree with both test
results. Of the Petri net's 2466 further generators, 2361 use a silent-skip wire and 105 combine
shared typed wires in ways the causal net never records.

The runs are compared under one selection. A start generator places one patient, one lab and one
imaging object before \emph{arrive}, and an end generator takes them after \emph{depart}. Both
signatures receive the same start and end generators, so the selection is monotone and the
inclusion of signatures carries over to their closed runs. The runs of both models are built from
the same pieces. A round holds some clear troponin and ECG results and ends in a re-examination,
\[
  R_{b,c}=\bigl((\mathrm{Tro}^{-})^{b}\otimes(\mathrm{ECG}^{-})^{c}\bigr)\,;\,\mathrm{Rex},
\]
where a power $f^{b}=f;\cdots;f$ is $b$ copies of $f$ composed in sequence, here $b$ clear troponin
results on the lab wire and $c$ clear ECG results on the imaging wire. A list $u=((b_1,c_1),\dots,(b_a,c_a))$ of any length $a\geq0$ gives
$R_u=R_{b_1,c_1};\cdots;R_{b_a,c_a}$, with $R_u$ the identity for the empty list. The order of the
list matters, since two rounds swapped give a different diagram.

Every closed run of either model is one of four classes, one per disposition $D$,
\[
  \mathrm{Arr}\,;\,\mathrm{Exa}_1\,;\,X\,;\,F_D\,;\,D\,;\,\mathrm{Dep},
\]
with $X$ and $F_D$ given in Table~\ref{tab:ed-run-classes}. The causal net's rounds are all clear
rounds $C_1=R_{1,1}$, so $X=C_1^{\,n}$ and a run has the single index $n$, and before its disposition
it holds exactly one result per test. These four classes are the closing diagrams of
Figure~\ref{fig:ed-decomposition} with $C_1$ inserted $n$ times. The Petri net takes any list $u$ and
any counts $b'$ and $c'$ in the final part, and Figure~\ref{fig:ed-pn-class} draws its class for
admission on the ECG.

\begin{table*}[t]
  \centering
  \caption{Run classes of the two discovered models, one per disposition $D$. Every closed run is
    $\mathrm{Arr};\mathrm{Exa}_1;X;F_D;D;\mathrm{Dep}$, with $X=C_1^{\,n}$ for the causal net and
    $X=R_u$ for the Petri net. A power is repeated sequential composition, and all indices range
    over $\mathbb{N}$.}
  \label{tab:ed-run-classes}
  \small
  \begin{tabular}{@{}l l l@{}}
    \hline
    $D$ & $F_D$, causal net & $F_D$, Petri net \\
    \hline
    Dis & $\mathrm{Tro}^{-}\otimes\mathrm{ECG}^{-}$ & $(\mathrm{Tro}^{-})^{b'}\otimes(\mathrm{ECG}^{-})^{c'}$ \\
    Adm$_{\mathrm{t}}$ & $\mathrm{Tro}^{+}\otimes\mathrm{ECG}^{-}$ & $((\mathrm{Tro}^{-})^{b'};\mathrm{Tro}^{+})\otimes(\mathrm{ECG}^{-})^{c'}$ \\
    Adm$_{\mathrm{e}}$ & $\mathrm{Tro}^{-}\otimes\mathrm{ECG}^{+}$ & $(\mathrm{Tro}^{-})^{b'}\otimes((\mathrm{ECG}^{-})^{c'};\mathrm{ECG}^{+})$ \\
    Adm$_{\mathrm{te}}$ & $\mathrm{Tro}^{+}\otimes\mathrm{ECG}^{+}$ & $((\mathrm{Tro}^{-})^{b'};\mathrm{Tro}^{+})\otimes((\mathrm{ECG}^{-})^{c'};\mathrm{ECG}^{+})$ \\
    \hline
  \end{tabular}
\end{table*}

\begin{figure*}[tp]
  \centering
  \begin{tabular}{@{}c@{\hspace{0.03\linewidth}}c@{}}
    \resizebox{0.48\linewidth}{!}{\input{tikz/ed_chest_pain/closing_1_discharge}\unskip} &
    \resizebox{0.48\linewidth}{!}{\input{tikz/ed_chest_pain/closing_2_admit_trop}\unskip} \\
    \resizebox{0.48\linewidth}{!}{\input{tikz/ed_chest_pain/closing_3_admit_ecg}\unskip} &
    \resizebox{0.48\linewidth}{!}{\input{tikz/ed_chest_pain/closing_4_admit_both}\unskip} \\
    \resizebox{0.48\linewidth}{!}{\input{tikz/ed_chest_pain/loop_L1}\unskip} &
    \resizebox{0.48\linewidth}{!}{\input{tikz/ed_chest_pain/legend}\unskip} \\
  \end{tabular}
  \caption{The four run classes of the causal net, drawn with one horizontal channel per object
    type and without the \emph{arrive} and \emph{depart} boxes. Each closing diagram is one
    disposition, and the clear round $C_1=R_{1,1}$ is inserted $n$ times at the dashed seam. The
    dashed seams at both ends of $C_1$ mark that same interface. Box labels are abbreviated, with the
    key beside $C_1$.}
  \label{fig:ed-decomposition}
\end{figure*}

\begin{figure*}[tp]
  \centering
  \resizebox{0.9\linewidth}{!}{\input{tikz/ed_chest_pain/ocpn_class3_family}\unskip}
  \caption{The Petri net's run class for admission on the ECG. The dashed frame is one round
    $R_{b_i,c_i}$, and the frames for $i=1,\dots,a$ compose in order to $R_u$. A number beside a box
    counts its copies composed in sequence on that wire. The causal net's class for the same
    disposition sets $u=((1,1),\dots,(1,1))$, $b'=1$ and $c'=0$.}
  \label{fig:ed-pn-class}
\end{figure*}

A round of the Petri net may hold any number
of clear results on each test, including none, and the final part may repeat a clear result before
an abnormal one or hold no result for a test at all. Each of these uses a silent-skip wire of
Figure~\ref{fig:ed-wire-incidence}. A round without tests joins \emph{examine} or \emph{re-examine} to
\emph{re-examine} on the lab and imaging wires, a repeated result joins a clear test result to itself
or to the abnormal result of the same test, and a
disposition without a test joins \emph{examine} or \emph{re-examine} to the disposition. The two
abnormal results of an admission on both tests lie on different wires and are concurrent in one
diagram, so the two orders in which they can be logged are two linear extensions of one run and not
two classes. With every index at zero both models have four base runs. The causal net's are its four
closing diagrams. Three of the Petri net's omit a clear result the causal net requires, and only
admission on both tests keeps both.

The classes are exact. Enumerating every closed run of up to twelve occurrences, each run of the
Petri net matches one index tuple of one class, and each tuple in that bound is a run, 4180, 1596,
1596 and 609 runs for the four dispositions. The causal net's twelve runs are its classes with
$n\leq2$. The minimal automata of the two languages, with 13 and 8 states, decide the inclusion
without a length bound. The causal net's language is strictly included in the Petri net's, and the
shortest word only the Petri net admits is \emph{arrive examine discharge depart}. The causal net
therefore recovers the ground-truth signature exactly, and the Petri net's signature and language
strictly contain it.

\subsection{Applicability and limitations}\label{subsec:threats}

The extraction and comparison are implemented. A discovery output in a standard
interchange format is read into a signature, and the certificate of
Section~\ref{sec:conversion-framework} decides structural equivalence by set-equality of the two
signatures, which is convention-free and strictly finer than trace equivalence by
Lemma~\ref{lem:equiv-strictness}. Set-inclusion of signatures decides the inclusion used here.

The statement this supports runs in one direction. Equal signatures give equal trace languages,
and included signatures give included trace languages under a monotone selection. Models with
different signatures are structurally different, and whether they nonetheless agree on their trace
languages is a separate question, which the automata settle here. An inclusion of signatures, as
found here, is therefore a stronger statement than an inclusion of trace languages would be.

\paragraph{Contents of the log.}
The log is a complete enumeration and not a sample. It holds 576 events over 168 objects under
twelve event labels, one case for every linear extension of every closed run of
$\Sigma_{\mathrm{GT}}$ with at most two re-examinations. The run-space is a disposition with
$n$ copies of one clear round, the $C_1$ of Figure~\ref{fig:ed-decomposition}, inserted after the
first examination, so a single
integer $n$ characterises every unrolling and nested loops would need a set of numbers. The depth realised in the log is a free parameter of the enumeration rather than a
property of the model, and the recovery does not depend on it, because the run classes of
Table~\ref{tab:ed-run-classes} hold for every $n$. Fixing $n$ from data is an
estimation question outside the scope of this paper. The log has to be complete, because a
recovery measurement asks whether discovery returns the model that produced the log, and that
question only makes sense when the log realises that model in full. The result should be read as
establishing that the extraction is well defined and non-vacuous on discovery output, and not as
a measurement of miner robustness under noise or incompleteness. The log records the start and
end of each case as the activities \emph{arrive} and \emph{depart}, which both signatures contain.
The start and end generators of the run comparison are separate from them, belong to
$\Sigma_\gamma$ (Section~\ref{subsec:analysis}) and are stated as its assumption.

\paragraph{Reasons for a constructed log.}
The study measures recovery, which compares each discovered signature against the signature of
the model that produced the log. A ground-truth model is therefore a precondition of the
measurement rather than a property of the data that one could go looking for. No log observed in
a running information system supplies one, because the only candidate target available for such a
log is itself the output of a discovery algorithm, and scoring a discovery against a discovery
measures agreement between two miners instead of recovery of a known model. This bounds what any
public data could contribute and is not a gap in our search.

The simulated public logs do not close the gap either. The object-centric order-management log
of the OCEL 2.0 collection is generated from a coloured Petri net, and a coloured Petri net types
its tokens by colour set rather than by object type. OCEL 2.0 and coloured Petri nets are not
intrinsically coupled, and we draw this observation only for the order-management log we examined.
Reading such a model as an object-centric one fixes a
correspondence between colour sets and object types that the published artefact leaves open, so
the object structure would be ours and not the dataset's. In any case the object-centric Petri nets published
alongside these logs are obtained from the logs by discovery, and are therefore targets of the
same kind as the models under comparison.

Everything the construction does not decide is left to the two miners. Both are independently authored
third-party implementations, neither was written with cospan signatures in mind, and each targets
a different notation. The causal net's recovery of the ground-truth signature, and the inclusion of
that signature in the Petri net's, are therefore not artefacts of how the log was generated. A construction reading the generator
rather than the miners' output would agree with itself.

The absence of a ground truth in observed data is also why comparing signatures by equality and
inclusion is worth having. Where no reference model exists, two models can still be compared against each
other, and that comparison is the subject of Section~\ref{sec:conversion-framework}.

Three limitations bound these claims. The log is noise-free and enumerates the behaviour of the
ground-truth model completely, so nothing here measures what the extraction does under incomplete
traces, mislabelled activities, or missing object relations. The four worked examples are
deliberately minimal. The reverse maps of Section~\ref{sec:conversion-framework} are
one-directional. The implementation is released as the \texttt{proc-posets} package alongside the
paper.

%% file: tikz/ed_chest_pain/closing_1_discharge.tex
\begin{tikzpicture}[
    act/.style={morphismblue, font=\footnotesize, inner sep=1pt},
    patw/.style={thick, draw=typepat},
    labw/.style={thick, draw=typelab},
    imgw/.style={thick, draw=typeimg},
    seam/.style={dashed, thick, gray!70},
    seamdot/.style={circle, draw=gray!65, fill=gray!25, minimum size=3.2pt, inner sep=0pt},
    ptitle/.style={font=\small\bfseries, inner sep=2pt}]

  \useasboundingbox (-0.15,-0.55) rectangle (5.45,2.6);
  \begin{scope}[xshift=0.00cm]

  \draw[patw] (0,1.30) -- (5.25,1.30);
  \draw[labw] (0,0.65) -- (5.25,0.65);
  \draw[imgw] (0,0.00) -- (5.25,0.00);

  \node[act, minimum width=0.95cm, minimum height=1.9cm] at (0.82,0.65) {Exa$_1$};
  \node[act, minimum width=0.95cm, minimum height=0.48cm] at (2.62,0.65) {Tro$-$};
  \node[act, minimum width=0.95cm, minimum height=0.48cm] at (2.62,0.00) {ECG$-$};
  \node[act, minimum width=0.95cm, minimum height=1.9cm] at (4.42,0.65) {Dis};

  \draw[seam] (1.72,-0.40) -- (1.72,1.72);
  \foreach \y in {1.30,0.65,0.00} \node[seamdot] at (1.72,\y) {};
  \node[font=\small, gray!80, anchor=south, inner sep=1pt] at (1.72,1.72) {$C_1^{\,n}$};

  \node[ptitle] at (2.62,2.35) {Closing 1 (both clear)};
  \end{scope}
\end{tikzpicture}

%% file: tikz/ed_chest_pain/closing_2_admit_trop.tex
\begin{tikzpicture}[
    act/.style={morphismblue, font=\footnotesize, inner sep=1pt},
    patw/.style={thick, draw=typepat},
    labw/.style={thick, draw=typelab},
    imgw/.style={thick, draw=typeimg},
    seam/.style={dashed, thick, gray!70},
    seamdot/.style={circle, draw=gray!65, fill=gray!25, minimum size=3.2pt, inner sep=0pt},
    ptitle/.style={font=\small\bfseries, inner sep=2pt}]

  \useasboundingbox (-0.15,-0.55) rectangle (5.45,2.6);
  \begin{scope}[xshift=0.00cm]

  \draw[patw] (0,1.30) -- (5.25,1.30);
  \draw[labw] (0,0.65) -- (5.25,0.65);
  \draw[imgw] (0,0.00) -- (5.25,0.00);

  \node[act, minimum width=0.95cm, minimum height=1.9cm] at (0.82,0.65) {Exa$_1$};
  \node[act, minimum width=0.95cm, minimum height=0.48cm] at (2.62,0.65) {Tro$+$};
  \node[act, minimum width=0.95cm, minimum height=0.48cm] at (2.62,0.00) {ECG$-$};
  \node[act, minimum width=0.95cm, minimum height=1.9cm] at (4.42,0.65) {Adm$_{\mathrm{t}}$};

  \draw[seam] (1.72,-0.40) -- (1.72,1.72);
  \foreach \y in {1.30,0.65,0.00} \node[seamdot] at (1.72,\y) {};
  \node[font=\small, gray!80, anchor=south, inner sep=1pt] at (1.72,1.72) {$C_1^{\,n}$};

  \node[ptitle] at (2.62,2.35) {Closing 2 (troponin $+$)};
  \end{scope}
\end{tikzpicture}

%% file: tikz/ed_chest_pain/closing_3_admit_ecg.tex
\begin{tikzpicture}[
    act/.style={morphismblue, font=\footnotesize, inner sep=1pt},
    patw/.style={thick, draw=typepat},
    labw/.style={thick, draw=typelab},
    imgw/.style={thick, draw=typeimg},
    seam/.style={dashed, thick, gray!70},
    seamdot/.style={circle, draw=gray!65, fill=gray!25, minimum size=3.2pt, inner sep=0pt},
    ptitle/.style={font=\small\bfseries, inner sep=2pt}]

  \useasboundingbox (-0.15,-0.55) rectangle (5.45,2.6);
  \begin{scope}[xshift=0.00cm]

  \draw[patw] (0,1.30) -- (5.25,1.30);
  \draw[labw] (0,0.65) -- (5.25,0.65);
  \draw[imgw] (0,0.00) -- (5.25,0.00);

  \node[act, minimum width=0.95cm, minimum height=1.9cm] at (0.82,0.65) {Exa$_1$};
  \node[act, minimum width=0.95cm, minimum height=0.48cm] at (2.62,0.65) {Tro$-$};
  \node[act, minimum width=0.95cm, minimum height=0.48cm] at (2.62,0.00) {ECG$+$};
  \node[act, minimum width=0.95cm, minimum height=1.9cm] at (4.42,0.65) {Adm$_{\mathrm{e}}$};

  \draw[seam] (1.72,-0.40) -- (1.72,1.72);
  \foreach \y in {1.30,0.65,0.00} \node[seamdot] at (1.72,\y) {};
  \node[font=\small, gray!80, anchor=south, inner sep=1pt] at (1.72,1.72) {$C_1^{\,n}$};

  \node[ptitle] at (2.62,2.35) {Closing 3 (ECG $+$)};
  \end{scope}
\end{tikzpicture}

%% file: tikz/ed_chest_pain/closing_4_admit_both.tex
\begin{tikzpicture}[
    act/.style={morphismblue, font=\footnotesize, inner sep=1pt},
    patw/.style={thick, draw=typepat},
    labw/.style={thick, draw=typelab},
    imgw/.style={thick, draw=typeimg},
    seam/.style={dashed, thick, gray!70},
    seamdot/.style={circle, draw=gray!65, fill=gray!25, minimum size=3.2pt, inner sep=0pt},
    ptitle/.style={font=\small\bfseries, inner sep=2pt}]

  \useasboundingbox (-0.15,-0.55) rectangle (5.45,2.6);
  \begin{scope}[xshift=0.00cm]

  \draw[patw] (0,1.30) -- (5.25,1.30);
  \draw[labw] (0,0.65) -- (5.25,0.65);
  \draw[imgw] (0,0.00) -- (5.25,0.00);

  \node[act, minimum width=0.95cm, minimum height=1.9cm] at (0.82,0.65) {Exa$_1$};
  \node[act, minimum width=0.95cm, minimum height=0.48cm] at (2.62,0.65) {Tro$+$};
  \node[act, minimum width=0.95cm, minimum height=0.48cm] at (2.62,0.00) {ECG$+$};
  \node[act, minimum width=0.95cm, minimum height=1.9cm] at (4.42,0.65) {Adm$_{\mathrm{te}}$};

  \draw[seam] (1.72,-0.40) -- (1.72,1.72);
  \foreach \y in {1.30,0.65,0.00} \node[seamdot] at (1.72,\y) {};
  \node[font=\small, gray!80, anchor=south, inner sep=1pt] at (1.72,1.72) {$C_1^{\,n}$};

  \node[ptitle] at (2.62,2.35) {Closing 4 (both $+$)};
  \end{scope}
\end{tikzpicture}

%% file: tikz/ed_chest_pain/loop_L1.tex
\begin{tikzpicture}[
    act/.style={morphismblue, font=\footnotesize, inner sep=1pt},
    patw/.style={thick, draw=typepat},
    labw/.style={thick, draw=typelab},
    imgw/.style={thick, draw=typeimg},
    seam/.style={dashed, thick, gray!70},
    seamdot/.style={circle, draw=gray!65, fill=gray!25, minimum size=3.2pt, inner sep=0pt},
    ptitle/.style={font=\small\bfseries, inner sep=2pt}]

  \useasboundingbox (-0.15,-0.55) rectangle (5.45,2.6);
  \begin{scope}[xshift=0.90cm]

  \draw[patw] (0,1.30) -- (3.45,1.30);
  \draw[labw] (0,0.65) -- (3.45,0.65);
  \draw[imgw] (0,0.00) -- (3.45,0.00);

  \node[act, minimum width=0.95cm, minimum height=0.48cm] at (0.82,0.65) {Tro$-$};
  \node[act, minimum width=0.95cm, minimum height=0.48cm] at (0.82,0.00) {ECG$-$};
  \node[act, minimum width=0.95cm, minimum height=1.9cm] at (2.62,0.65) {Rex};

  \draw[seam] (0.17,-0.40) -- (0.17,1.72);
  \foreach \y in {1.30,0.65,0.00} \node[seamdot] at (0.17,\y) {};
  \draw[seam] (3.28,-0.40) -- (3.28,1.72);
  \foreach \y in {1.30,0.65,0.00} \node[seamdot] at (3.28,\y) {};

  \node[ptitle] at (1.73,2.35) {$C_1$ (both-clear round)};
  \end{scope}
\end{tikzpicture}

%% file: tikz/ed_chest_pain/legend.tex
\begin{tikzpicture}[
    act/.style={morphismblue, font=\footnotesize, inner sep=1pt},
    patw/.style={thick, draw=typepat},
    labw/.style={thick, draw=typelab},
    imgw/.style={thick, draw=typeimg},
    seam/.style={dashed, thick, gray!70},
    seamdot/.style={circle, draw=gray!65, fill=gray!25, minimum size=3.2pt, inner sep=0pt},
    ptitle/.style={font=\small\bfseries, inner sep=2pt}]
  \useasboundingbox (-0.15,-0.55) rectangle (5.45,2.6);

  \node[anchor=north west, font=\footnotesize, inner sep=0pt] at (0.20,2.45) {%
    \renewcommand{\arraystretch}{0.95}%
    \begin{tabular}{@{}l@{\quad}l@{}}
      Exa$_1$ & first examination\\
      Rex & re-examination\\
      Tro$\pm$ & troponin, $+$ or $-$\\
      ECG$\pm$ & ECG, $+$ or $-$\\
      Dis & discharge\\
      Adm$_{\mathrm{t}}$ & admit on troponin\\
      Adm$_{\mathrm{e}}$ & admit on ECG\\
      Adm$_{\mathrm{te}}$ & admit on both
    \end{tabular}};

  \draw[patw] (4.20,1.60) -- (4.65,1.60) node[anchor=west, font=\footnotesize, inner sep=2pt] {pat};
  \draw[labw] (4.20,1.10) -- (4.65,1.10) node[anchor=west, font=\footnotesize, inner sep=2pt] {lab};
  \draw[imgw] (4.20,0.60) -- (4.65,0.60) node[anchor=west, font=\footnotesize, inner sep=2pt] {img};
\end{tikzpicture}

%% file: tikz/ed_chest_pain/ocpn_class3_family.tex
\begin{tikzpicture}[
    act/.style={morphismblue, font=\footnotesize, inner sep=1pt},
    patw/.style={thick, draw=typepat},
    labw/.style={thick, draw=typelab},
    imgw/.style={thick, draw=typeimg},
    pw/.style={font=\scriptsize, inner sep=0.5pt, anchor=south west},
    round/.style={dashed, thick, gray!70, rounded corners=3pt}]

  \draw[patw] (0.55,1.30) -- (11.25,1.30);
  \draw[labw] (0.55,0.65) -- (11.25,0.65);
  \draw[imgw] (0.55,0.00) -- (11.25,0.00);

  \node[act, minimum width=0.85cm, minimum height=1.9cm] at (0.55,0.65) {Arr};
  \node[act, minimum width=0.85cm, minimum height=1.9cm] at (1.90,0.65) {Exa$_1$};

  \draw[round] (2.75,-0.45) rectangle (5.75,1.75);
  \node[act, minimum width=0.85cm, minimum height=0.48cm] (t1) at (3.55,0.65) {Tro$-$};
  \node[pw] at (t1.north east) {$b_i$};
  \node[act, minimum width=0.85cm, minimum height=0.48cm] (e1) at (3.55,0.00) {ECG$-$};
  \node[pw] at (e1.north east) {$c_i$};
  \node[act, minimum width=0.85cm, minimum height=1.9cm] at (5.05,0.65) {Rex};
  \node[font=\footnotesize, gray!80, anchor=south] at (4.25,1.75) {$R_{b_i,c_i}$};
  \node[font=\footnotesize, gray!80, anchor=west] at (5.75,1.55) {$\times\, a$};
  \draw[decorate, decoration={brace, amplitude=4pt, mirror}, gray!80, thick] (2.75,-0.55) -- (5.75,-0.55)
    node[midway, below=4pt, font=\footnotesize] {$R_u$};

  \node[act, minimum width=0.85cm, minimum height=0.48cm] (t2) at (6.65,0.65) {Tro$-$};
  \node[pw] at (t2.north east) {$b'$};
  \node[act, minimum width=0.85cm, minimum height=0.48cm] (e2) at (6.65,0.00) {ECG$-$};
  \node[pw] at (e2.north east) {$c'$};
  \node[act, minimum width=0.85cm, minimum height=0.48cm] at (8.05,0.00) {ECG$+$};

  \node[act, minimum width=0.85cm, minimum height=1.9cm] at (9.45,0.65) {Adm$_{\mathrm{e}}$};
  \node[act, minimum width=0.85cm, minimum height=1.9cm] at (10.85,0.65) {Dep};

  \node[font=\scriptsize, anchor=east] at (0.05,1.30) {patient};
  \node[font=\scriptsize, anchor=east] at (0.05,0.65) {lab};
  \node[font=\scriptsize, anchor=east] at (0.05,0.00) {imaging};
\end{tikzpicture}

%% file: sections/09-conversion_framework.tex
\section{Cross-notation equivalence and conversion}\label{sec:conversion-framework}

The forward maps of Sections~\ref{sec:petri-nets}--\ref{sec:bpmn} are model-to-model
transformations that carry each notation into a common framework. Equality of the
resulting signatures is then a model-equivalence check that can validate such a
transformation, confirming that two models agree as processes.

Sections~\ref{sec:petri-nets}--\ref{sec:bpmn} establish one direction, each notation mapping
canonically to a signature $\Sigma$, and hence to $\mathbf{F}(\Sigma)$, by one engine at different instance maps
(Remark~\ref{rem:ptree-recovers-pn}). Two questions follow, when two models, possibly in
different notations, produce equal signatures and what that equality certifies, and whether a
model can be recovered from a signature. We settle the first in
Section~\ref{subsec:equiv-certificate}. The second, and any full round-trip faithfulness, is
developed only as far as stating what each reconstruction requires
(Sections~\ref{subsec:reverse-map}--\ref{subsec:open-problems}) and is left as future work.

\begin{remark}[Notation-independence of the construction]
  \label{rem:ptree-recovers-pn}
  The notation constructions of Sections~\ref{sec:petri-nets}--\ref{sec:bpmn} are one engine at
  different instance maps. Each instance map reads typed wires $(x,w,t)$, treats silent leaves or
  transitions as transparent mediators with no elimination step, and composes along typed wires. A
  given object-centric behaviour therefore receives the same minimal signature whether presented as
  an OCPN, an OC causal net, or an OC process tree. The presentations differ only in how contexts
  are derived, from Petri place flow, from causal binding sets, or from operator contexts filtered
  by $D_{PT}$. A process tree also reaches its signature indirectly, by van Detten's OCPT-to-OCPN
  translation~\cite{vandettenDiscoveringCompactLive2024} composed with the OCPN construction.
  Provided that translation preserves behaviour, the indirect route and the direct construction of
  Section~\ref{sec:process_trees} yield the same signature, a commuting triangle that reads the
  notation-independence as a consistency check.
\end{remark}

\subsection{Signature equality as an equivalence certificate}\label{subsec:equiv-certificate}

Fix a process-mining notation $\mathcal{N}$ with the forward compilation $m\mapsto\Sigma(m)$ of
Definition~\ref{def:generator-cospan-general}, which consults no modelling convention. Write $C(m)$ for
the \emph{context structure} of $m$. The context structure is the family, indexed by activities, of the contexts
$c=(P,S)\in\mathrm{Contexts}(a)$ that the instance map reads off $m$'s compiled AND/OR graph
(Definition~\ref{def:contexts-general}), each carrying its constraint system $\Lambda_{a,c}$
(Definition~\ref{def:multiplicity-decoration}), compared by its solutions. All
four instance maps emit contexts in this one format, over the shared activity set and object types,
so $C(m)$ is defined in a notation-independent space (Remark~\ref{rem:ptree-recovers-pn}).

Fix a selection $\mathcal{S}$ in the sense of Section~\ref{subsec:theorem}, by default the
connected process diagrams, and call $L_{\mathcal{S}}(\Sigma)$, the union of the trace sets
$\Gamma(f)$ of Definition~\ref{def:trace-map} over $f\in\mathcal{S}(\Sigma)$, the trace language
of the model.

\begin{definition}[Structural and trace equivalence]\label{def:str-tr-equiv}
  Two models $m_1,m_2$, possibly in different notations, are \emph{structurally equivalent},
  $m_1\equiv_{\mathrm{str}}m_2$, if they induce the same context structure, $C(m_1)=C(m_2)$. They
  are \emph{trace-equivalent}, $m_1\equiv_{\mathrm{tr}}m_2$, if their trace languages coincide,
  $L_{\mathcal{S}}(\Sigma(m_1))=L_{\mathcal{S}}(\Sigma(m_2))$.
\end{definition}

\begin{lemma}[The signature is a complete invariant of the context structure]\label{lem:sig-complete}
  For models in any of the notations of Sections~\ref{sec:petri-nets}--\ref{sec:bpmn},
  \[
    \Sigma(m_1)=\Sigma(m_2) \quad\Longleftrightarrow\quad m_1\equiv_{\mathrm{str}}m_2,
  \]
  up to canonical generator relabelling. Structural equivalence is therefore decidable by direct
  comparison of the finite generator sets, across notations.
\end{lemma}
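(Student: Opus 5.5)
The plan is to treat the forward compilation $m\mapsto\Sigma(m)$ as the composite of two maps, $m\mapsto C(m)$ and a map $\Phi\colon C\mapsto\Sigma$. By Definition~\ref{def:generator-cospan-general}, $\Phi$ sends each pair $(a,c)$ with $c\in\mathrm{Contexts}(a)$ to the generator $g_{a,c}$. Everything then reduces to showing that $\Phi$ is injective on context structures, up to canonical relabelling of generators. The direction $(\Leftarrow)$ is immediate. If $C(m_1)=C(m_2)$, the two models have the same activities (compared by label, as the paper stipulates after Definition~\ref{def:contexts-general}) and the same contexts with the same constraint systems up to solutions. Definition~\ref{def:generator-cospan-general} consults nothing else: no mediator, no notation-specific datum, and it discards no context. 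So the two generator sets coincide element by element, and $\Sigma(m_1)=\Sigma(m_2)$ under the equality of Definition~\ref{def:multiplicity-decoration}.

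For $(\Rightarrow)$ I would reconstruct $C(m)$ from $\Sigma(m)$. Each generator $g_{a,c}=(P\hookrightarrow P+S\hookleftarrow S,d_{a,c})$ carries its label $\ell(a)$ as the label of its single hyperedge. It carries $P$ and $S$ as its two boundaries, each a set of typed wires $(x,w,y)$ whose far end names the neighbour and whose near end is $a$ itself. It also carries $\Lambda_{a,c}$ in its decoration. The inverse map therefore sends $g$ to the triple (label, $(P,S)$, $\Lambda$). Grouping by label yields, for each activity, the set of its contexts with their constraint systems, and this is exactly $C(m)$. Equal signatures, compared generator by generator, then give equal context structures.

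The main obstacle is whether the inverse is well defined. Distinct contexts $(P,S)\neq(P',S')$ of one activity must give distinct generators. They do, because the boundaries are part of the generator and Definition~\ref{def:multiplicity-decoration} distinguishes generators by their typed wires. Distinct activities sharing a label must also be handled. Here the ``canonical relabelling'' clause does the work. Since typed wires and activities are identified by label throughout, $C(m)$ is already indexed by labels, so a clash is not a loss of information but the intended identification; for example, the firing modes of Remark~\ref{rem:mode-refinement} merge into one labelled activity. I would state this explicitly so that both sides of the equivalence quotient by the same identification. Two further points need checking. The open ends $\bot,\top$ must appear as ordinary typed-wire endpoints in every notation, so that contexts ending at the boundary survive in the signature. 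The constraint systems must be compared by their solution sets on both sides, which holds by construction.

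Decidability then follows because every reach family is finite (Definition~\ref{def:contexts-general}, with the finiteness recorded in each instance theorem). Each $\Sigma(m)$ is therefore a finite set of finite records. Comparing constraint systems by solutions amounts to equivalence of finite systems of linear equalities and inequalities over $\mathbb{N}$ with each variable either pinned or bounded, and this is decidable, for instance as Presburger sentences. Cross-notation applicability comes from Remark~\ref{rem:ptree-recovers-pn}: all four instance maps emit contexts in one format over a shared space, so $\Phi$ and its inverse do not depend on the notation.
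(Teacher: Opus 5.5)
Your proposal is correct and follows the paper's own argument. The paper likewise treats $\Sigma(m)$ as the image of the context structure under $c\mapsto g_{a,c}$, notes that distinct contexts give distinct generators because every typed wire names both endpoints and its type set, and inverts the map by reading each generator's label, boundaries and $\Lambda_{a,c}$. Your explicit treatment of shared labels and your Presburger argument for comparing constraint systems by their solutions are useful additions that the paper leaves implicit. The latter needs no pinned-or-bounded assumption, and that assumption does not hold in general: variable legs only give $n_p\ge 0$, and $\rho(a)$ may contain arbitrary linear constraints. Equivalence of any two linear systems over $\mathbb{N}$ is already a Presburger sentence.
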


\begin{proof}
  The generate step (Definition~\ref{def:generator-cospan-general}) sends each context
  $c=(P,S)\in\mathrm{Contexts}(a)$ to the generator $g_{a,c}$ with label $\ell(a)$, input legs the
  typed wires $(x,w,a)\in P$, output legs the typed wires $(a,w,y)\in S$, and constraint system
  $\Lambda_{a,c}$. Because every typed wire names both endpoints and its set of types
  (Definition~\ref{def:contexts-general}), distinct contexts give distinct generators, so this map is injective. Reading each generator's boundary
  recovers its context, so it is a bijection between context structures and canonically-labelled
  signatures. Equal signatures therefore correspond exactly to equal context structures.
\end{proof}

\begin{lemma}[Soundness]\label{lem:equiv-soundness}
  $m_1 \equiv_{\mathrm{str}} m_2$ implies $m_1 \equiv_{\mathrm{tr}} m_2$.
\end{lemma}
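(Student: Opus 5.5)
The plan is to chain Lemma~\ref{lem:sig-complete} with Theorem~\ref{thm:canonical-presentation}. Assume $m_1\equiv_{\mathrm{str}}m_2$, so that $C(m_1)=C(m_2)$ by Definition~\ref{def:str-tr-equiv}. The direction ($\Leftarrow$) of Lemma~\ref{lem:sig-complete} then gives $\Sigma(m_1)=\Sigma(m_2)$, up to canonical generator relabelling. Applying Theorem~\ref{thm:canonical-presentation} to the fixed selection $\mathcal{S}$ yields $L_{\mathcal{S}}(\Sigma(m_1))=L_{\mathcal{S}}(\Sigma(m_2))$, which is exactly $m_1\equiv_{\mathrm{tr}}m_2$. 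Nothing here depends on the selection being monotone, since Theorem~\ref{thm:canonical-presentation} holds for every selection.

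The one step that needs care is the qualifier ``up to canonical generator relabelling''. Theorem~\ref{thm:canonical-presentation} is stated for equal signatures, so I have to check that a relabelling leaves $L_{\mathcal{S}}$ unchanged. Here equality of generators is the notion of Definition~\ref{def:multiplicity-decoration}: same label $\ell(a)$, same typed wires, and constraint systems with the same solutions. A canonical relabelling renames only generator identifiers, that is, the index $(a,c)$ and port IDs. It leaves all three of these data fixed.

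It follows that $\mathbf{F}(\Sigma(m_1))$ and $\mathbf{F}(\Sigma(m_2))$ have the same objects, which are lists of typed wires. They also have the same count assignments with the same satisfiability. Proposition~\ref{prop:inclusion-functor}, applied in both directions, gives mutually inverse identity-on-data functors preserving $\Gamma$. Because the selection rule consults $\Sigma$ only through these data, it picks out corresponding morphisms in both categories. Lemma~\ref{lem:gamma-well-defined} then guarantees that their trace sets agree.

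A final remark concerns notation independence. Both $C(m_i)$ and $\Sigma(m_i)$ live in the notation-independent space described before Definition~\ref{def:str-tr-equiv}. The argument therefore applies verbatim when $m_1$ and $m_2$ are drawn in different notations, and no per-notation case split is needed.
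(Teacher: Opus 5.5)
Your proof is correct and is the same as the paper's: apply the ($\Leftarrow$) direction of Lemma~\ref{lem:sig-complete}, then Theorem~\ref{thm:canonical-presentation}. Your extra discussion of relabelling is harmless but not needed, because generator equality in Definition~\ref{def:multiplicity-decoration} already ignores identifiers, so the paper treats the two signatures as literally equal.
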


\begin{proof}
  By Lemma~\ref{lem:sig-complete}, $\equiv_{\mathrm{str}}$ gives $\Sigma(m_1)=\Sigma(m_2)$, and Theorem~\ref{thm:canonical-presentation} gives
  $L_{\mathcal{S}}(\Sigma(m_1))=L_{\mathcal{S}}(\Sigma(m_2))$.
\end{proof}

\begin{lemma}[Strictness]\label{lem:equiv-strictness}
  $\equiv_{\mathrm{str}}$ is strictly finer than $\equiv_{\mathrm{tr}}$. Trace equivalence does not
  imply structural equivalence.
\end{lemma}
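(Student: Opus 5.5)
The proof has two parts: an inclusion and a separating witness. The inclusion $\equiv_{\mathrm{str}}\subseteq\equiv_{\mathrm{tr}}$ is exactly Lemma~\ref{lem:equiv-soundness}, so "finer" needs no new argument. Strictness then reduces to exhibiting two models $m_1,m_2$ with $L_{\mathcal{S}}(\Sigma(m_1))=L_{\mathcal{S}}(\Sigma(m_2))$ but $C(m_1)\neq C(m_2)$. By Lemma~\ref{lem:sig-complete} the latter is equivalent to $\Sigma(m_1)\neq\Sigma(m_2)$.

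For the witness I would use the pair from the introduction: genuine concurrency against an exclusive choice between the two interleavings. Work untyped ($\mathcal{O}=\{\ast\}$, Corollary~\ref{cor:untyped-general}), so both models can be written as process trees or BPMN models over one type.
\begin{itemize}
\item Let $m_1$ be the BPMN model in which a start event feeds an AND-split to tasks $a$ and $b$, which meet in an AND-join before an end event.
\item Let $m_2$ be the model in which an XOR-split chooses between the sequences $a\to b$ and $b\to a$, joined by an XOR before the end.
\end{itemize}
Running the walk of Definition~\ref{def:contexts-general} gives the following contexts.
\begin{itemize}
\item In $m_1$, activity $a$ has the single context $(\{(\bot,\ast,a)\},\{(a,\ast,\top)\})$, and likewise $b$.
\item In $m_2$, activity $a$ has the contexts $(\{(\bot,\ast,a)\},\{(a,\ast,b)\})$ and $(\{(b,\ast,a)\},\{(a,\ast,\top)\})$, and symmetrically for $b$.
\end{itemize}
The typed wire $(a,\ast,b)$ occurs in $\Sigma(m_2)$ but not in $\Sigma(m_1)$, so the signatures differ and $m_1\not\equiv_{\mathrm{str}}m_2$.

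The languages must also coincide, and this is the step I expect to be the main obstacle. Under the default selection of connected process diagrams, $\Sigma(m_1)$ consists of two generators that share no wire. A connected diagram therefore contains a single occurrence, and the language is $\{a,b\}$ rather than $\{ab,ba\}$. $\Sigma(m_2)$, by contrast, has the connected composites $a;b$ and $b;a$, so it yields $\{a,b,ab,ba\}$.

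My first remedy is to replace the default with a start/end selection from Section~\ref{subsec:analysis}. A start generator emitting one $\ast$-wire to each first activity, together with a matching end generator, closes runs. For $m_1$ the closed runs give $\{ab,ba\}$. For $m_2$ the start generators emit to $a$ or to $b$, again giving $\{ab,ba\}$. The catch is that Definition~\ref{def:str-tr-equiv} fixes one selection. Even granting that the start and end generators may be chosen per model, they must be read off $\Sigma$ alone.

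If that fails, I would take a cleaner witness that avoids the connectedness issue: pad both models with a common first activity $s$ and last activity $t$. In $m_1$, $s$ then emits two wires, to $a$ and to $b$. In $m_2$, $s$ emits to $a$ or to $b$ exclusively. Every connected diagram of either model is then a subdiagram of $s;(a\otimes b);t$ or of $s;a;b;t$ (and $s;b;a;t$), respectively. Equality of the trace sets over all connected morphisms, including the open prefixes, would then be checked by enumerating the finitely many connected composites and their linear extensions (Definition~\ref{def:trace-map}).

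Should the prefix sets still differ under the default selection, for example because $m_1$ admits the connected fragment $s;a$ with an open wire to $b$, I would instead invoke the freedom in Theorem~\ref{thm:canonical-presentation} and Definition~\ref{def:str-tr-equiv}, which quantify over any fixed selection. I would choose the closed-run selection built from $\Sigma_\gamma$, under which both languages are $\{sabt,sbat\}$, and conclude strictness for that selection. Since $\equiv_{\mathrm{tr}}$ is relative to $\mathcal{S}$, one selection with a separating pair suffices for the lemma as stated.
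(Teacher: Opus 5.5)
\paragraph{Gap: the separating pair is not trace-equivalent under the relation the lemma is about.} Your inclusion step and your signature computations are fine. But the separating pair is the entire content of the lemma, and none of your candidates is trace-equivalent under the fixed relation. In Definition~\ref{def:str-tr-equiv}, $\equiv_{\mathrm{tr}}$ is taken for one fixed selection, by default the connected process diagrams with open boundaries. The paper proves strictness under that selection, and Theorem~\ref{thm:signature-equivalence}(iii) relies on it.

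\begin{itemize}
\item \emph{Unpadded pair.} You already found that it gives $\{a,b\}$ against $\{a,b,ab,ba\}$ there.
\item \emph{Padded pair.} It fails as well. In $m_1$, take the open fragment $s;a;t$, with $(s,\ast,b)$ left dangling on the output side and $(b,\ast,t)$ on the input side. This is a connected process diagram with trace $sat$. In $m_2$ no $a$-generator runs from $(s,\ast,a)$ to $(a,\ast,t)$, and no wire joins $s$ to $t$ directly, so $sat\notin L_{\mathcal{S}}(\Sigma(m_2))$. Conversely, $a;b$ along $(a,\ast,b)$ puts $ab$ in the language of $m_2$ but not of $m_1$. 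The default language sees the wiring through its open fragments, so concurrency against interleaved choice is not a trace-equivalent pair for it at all.
\item \emph{Start/end remedy.} It cannot be made a rule on $\Sigma$ alone. $\Sigma(m_1)$ is also the signature of the exclusive choice between $a$ and $b$. The two uniform rules fail differently. With one start generator per $\bot$-wire you get $\{a,b\}$ for $m_1$ against $\{ab,ba\}$ for $m_2$. With one joint start you get $\{ab,ba\}$ against $\{abab,abba,baab,baba\}$.
\end{itemize}

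\paragraph{The closed-run fallback changes the statement.} Your last fallback takes the padded pair under closed runs with $\Sigma_\gamma=\{I\to(\bot,\ast,s),\ (t,\ast,\top)\to I\}$. This does give $\{sabt,sbat\}$ for both models. However, the step ``one selection with a separating pair suffices'' turns a claim about the fixed relation $\equiv_{\mathrm{tr}}$ into an existential claim over selections. That does not show $\equiv_{\mathrm{str}}$ is strictly finer than trace equivalence for the default selection.

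\paragraph{The missing idea.} Use the open boundaries of the default selection instead of fighting them. Enlarge a model by generators whose only connected diagrams are single occurrences. Such generators contribute only traces the smaller model already has as open one-occurrence fragments.

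The paper does exactly this. It takes $m_1$ to be the sequence $a$ then $b$. It takes $m_2$ to be the exclusive choice among $a$ then $b$, $a$ alone, and $b$ alone. $\Sigma(m_2)$ gains the generators $a:(\bot,\ast,a)\to(a,\ast,\top)$ and $b:(\bot,\ast,b)\to(b,\ast,\top)$, which connect to nothing. Both languages are therefore $\{a,b,ab\}$, while the signatures differ.
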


\begin{proof}
  Take one object type and weight $1$ throughout. Let $m_1$ be the sequence $a$ then $b$, and let
  $m_2$ offer three exclusive alternatives, $a$ then $b$, $a$ alone and $b$ alone, its occurrences
  of each activity being distinct nodes with the same label. Writing $(x,y)$ for the typed wire
  $(x,\{\ast\},y)$, $\Sigma(m_1)$ has the generators $a:(\bot,a)\to(a,b)$ and $b:(a,b)\to(b,\top)$.
  $\Sigma(m_2)$ has these two and also $a:(\bot,a)\to(a,\top)$ and $b:(\bot,b)\to(b,\top)$. The
  connected process diagrams of both signatures give the language $\{a,b,ab\}$, so
  $m_1\equiv_{\mathrm{tr}}m_2$, yet $C(m_1)\ne C(m_2)$ and $m_1\not\equiv_{\mathrm{str}}m_2$. The
  context structure records which occurrences can be joined to which, and the language loses
  that information.
\end{proof}

\begin{theorem}[Cross-notation equivalence certificate]\label{thm:signature-equivalence}
  Let $\mathcal{N}$ be any of the notations of Sections~\ref{sec:petri-nets}--\ref{sec:bpmn}.
  Then three statements hold.
  (i)~$\Sigma(m)$ is determined by $m$ alone, silent $\tau$-mediators contributing no generator and
  multiplicities carried in the constraint systems $\Lambda_{a,c}$, so no elimination step or routing convention is
  consulted. (ii)~$\Sigma(m_1)=\Sigma(m_2)$ decides $m_1\equiv_{\mathrm{str}}m_2$, cross-notation, by
  finite generator comparison (Lemma~\ref{lem:sig-complete}). The comparison is equality
  of $\Sigma\setminus\Sigma_\gamma$. The start and end generators of $\Sigma_\gamma$ state an
  assumption of the analysis (Section~\ref{subsec:analysis}) and take no part, so two models written
  in notations that open and close a system differently are compared without either being
  rewritten. (iii)~$\equiv_{\mathrm{str}}$ certifies $\equiv_{\mathrm{tr}}$ for every selection
  $\mathcal{S}$, by Theorem~\ref{thm:canonical-presentation}, and the converse fails
  (Lemmas~\ref{lem:equiv-soundness} and~\ref{lem:equiv-strictness}). In particular, since $\Sigma$ is convention-free,
  $\Sigma(m_1)\ne\Sigma(m_2)$ witnesses a structural difference that no encoding choice
  can produce, even when $m_1\equiv_{\mathrm{tr}}m_2$.
\end{theorem}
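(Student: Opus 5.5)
The plan is to prove the three parts in order, each essentially by assembling results already stated. For (i), I would argue that $\Sigma(m)$ is a function of $m$ alone. The instance map of the relevant notation section (Sections~\ref{sec:petri-nets}--\ref{sec:bpmn}) sends $m$ to a finite AND/OR graph (Definition~\ref{def:lm-graph}). The walk of Definition~\ref{def:contexts-general} is a deterministic recursion on that graph and on the path $\pi$. Definition~\ref{def:generator-cospan-general} keeps every context, so no choice is made anywhere. Silent transitions, silent leaves and $1$-ary gateways enter only as mediators, and the walk passes through them without emitting a typed wire endpoint, so they contribute no generator and need no elimination step. Multiplicities enter only through $\rho(a)$ into $\Lambda_{a,c}$ (Definition~\ref{def:multiplicity-decoration}). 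For BPMN I would add the remark that the OR expansion of Definition~\ref{def:or-semantics} is itself canonical, being indexed by $\mathcal{P}^+(S)$.

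For (ii), I would apply Lemma~\ref{lem:sig-complete} directly. I would then check that it survives restriction to $\Sigma\setminus\Sigma_\gamma$. Start and end generators are not read off $m$. By Section~\ref{subsec:analysis} they are added by the analysis, so the context structure $C(m)$ is in bijection with $\Sigma(m)\setminus\Sigma_\gamma$, and comparing that set is the same as comparing $C(m)$. Decidability follows because both signatures are finite; the finiteness is supplied by each notation's construction theorem. Equality is tested generator by generator, by label, typed wires and solution sets of $\Lambda$. That last test is decidable when the constraint systems are finite conjunctions of linear constraints over $\mathbb{N}$, since equivalence of Presburger formulas is decidable.

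For (iii), soundness is Lemma~\ref{lem:equiv-soundness}, which rests on Theorem~\ref{thm:canonical-presentation} for an arbitrary selection $\mathcal{S}$. Failure of the converse is Lemma~\ref{lem:equiv-strictness}. The final sentence, that unequal signatures witness a difference no encoding choice can produce, is then the contrapositive of (i) together with (ii). Because $\Sigma$ depends on no convention, the encoding choices one could vary (splitting places with $\tau$, implicit places as in Remark~\ref{rem:ab-place-vs-tau}, inserting $1$-ary gateways) all leave $\Sigma$ fixed. So an inequality cannot come from such a choice and must reflect a difference in $C(m)$.

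The main obstacle is (i) read as a genuine invariance claim, rather than only as "is a function of $m$". Saying that no convention is consulted amounts to saying that the routing-only rewrites do not change $\Sigma$. I would handle this by listing those rewrites explicitly: inserting or removing a degree-$(1,1)$ mediator, splitting a place through $\tau$, and adding a same-typed implicit place. For each rewrite I would show that every reached family $R_a(x,\pi)$ is unchanged, by induction on the walk. A pass-through node returns its single neighbour's family under both the AND and the XOR clause, and wires are identified by their endpoint activities and type set. The type-consistency hypothesis (Definition~\ref{def:type-consistent}) is what keeps $w_\pi$ unchanged in the $\tau$ case.
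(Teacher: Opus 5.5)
Your handling of (ii) and (iii) is the paper's proof, which only cites Lemma~\ref{lem:sig-complete} and Lemmas~\ref{lem:equiv-soundness} and~\ref{lem:equiv-strictness}. Two of your additions improve on it. The Presburger remark is what actually makes comparing constraint systems by their solution sets decidable. Deriving the negative certificate from the right-to-left direction of Lemma~\ref{lem:sig-complete} is the direct argument. The paper proves (i) only as the instance-map classification: the walk is a function of $m$, silent nodes are mediators, and counts sit in $\Lambda_{a,c}$. It needs no invariance lemma, and its Corollary~\ref{cor:internal-routing-quotient} takes $C(m_1)=C(m_2)$ as a hypothesis.

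The step that fails is one you add yourself: the implicit-place case of your invariance induction. Idempotence of $\cup$ absorbs a duplicated wire inside one reached set. But the AND clause of Definition~\ref{def:contexts-general} forms every union $S_p\cup S_{p'}$, with $S_p$ and $S_{p'}$ chosen independently from the two branches. So duplicating a branch whose family has two members produces mixed sets.

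Here is a concrete counterexample. Take one type $\omega$, let $a$ output to a place $p$ consumed by $c$ or by $d$, and add $p'$ with the same pre-set $\{a\}$, post-set $\{c,d\}$ and initial marking. Then $M(p')=M(p)$ in every reachable marking, so $p'$ is a same-typed implicit place. Walking forward from $a$, the post-set AND mediator now has branches $p$ and $p'$, each returning $\{\{(a,\{\omega\},c)\},\{(a,\{\omega\},d)\}\}$. So $\mathcal{F}_a$ gains $\{(a,\{\omega\},c),(a,\{\omega\},d)\}$, and $\Sigma$ gains generators of $a$. Dually, a duplicated place with two producers enlarges $\mathcal{B}_c$ of its consumer.

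Remark~\ref{rem:ab-place-vs-tau}, which you cite, therefore holds only for a duplicated place with one producer and one consumer, the configuration of Figure~\ref{fig:routing-collapse}. Your other two rewrites do go through by your induction, provided the new nodes and edges keep the original type and the edge at the activity's end keeps its weight. Either restrict the implicit-place case accordingly, or drop the invariance reading of (i) and of the final sentence and argue them as the paper does.
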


\begin{proof}
  Point (i) is the instance-map classification of silent transitions as mediators, which the walk
  passes through, with multiplicities kept as the constraint systems $\Lambda_{a,c}$
  (Section~\ref{subsec:firing-graphs}, Definitions~\ref{def:contexts-general}
  and~\ref{def:multiplicity-decoration}). Point (ii) is
  Lemma~\ref{lem:sig-complete}, and point (iii) is Lemmas~\ref{lem:equiv-soundness}
  and~\ref{lem:equiv-strictness}, whose contrapositive gives the negative certificate.
\end{proof}

For a practitioner the certificate answers whether two discovered models have the same structure,
and equal signatures then certify that they have the same trace language. It is strictly finer
than trace equivalence, as the sequence against choice case of
Lemma~\ref{lem:equiv-strictness} shows, so a pair the trace languages identify may still be
separated on structure. Inclusion gives the ordered form of the same check. When
$\Sigma(m_1)\subseteq\Sigma(m_2)$, every trace $m_1$ gives under a monotone selection is a trace of
$m_2$ (Corollary~\ref{cor:signature-inclusion}), and the generators of $\Sigma(m_2)$ outside
$\Sigma(m_1)$ name where $m_2$ allows more. Section~\ref{sec:validation} finds exactly this for two
discovered models, whose signatures nest with the causal net's inside the Petri net's.

\paragraph{Granularity.}
Structural equivalence is equality of context structures, which is coarser than isomorphism of the
AND/OR graphs. An OR gateway and its XOR/AND expansion are identified, as are two silent-mediator arrangements
with the same activity neighbourhoods. $\Sigma$ records the behavioural space, not the
syntax. Read as a model-equivalence relation, this coarseness compares discovered models over
their behavioural space rather than their surface syntax. Matching
$\equiv_{\mathrm{str}}$ to each notation's \emph{native}
equivalence is the reverse-map question of Section~\ref{subsec:reverse-map}, which is partly open.

\begin{corollary}[Internal routing is quotiented]\label{cor:internal-routing-quotient}
Let $m_1$ and $m_2$ differ only in internal places or in silent $\tau$ routing, so that they present the same observable contexts $C(m_1)=C(m_2)$. Then $\Sigma(m_1)=\Sigma(m_2)$.
\end{corollary}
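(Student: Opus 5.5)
The plan is to read the corollary as an instance of Lemma~\ref{lem:sig-complete}, once I check that its hypothesis is exactly structural equivalence in the sense of Definition~\ref{def:str-tr-equiv}. The hypothesis states outright that $C(m_1)=C(m_2)$, so $m_1\equiv_{\mathrm{str}}m_2$. The right-to-left direction of Lemma~\ref{lem:sig-complete} then gives $\Sigma(m_1)=\Sigma(m_2)$ up to canonical generator relabelling. Under the unique naming of generators by label, typed wires and constraint solutions (Definition~\ref{def:multiplicity-decoration}), that relabelling is trivial. So the formal core is one line, and the real work is making sure the phrase ``differ only in internal places or in silent $\tau$ routing'' is consistent with the stated equality of contexts.

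The first step is to spell out why such differences leave the contexts unchanged, so the corollary is not vacuous. Under the Petri-net instance map of Section~\ref{sec:petri-nets}, places are XOR mediators and silent transitions are transparent AND mediators. Neither contributes an activity node. The walk of Definition~\ref{def:contexts-general} records typed wires $(x,w,y)$ only by their observable endpoints and their type set $w$. It does not record the mediators it crossed.

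The second step handles the two kinds of difference separately. For a place split by a silent transition, I would argue as in Remark~\ref{rem:ab-place-vs-tau}. A degree-$(1,1)$ chain of mediators forwards each reached family unchanged under both the AND rule and the XOR rule. For an implicit place of the same type, I would note that it only adds a path ending at an endpoint already reached. Since $R_a$ forms unions of sets of typed wires, a duplicate triple $(x,w,y)$ collapses into the existing one. Type consistency (Definition~\ref{def:type-consistent}) ensures that $w$ does not grow along silent threads.

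The main obstacle is the second step in general. An arbitrary rearrangement of silent routing could change which typed wires appear together in one reached set. An example is an AND mediator whose branches now reach a wire through both an XOR branch and an AND branch. Because of this, I would not try to prove that every such rearrangement preserves contexts. I would instead rely on the hypothesis as stated, which assumes $C(m_1)=C(m_2)$ directly, and treat the place and silent-transition cases above only as illustrations that the hypothesis is met in the intended situations. The conclusion then follows from Lemma~\ref{lem:sig-complete} alone, with the constraint systems agreeing because they belong to $C(m)$ and are compared by their solutions.
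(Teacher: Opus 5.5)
Your proof is correct and is essentially the paper's: both reduce the corollary to Lemma~\ref{lem:sig-complete} applied to $C(m_1)=C(m_2)$, and the paper only adds an appeal to Theorem~\ref{thm:signature-equivalence}(i) to say that places and silent transitions leave $C(m)$ unchanged. Resting the argument on the stated hypothesis instead of that appeal is the sounder reading, because adding a silent skip can create new contexts, as the $\tau$ of Example~\ref{ex:running-pn-cospan} does for $b$.
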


\begin{proof}
Silent transitions contribute no generator, and internal places are XOR mediators that the walk of Definition~\ref{def:contexts-general} passes through. By Theorem~\ref{thm:signature-equivalence}(i) they leave the context structure $C(m)$ unchanged, so Lemma~\ref{lem:sig-complete} gives $\Sigma(m_1)=\Sigma(m_2)$.
\end{proof}

\subsection{Reverse maps and faithfulness}\label{subsec:reverse-map}

The converse direction, recovering a model from a signature, we develop only as far as stating what
each reconstruction requires. The target is a single schema parameterised by the notation
$\mathcal{N}$ and its native behavioural equivalence $\sim_{\mathcal{N}}$. For a forward map
$\Phi_{\mathcal{N}}$ and a proposed reverse map $\Psi_{\mathcal{N}}$, we want $\Phi_{\mathcal{N}}
\circ \Psi_{\mathcal{N}}$ the identity on signatures and $\Psi_{\mathcal{N}} \circ
\Phi_{\mathcal{N}}$ the identity on models up to $\sim_{\mathcal{N}}$. Where each instance is
canonical, and where it weakens to a normal form, is summarised in Table~\ref{tab:faithfulness}.

\begin{table*}[t]
  \centering
  \caption{Reverse maps and the canonicity of each round trip, by notation. SP abbreviates series--parallel throughout.}
  \label{tab:faithfulness}
  \small\begin{tabularx}{\linewidth}{@{}llX@{}}
    \hline
    Notation & Reverse map & Canonicity \\
    \hline
    Causal nets   & read boundaries back as bindings & identity up to isomorphism \\
    Petri nets    & apex hyperedge $+$ glued ports    & up to silent ($\tau$) saturation \\
    Process trees & SP decomposition tree ($\to,+$)   & canonical on $\to,+$; open with XOR/loop \\
    BPMN          & XOR/AND normal form               & normal form, not original gateways \\
    \hline
  \end{tabularx}
\end{table*}

For causal nets the boundary data of the generator cospans are exactly the binding sets
(Section~\ref{sec:causal-nets}), so the round trip is the identity up to isomorphism. For Petri nets
each generator reads back as a transition (apex label, glued ports as its places), the identity up
to $\tau$-saturation. Silent transitions produce no generators, so the reverse map recovers the
$\tau$-free representative. For process trees the $\to,+$ fragment is canonical and constructive
(Corollary~\ref{cor:pt-sp-iff}), while recovering XOR and loop structure from a set of SP posets is
open (Section~\ref{subsec:open-problems}). For BPMN the OR explosion is many-to-one, so a signature
fixes a canonical XOR/AND normal form, not the OR-gateway syntax that may have produced it,
consistent with the denotational reading of Section~\ref{sec:bpmn}.
These four reverse maps are the model-to-model transformations that would slot into
existing modelling toolchains, generalising the BPMN, Petri net, causal net, and process tree
conversions of \citet{kalenkovaProcessMiningUsing2017}.
Synthesis of Petri nets from partial-order languages has been studied in general through
token-flow regions~\cite{bergenthumSynthesisPetriNets2008}. The nets produced by synthesis are
typically hard for an analyst to read~\cite{leemansPartialorderbasedProcessMining2023}, which
motivates recovering the structured notations above instead.

\subsection{Open problems}\label{subsec:open-problems}

\begin{enumerate}
  \item \emph{Process-tree definability.} Characterise which SP-poset languages are of the form
    $\mathrm{sem}_{\mathrm{pt}}(T)$ for some process tree $T$. Process-tree languages are a strict
    subclass of SP-poset languages. The relevant setting is that of series-parallel
    languages~\cite{lodayaSeriesParallelLanguages2000}.
  \item \emph{Canonical process trees.} Even for definable languages, process trees admit no
    canonical form without a normal-form theorem. Define a congruence (building on the equational
    theory of pomsets~\cite{gischerEquationalTheoryPomsets1988}) and prove a minimisation result, in
    particular for the loop operator.
  \item \emph{BPMN faithfulness.} Make precise the sense in which the recovered XOR/AND normal form
    is behaviourally equivalent to the original OR-gateway model.
  \item \emph{POWL as a fifth instance.} The partially ordered workflow language (POWL) composes submodels along arbitrary partial
    orders together with choice and loop operators~\cite{kouraniPOWLPartiallyOrdered2023}. Its
    poset-plus-operators syntax suggests an AND/OR graph in the sense of
    Definition~\ref{def:lm-graph} and hence a signature
    (Definition~\ref{def:generator-cospan-general}). Carrying this out would add a
    partial-order-native notation to the four treated here and relate the certificate of
    Theorem~\ref{thm:signature-equivalence} to POWL's language-preserving workflow-net
    translation~\cite{kouraniTranslatingWorkflowNets2025}.
\end{enumerate}

%% file: sections/08-related_work.tex
\section{Related work}\label{sec:related-work}

We situate the contribution within process modelling broadly rather than against categorical semantics alone~\cite{chechikFormalMethodsScope2025}. The threads on notation diversity, BPMN semantics, and object-centric mining carry the positioning, and the categorical-semantics thread is one supporting strand.

\subsection{Notation diversity and representational bias}

Process mining uses many representational formalisms, Petri nets, process trees, causal nets,
directly-follows graphs, BPMN, Declare, and DCR graphs~\cite{deboisDeclarativeProcessMining2017},
each coupled to particular discovery algorithms. Three lines of work relate them. The first is
conversion-based translation with behavioural
preservation~\cite{kalenkovaProcessMiningUsing2017}. The second is van der Aalst's
representational-bias programme, which shows the target formalism constrains the discovery search
space and motivates sound-by-construction
classes~\cite{vanderaalstRepresentationalBiasProcess2011,%
  vanderaalstImprovingRepresentationalBias2012,%
  vanderaalstUsingFreeChoiceNets2021,%
vanderaalstFoundationsProcessDiscovery2022}. The third is the categorical-semantics tradition, in which Petri
nets present free symmetric monoidal categories~\cite{meseguerPetriNetsAre1990}, extended via open
Petri nets~\cite{baezOpenPetriNets2020} and string-diagrammatic
PROPs~\cite{bonchiDiagrammaticAlgebraLinear2019}. A common alternative is to route every notation through a single pivot such as workflow
nets~\cite{vanderaalstVerificationWorkflowNets1997}. That relocates the difficulty rather than
settling it, because the translations into the pivot are themselves uneven in rigour, they remain
incomplete on constructs such as the BPMN OR-join~\cite{dijkmanSemanticsAnalysisBusiness2008}, and
the comparison made afterwards is still a comparison of traces.
The present work adds a common compositional setting for the four formalisms it treats,
generalising conversion-based translation between BPMN, Petri nets, causal nets, and process
trees~\cite{kalenkovaProcessMiningUsing2017} to a single setting. These formalisms are
produced and consumed by process-mining tools such as ProM and PM4Py, so a shared framework also
serves the interchange between tools that emit different notations.

\subsection{Categorical and compositional process semantics}

Treating Petri nets as generators of symmetric monoidal categories originates with
\citet{meseguerPetriNetsAre1990}. Reading interface composition as an algebra of
connectors is the Montanari connector-algebra tradition, from the stateless connectors of
\citet{bruniBasicAlgebraStateless2006} through the concatenable-process
axiomatisation of \citet{sassoneAxiomatizationAlgebraPetri1996} and the classification of
models of concurrency of \citet{sassoneModelsConcurrencyClassification1996}.
Cospan composition here is the process-mining instance of that connector-composition tradition, with
the connectors read off event-log structure rather than fixed in advance. \citet{baezOpenPetriNets2020} extended
this to open Petri nets via decorated cospans. The present construction specialises that framework,
decorating cospan apices with process-mining data (activity labels, binding constraints) and
restricting to the set of runs $\mathbf{F}(\Sigma)$, the closed connected diagrams, to exclude non-process-like
wirings. \citet{bonchiDiagrammaticAlgebraLinear2019} unify nets and process calculi through
string-diagrammatic PROPs. Our aim is complementary, recovering partial-order structure from
event-log data rather than reasoning equationally about concurrency laws.
Similarly, \citet{lechenneCompositionalFrameworkPetri2024}
equip Petri nets with open-ended interfaces and a PROP-based graphical language in order to compute
reachability compositionally. Like the open-net line, the framework treats nets alone, with no
event-log or cross-notation discussion.

Outside the categorical approach, the Heraklit programme of \citet{fettkeSystemsMiningHeraklit2022} composes
Petri-net modules along left and right interfaces by fusing equally labelled boundary elements, an
associative composition proposed for process mining, and
they have recently shown that the partially ordered runs of a composed system are exactly the
composites of the runs of its parts~\cite{fettkeCompositionalitySystemsPartially2026}. This is the
compositionality that a hypergraph category provides by construction. Interface fusion is cospan
composition, and the run-level statement then holds uniformly across the four notations treated
here rather than being proved per formalism.

\citet{haymanUnfoldingGeneralPetri2008} showed that the cofree unfolding of a
general Petri net is recoverable only up to a symmetry on the folding morphism's kernel. The cospan
construction encounters the same multiplicity phenomenon, non-uniqueness of mediating morphisms from
indistinguishable tokens, but computes its quotient directly from generator boundaries, a form
estimable from event-log traces without solving a morphism-lifting problem.

\subsection{Partial orders, unfoldings, and untangling}

Partial-order-based process mining, spanning partially ordered event data, discovery,
and conformance, is surveyed by \citet{leemansPartialorderbasedProcessMining2023}.
\citet{Unfoldings2008} build finite complete prefixes of occurrence-net
unfoldings for model checking. Like the present work they take partial orders as primary, but
address the forward problem (verifying a given net) and need boundedness for finiteness, whereas the
cospan construction works directly from generators, essential where object-centric place
multiplicities are unbounded.

\citet{polyvyanyyUntanglingsNovelApproach2015} extract untanglings, representative
replay-compatible executions of a marked net. In the cospan setting a single string diagram
represents the whole concurrency class of executions with the same causal structure, and building
all diagrams from a signature enumerates all untanglings, most cleanly for 1-safe nets.
The three approaches differ in how much they enumerate. Unfoldings are fully enumerative
(a node per configuration), untanglings keep a pruned representative family, and the cospan
signature is structural, encoding causal roles rather than runs, with size
$\propto |A|\times|\mathrm{Contexts}|$ independent of run count and specific partial orders produced
on demand by composition.
The same interleaving blow-up motivates conformance checking over partially ordered
traces. Partially ordered alignments were introduced for this purpose by
\citet{luConformanceCheckingBased2015}. \citet{leemansPartiallyOrderedStochastic2025} encode observed
behaviour as labelled partial orders precisely to avoid enumerating the interleavings of concurrent
activities. The diagrams here are a model-side counterpart, each representing its concurrency class
directly, so that comparison can stay at the partial-order level.

The same frequency-aware agenda extends to stochastic process discovery. \citet{leemansStochasticProcessDiscovery2024} study when this discovery can be done optimally, and
\citet{liDiscoveringStochasticCausal2025} discover stochastic causal nets, a stochastic
counterpart to the object-centric causal nets treated in Section~\ref{sec:causal-nets}. The
construction here is qualitative, and decorating its generators with execution weights is a natural
extension.

Model-to-model comparison over event structures has also been studied. \citet{armas-cervantesBehavioralComparisonProcess2014} compare process models through
canonically reduced event structures and report their behavioural differences. The certificate of
Section~\ref{sec:conversion-framework} plays the same role across notations, and its canonical form
is fixed by the construction rather than by a reduction procedure.

\subsection{Process trees and series-parallel structure}\label{subsec:pt-sp-related}

Process trees are a central block-structured notation, used both as a model class and as a discovery
intermediate~\cite{leemansDiscoveringBlockStructuredProcess2013,buijsQualityDimensionsProcess2014}.
Their sequence and parallel operators are the classical series and parallel compositions of partial
orders. Finite series-parallel (SP) posets, the $N$-free finite
posets~\cite{valdesRecognitionSeriesParallel1982,mohringComputationallyTractableClasses1989}, have a
finitely axiomatisable pomset theory~\cite{gischerEquationalTheoryPomsets1988} linking them to the
concurrency semantics of \citet{prattModelingConcurrencyPartial1986} and
\citet{lodayaSeriesParallelLanguages2000}. The process-mining literature invokes this
correspondence implicitly through block-structured syntax.
Appendix~\ref{sec:pt-extras} states it as an explicit occurrence-level representation theorem
(Corollary~\ref{cor:pt-sp-iff}), with Theorem~\ref{thm:pt-as-sp-language} separating the $\to,+$
fragment from the full language, where XOR contributes set-union and loop bounded unfoldings.
The series-parallel restriction has also been relaxed at the notation level. POWL
models~\cite{kouraniPOWLPartiallyOrdered2023} compose submodels along arbitrary partial orders
rather than series and parallel operators alone, remain sound by construction as a subclass of
workflow nets, and have been extended to non-block-structured
choice~\cite{kouraniUnlockingNonBlockStructuredDecisions2025}. A language-preserving translation from safe
and sound workflow nets into POWL exists~\cite{kouraniTranslatingWorkflowNets2025}. POWL thus
removes at the syntax level the $N$-freeness that Corollary~\ref{cor:pt-sp-iff} characterises at
the semantic level, and its translation guarantee is stated on trace languages, whereas the
certificate of Section~\ref{sec:conversion-framework} compares structure, which by
Lemma~\ref{lem:equiv-strictness} is strictly finer.

\subsection{BPMN and formal process semantics}\label{subsec:bpmn-related}

Formal BPMN semantics are mainly operational. \citet{dijkmanSemanticsAnalysisBusiness2008} normalise a model and encode it into a Petri net
whose reachable markings are its valid token states. Section~\ref{sec:bpmn} uses no soundness notion. The soundness taxonomy of the workflow-net
literature~\cite{vanderaalstSoundnessWorkflowNets2011} has decision problems that are
\textsc{ExpSpace}- and \textsc{PSpace}-complete~\cite{blondinComplexitySoundnessWorkflow2022}. The
cospan construction is denotational instead. It extracts the interface structure with no firing
strategy or structural precondition, handling OR gateways by finite XOR/AND explosion, so it
avoids those decision problems, which then bound only the composition it defers.

Object-centric BPMN has been approached differently before. \citet{seidelObjectcentricBPMNProcess2024} give a fragment-based dialect with data objects and object
lifecycles and informal semantics, within the object-centric process-mining
programme~\cite{vanderaalstObjectCentricProcessMining2023,bertiAdvancementsChallengesObjectCentric2023}. Adjacent formal
extensions target the artifact-centric~\cite{lohmannArtifactCentricModelingUsing2012} and
data-aware~\cite{calvaneseFormalModelingSMTBased2019} paradigms. None types BPMN's control flow with object
types, as Section~\ref{sec:bpmn} does.

\subsection{Object-centric process mining}

Object-centric process mining sharpens the inter-notation problem.
OCEL~2.0~\cite{bertiOCELObjectCentricEvent2024} dropped the single-case constraint and object-centric Petri
nets~\cite{vanderaalstDiscoveringObjectCentricPetri2020} added typed tokens, with more
expensive reachability and less mature conformance. The standard does not natively carry
hierarchical structure, temporal inter-object relations, or explicit
state~\cite{khayatbashiAdvancingObjectcentricProcess2026,bertiStateAwareObjectCentricProcess2025}.
\citet{lissObjectCentricCausalNets2025} extend causal nets to typed binding sets, which
the cospan construction accommodates directly (Section~\ref{sec:causal-nets}). In the cospan
framework object types are wire labels. Boundary-matching composability enforces type-compatibility
structurally, the categorical analogue of a typed slice, with no reachability bookkeeping, and the
same mechanism covers the object-centric causal nets and process trees of Liss et al.\ and
\citet{vandettenDiscoveringCompactLive2024} unchanged. These object-centric
concerns are active in the modelling community as well as in mining. Data-aware enterprise process
modelling in MERODE~\cite{snoeckSupportingDataawareProcesses2023} and the object-centric
event-logging and data-centric modelling terminology of
iDOCEM~\cite{verbruggenIDOCEMDefiningCommon2024} formalise objects and their life-cycles at the
conceptual level, and model-driven management of BPMN-based process
families~\cite{delgadoModeldrivenManagementBPMNbased2022} organises related models within one
notation. The framework developed here is complementary and extends these by comparing and
translating models across the four notations at once, with object types carried structurally on the
wires rather than reconstructed from a conceptual schema.

%% file: sections/09b-conclusion.tex
\section{Conclusion}\label{sec:conclusion}

\paragraph{Summary.}
We have given four standard process-mining notations, object-centric Petri nets, causal nets,
process trees, and BPMN, a single canonical presentation as a cospan-algebra signature in a
hypergraph category (Theorems~\ref{thm:petri-to-cospan-signature},
\ref{thm:causal-to-cospan-signature}, \ref{thm:ptree-to-cospan-signature},
and~\ref{thm:bpmn-cospan}), as four instances of one construction
(Section~\ref{sec:general-framework}). The resulting minimal signature is notation-independent. The
same object-centric behaviour receives the same signature whichever notation presents it
(Remark~\ref{rem:ptree-recovers-pn}).

\paragraph{Consequences of the construction.}
String diagrams thereby give a common ground for comparing and translating models across notations.
Equality of signatures is a structural certificate, decidable by direct comparison of finite
generator sets, that implies trace equivalence and is strictly finer than it
(Theorem~\ref{thm:signature-equivalence}). The certificate distinguishes genuine concurrency $a\otimes b$ from
the exclusive choice between $a\,;\,b$ and $b\,;\,a$, which trace languages conflate, and unequal
signatures witness genuine structural difference regardless of trace language. Because the
translations preserve sequential and parallel composition, an equality on a fragment persists under
composition into larger models. Object-centric typing rides on the wires, enforced by composition
rather than separate bookkeeping, with the classical untyped notations the one-type specialisation.

\paragraph{Limitations.}
The construction is one-directional. The forward maps are canonical, but the reverse maps are
developed only as far as stating what each reconstruction requires
(Section~\ref{subsec:reverse-map}), and a full round-trip faithfulness result is left open. The BPMN reverse map recovers an XOR/AND normal form rather than
the original OR-gateway syntax.

\paragraph{Future work.}
Section~\ref{subsec:open-problems} collects four open problems. It asks which
series-parallel-poset languages a process tree can define, whether process trees admit a canonical
form covering the loop operator, in what precise sense the BPMN normal form is faithful to its
OR-gateway original, and whether POWL fits as a further instance of the general framework. A single faithfulness theorem parameterised by the notation and
its native equivalence would subsume the per-notation reverse maps. A further direction is
complexity. The cospan view grounds a discovery problem against the space of partial orders over an
activity set, with $\mathbf{F}(\Sigma)$ the subspace generated by $\Sigma$, so $|\Sigma|$ and diagram size
should bound the search rather than the raw count of relations.

\paragraph{Outlook.}
The wider aim is a compositional basis for comparing the process models organisations actually run.
This work supplies its structural layer, the partial-order object-centric skeleton on which any two
models sit in one category. The natural next layer is stochastic.
Stochastic Petri nets~\cite{baezQuantumTechniquesStochastic2018} attach rates or routing
probabilities to transitions. Those rates and probabilities fit as a further decoration on the same generators, a weight
riding alongside each generator's constraint system $\Lambda_{a,c}$ and composed by the same pushout, so stochastic object-centric causal
nets and process trees follow unchanged. A stochastic signature would compare models on
what behaviour they admit and on how often, bringing the certificate of
Section~\ref{sec:conversion-framework} closer to the statistical comparison that conformance and
discovery require.

Finally, the framework sits between two communities that have largely developed apart. Process
mining gains a principled account of local composition, assembling global models from reusable
generators. Applied category theory gains a data-driven setting in which its constructions are
estimated from event logs rather than postulated. We hope the cospan-algebra presentation makes that
exchange concrete.

%% file: sections/10-hyper_axioms.tex
\section{Hypergraph Category Axioms}\label{sec:axioms}

In a hypergraph category every object $X$ carries a special commutative
Frobenius algebra (SCFA, also called a Frobenius monoid) consisting of four morphisms
\begin{gather*}
  \mu_X \colon X \otimes X \to X,\quad
  \eta_X \colon I \to X,\\
  \delta_X \colon X \to X \otimes X,\quad
  \epsilon_X \colon X \to I,
\end{gather*}
subject to the axioms below, together with coherence conditions that fix the
SCFA on every tensor product $A \otimes B$ in terms of those on $A$ and $B$.
Throughout, $\sigma_{X,Y}\colon X\otimes Y\to Y\otimes X$ is the symmetry of the underlying
symmetric monoidal category, the wire crossing, and $;$ denotes composition in diagrammatic order.

\paragraph{Monoid.}
\begin{align}
  (\mu_X \otimes \mathrm{id}_X) ; \mu_X
    &= (\mathrm{id}_X \otimes \mu_X) ; \mu_X
    \label{eq:ax-assoc}
    \tag{assoc}\\
  (\eta_X \otimes \mathrm{id}_X) ; \mu_X
    &= \mathrm{id}_X
    = (\mathrm{id}_X \otimes \eta_X) ; \mu_X
    \label{eq:ax-unit}
    \tag{unit}
\end{align}

\paragraph{Comonoid.}
\begin{align}
  \delta_X ; (\delta_X \otimes \mathrm{id}_X)
    &= \delta_X ; (\mathrm{id}_X \otimes \delta_X)
    \label{eq:ax-coassoc}
    \tag{coassoc}\\
  \delta_X ; (\epsilon_X \otimes \mathrm{id}_X)
    &= \mathrm{id}_X
    = \delta_X ; (\mathrm{id}_X \otimes \epsilon_X)
    \label{eq:ax-counit}
    \tag{counit}
\end{align}

\paragraph{Frobenius.}
\begin{align}
  (\mathrm{id}_X \otimes \delta_X) ; (\mu_X \otimes \mathrm{id}_X)
    &= \mu_X ; \delta_X\notag\\
    &= (\delta_X \otimes \mathrm{id}_X) ; (\mathrm{id}_X \otimes \mu_X)
    \label{eq:ax-frob}
    \tag{Frob}
\end{align}

\paragraph{Commutativity and cocommutativity.}
\begin{align}
  \sigma_{X,X} ; \mu_X &= \mu_X
    \label{eq:ax-comm}
    \tag{comm}\\
  \delta_X ; \sigma_{X,X} &= \delta_X
    \label{eq:ax-cocomm}
    \tag{cocomm}
\end{align}

\paragraph{Speciality.}
\begin{align}
  \delta_X ; \mu_X &= \mathrm{id}_X
    \label{eq:ax-special}
    \tag{special}
\end{align}

\paragraph{Monoidal coherence.}
The SCFA on $A \otimes B$ is determined by those on $A$ and $B$:
\begin{align}
  \mu_{A \otimes B}
    &= (\mathrm{id}_A \otimes \sigma_{B,A} \otimes \mathrm{id}_B)
       ; (\mu_A \otimes \mu_B)
    \label{eq:ax-coh-mu}
    \tag{coh-$\mu$}\\
  \eta_{A \otimes B}
    &= \eta_A \otimes \eta_B
    \label{eq:ax-coh-eta}
    \tag{coh-$\eta$}\\
  \delta_{A \otimes B}
    &= (\delta_A \otimes \delta_B)
       ; (\mathrm{id}_A \otimes \sigma_{A,B} \otimes \mathrm{id}_B)
    \label{eq:ax-coh-delta}
    \tag{coh-$\delta$}\\
  \epsilon_{A \otimes B}
    &= \epsilon_A \otimes \epsilon_B
    \label{eq:ax-coh-eps}
    \tag{coh-$\epsilon$}
\end{align}
The monoidal unit $I$ carries the trivial SCFA, with $\mu_I = \eta_I = \delta_I = \epsilon_I = \mathrm{id}_I$.

%% file: sections/11-process-tree-extras.tex
\section{Process tree supplements}\label{sec:pt-extras}

This appendix characterises the sequence and parallel fragment of the process-tree construction of
Section~\ref{sec:process_trees} as series-parallel partial orders.

\subsection{Process trees and series-parallel posets}\label{subsec:pt-sp-appendix}

The sequence and parallel operators of Definition~\ref{def:process-tree} have an exact
characterisation in terms of series-parallel (SP) partial orders.
The definition and results below make this precise. They are also used in the
related-work discussion of Section~\ref{subsec:pt-sp-related}.

\begin{definition}[Occurrence-level semantics for process trees]\label{def:pt-occ-semantics}
  Let $\mathcal{A}$ be the set of activity names, and let $\mathcal{SP}$ denote the class of finite series-parallel (SP) posets.
  A behaviour is a finite labelled occurrence poset $(E,\leq,\lambda)$, where $E$ is a finite set of event occurrences, $\leq$ is a partial order, and $\lambda:E\to\mathcal{A}$ is the activity-label map.

  For a process tree $T$ over operators $\to$ (sequence), $+$ (AND), $\times$ (XOR), and $\circlearrowleft$ (loop), define its denotation
  \[
    \mathrm{sem}_{\mathrm{pt}}(T) \subseteq \mathcal{SP}
  \]
  inductively as follows:
  \begin{align*}
    \mathrm{sem}_{\mathrm{pt}}(a)
    &:= \{(\{e\},\{(e,e)\},\{e\mapsto a\})\},\\
    \mathrm{sem}_{\mathrm{pt}}(\to(T_1,T_2))
    &:= \{P_1;P_2 \mid\twocolalignbreak P_1\in\mathrm{sem}_{\mathrm{pt}}(T_1),\twocolalignbreak P_2\in\mathrm{sem}_{\mathrm{pt}}(T_2)\},\\
    \mathrm{sem}_{\mathrm{pt}}(+(T_1,T_2))
    &:= \{P_1\otimes P_2 \mid\twocolalignbreak P_1\in\mathrm{sem}_{\mathrm{pt}}(T_1),\twocolalignbreak P_2\in\mathrm{sem}_{\mathrm{pt}}(T_2)\},\\
    \mathrm{sem}_{\mathrm{pt}}(\times(T_1,\dots,T_k))
    &:= \bigcup_{i=1}^k \mathrm{sem}_{\mathrm{pt}}(T_i),
  \end{align*}
  \begin{gather*}
    \mathrm{sem}_{\mathrm{pt}}(\circlearrowleft(S,B_1,\dots,B_m)) :=\twocolbreak 
    \bigcup_{n\ge 0}\ \bigcup_{j_1,\dots,j_n\in\{1,\dots,m\}}\twocolbreak 
    \mathrm{sem}_{\mathrm{pt}}\bigl(S;B_{j_1};S;\cdots;B_{j_n};S\bigr).
  \end{gather*}
  Here $(;)$ is series composition of posets and $(\otimes)$ is parallel composition of posets.
  The binary $\to$ and $+$ extend to arbitrary arity by associativity.
\end{definition}

\begin{theorem}[Process trees as sets of SP partial orders]\label{thm:pt-as-sp-language}
  For every process tree $T$, $\mathrm{sem}_{\mathrm{pt}}(T)$ is a (finite or countably infinite) set of finite SP posets.
  \begin{enumerate}
    \item Every $P\in\mathrm{sem}_{\mathrm{pt}}(T)$ is obtained by
      (i) unrolling loops (choosing a finite iteration count and redo-body sequence at each $\circlearrowleft$),
      then (ii) exploding XOR (choosing one branch at each $\times$ node of the unrolled tree, independently for each copy),
      yielding a residual tree over only $\to$ and $+$ whose unique denotation is $P$.
    \item Conversely, every such finite XOR-choice and loop-unrolling yields some $P\in\mathrm{sem}_{\mathrm{pt}}(T)$.
  \end{enumerate}
  Hence the full behaviour of $T$ is exactly characterised as a language of SP posets.
\end{theorem}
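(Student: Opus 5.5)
The plan is structural induction on the process tree $T$, proving both parts of the statement at once. Along the way we carry the invariant that every element of $\mathrm{sem}_{\mathrm{pt}}(T)$ is a finite SP poset. We also track that $\mathrm{sem}_{\mathrm{pt}}(T)$ is countable.

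\emph{Base case.} A leaf $a$ denotes the one-element poset. This poset is SP by definition, and it is the unique denotation of the residual tree $a$, which needs no unrolling or explosion.

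\emph{Series and parallel cases.} For $\to(T_1,T_2)$ and $+(T_1,T_2)$, the class $\mathcal{SP}$ is by definition closed under series and parallel composition of finite posets. So every $P_1;P_2$ and $P_1\otimes P_2$ is again a finite SP poset. The index set is a product of two countable sets, hence countable.

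For part~1, take $P=P_1;P_2$. The induction hypothesis gives residual $\{\to,+\}$-trees $R_1,R_2$, obtained from $T_1,T_2$ by independent unrolling and explosion choices. The tree $\to(R_1,R_2)$ is then a residual of $T$ whose denotation is $P$. Uniqueness of the denotation for a $\{\to,+\}$-tree is itself a small induction: every clause is single-valued when no $\times$ or $\circlearrowleft$ occurs. Part~2 runs the same way in reverse: a choice for $T$ restricts to choices for $T_1$ and $T_2$.

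\emph{Choice case.} For $\times(T_1,\dots,T_k)$, a finite union of countable sets of SP posets is countable and consists of SP posets. An element of the $i$-th summand comes from choosing branch $i$ at the root and then applying the hypothesis for $T_i$, and conversely.

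\emph{Loop case.} For $\circlearrowleft(S,B_1,\dots,B_m)$, each inner term $\mathrm{sem}_{\mathrm{pt}}(S;B_{j_1};\cdots;S)$ is handled by the series case applied iteratively. The unrolled sequence tree is built only from subtrees of $T$, so the induction hypothesis applies to each of its factors. This gives finite SP posets.

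The outer union ranges over $n\in\mathbb{N}$ and over words in $\{1,\dots,m\}^n$, a countable index set, so the union is countable. This is the one place where the set can become infinite; otherwise it is finite, which matches the ``finite or countably infinite'' clause. For part~1, an element is fixed by choosing $n$ and the redo sequence $(j_1,\dots,j_n)$, which is step~(i). Each copy of $S$ and $B_{j}$ is then resolved independently, which is step~(ii).

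\emph{Main obstacle.} The hard step is bookkeeping the independence of XOR choices across copies produced by unrolling. The induction must be stated for the unrolled tree, not the original, since $T$ itself is not a subterm-closed structure under loops.

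I would handle this by formalising ``finite XOR-choice and loop-unrolling'' as a resolution function. Such a function assigns an iteration word to each loop node occurrence and a branch to each choice node occurrence, where occurrences are indexed by paths in the unrolled tree. The proof then shows by induction that resolutions of $T$ correspond exactly to compatible families of resolutions of its immediate subterms, or of the copies in an unrolling. With that correspondence in place, parts~1 and~2 both follow from the inductive clauses of Definition~\ref{def:pt-occ-semantics}, each read as ``choose, then compose''.
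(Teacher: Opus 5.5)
Your proposal is correct and follows the paper's own argument. Both use structural induction on $T$, closure of $\mathcal{SP}$ under series and parallel composition, XOR read as a union, and each loop resolved as a union over finite unrollings $S;B_{j_1};\cdots;S$ whose factors the induction hypothesis covers, with the converse read off the inductive clauses; your resolution-function bookkeeping for independent per-copy choices and your countability tracking make explicit what the paper's sketch leaves implicit.
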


\begin{proof}[Proof sketch]
  Structural induction on $T$. SP posets are closed under series and parallel composition, XOR
  contributes a union, and a loop at a fixed iteration count and redo-body sequence is a finite
  $\to,+,\times$ expression, so every denotation is a set of finite SP posets. The two-sided
  characterisation follows by unrolling each loop and choosing one branch at each XOR of the
  unrolled tree, which leaves a $\to,+$ tree denoting a single SP poset, and the converse is
  immediate from the inductive definition as unions over exactly those choices.
\end{proof}

\begin{remark}[Depth-$n$ unrolling]\label{rem:depth-n-unrolling}
For a loop $\circlearrowleft(S,B_1,\dots,B_m)$, fixing an iteration count $n$ and a redo-body sequence gives the loop-free residual expression $S;B_{j_1};S;\cdots;B_{j_n};S$. Call it the depth-$n$ unrolling $D_n$. Exploding any remaining XOR in $D_n$ leaves a tree over $\to$ and $+$ only, whose denotation is a single SP poset by item~(1) of Theorem~\ref{thm:pt-as-sp-language}, a behaviour of the loop at that depth.
\end{remark}

\begin{remark}[Unique occurrence labels and DAG recovery]\label{rem:unique-occurrence-dag}
  To avoid ambiguity from repeated activity names, each leaf occurrence is given a fresh event identifier in $E$.
  The activity name is recovered by the map $\lambda:E\to\mathcal{A}$.
  Given $P=(E,\leq,\lambda)$, its Hasse DAG $H(P)=(E,\prec,\lambda)$ (where $\prec$ is the cover relation) is unique up to isomorphism, and the original order $\leq$ is recovered from it by taking the reflexive-transitive closure of $\prec$, so the DAG representation preserves exactly the same partial-order behaviour.
\end{remark}

\begin{corollary}[Exact fragment without XOR/loop]\label{cor:pt-sp-iff}
  If $T$ uses only $\to$ and $+$, then $\mathrm{sem}_{\mathrm{pt}}(T)=\{P\}$ for some finite SP poset $P$.
  Conversely, every finite SP poset $P$ is denoted by a $\to,+$-only process tree, and this tree is canonical. Taking the decomposition tree of $P$~\cite{valdesRecognitionSeriesParallel1982,mohringComputationallyTractableClasses1989} (the unique alternating series/parallel tree in which no $\to$ node has a $\to$ child and no $+$ node has a $+$ child) yields a process tree that is unique up to associativity of $\to$ and associativity and commutativity of $+$, and is computable in linear time in $|P|$.
  Thus $\to,+$ process trees are exactly finite SP posets (up to isomorphism), and the correspondence is constructive in both directions.
\end{corollary}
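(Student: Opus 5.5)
The plan has two halves, one per direction. For the forward direction I argue by structural induction on a $\to,+$-only tree $T$. A leaf denotes the one-point poset. By Definition~\ref{def:pt-occ-semantics}, $\mathrm{sem}_{\mathrm{pt}}(\to(T_1,T_2))$ and $\mathrm{sem}_{\mathrm{pt}}(+(T_1,T_2))$ are images of $\mathrm{sem}_{\mathrm{pt}}(T_1)\times\mathrm{sem}_{\mathrm{pt}}(T_2)$ under series and parallel composition. A product of two singletons is a singleton, so the induction gives $\mathrm{sem}_{\mathrm{pt}}(T)=\{P\}$. The poset $P$ is series-parallel because SP posets are by definition the closure of the one-point poset under $;$ and $\otimes$. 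Alternatively, this half is item~(1) of Theorem~\ref{thm:pt-as-sp-language} with no XOR or loop to resolve. Arities above two reduce to the binary case by associativity, as the definition stipulates.

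For the converse I take a finite SP poset $P$ and induct on $|E|$. If $|E|=1$, $P$ is denoted by a leaf. Otherwise I use the standard characterisation of SP posets as the $N$-free posets \cite{valdesRecognitionSeriesParallel1982}. This gives a dichotomy: either the comparability graph of $P$ is disconnected, or its incomparability graph is disconnected.

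In the first case, $P$ splits into its connected components $P_1,\dots,P_k$ with $k\ge2$, so $P=P_1\otimes\cdots\otimes P_k$. In the second case, $P$ splits into a series chain $P_1;\cdots;P_k$, taking the maximal such decomposition. Each $P_i$ is again SP (SP posets are closed under convex restriction) and strictly smaller. The inductive hypothesis then gives trees $T_i$, and I set $T=+(T_1,\dots,T_k)$ or $T=\to(T_1,\dots,T_k)$ respectively. By the forward half, $\mathrm{sem}_{\mathrm{pt}}(T)=\{P\}$.

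Choosing the maximal decomposition at every level has a useful consequence. A $+$ node's children are connected in comparability, so none of them is itself parallel. A $\to$ node's children are series-indecomposable, so none of them is itself a sequence. The tree produced is therefore exactly the alternating decomposition tree of the statement.

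The main obstacle is uniqueness up to associativity of $\to$ and associativity and commutativity of $+$. I would first flatten any $\to,+$ tree denoting $P$ into its alternating form, using associativity; this does not change the denotation. I would then show that the root of an alternating tree is forced by $P$.

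Suppose the root is $+$. Its children are trees whose roots are leaves or $\to$ nodes, and such trees denote comparability-connected posets. So the children must be exactly the connected components of $P$, determined up to order, which is where commutativity of $+$ enters. Dually, suppose the root is $\to$. Its children are leaves or $+$-rooted trees, and these denote series-indecomposable posets. Here I need that the maximal series decomposition is unique. I would prove this from the fact that every cut of a series decomposition is a downset $D$ with every element of $D$ below every element of $E\setminus D$; such cuts are linearly ordered by inclusion, so the finest chain of cuts is unique. A poset cannot be both parallel-decomposable and series-decomposable, because a series decomposition makes the comparability graph connected. So the root type is forced as well, and induction on the children gives uniqueness.

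Linear-time computability I would simply cite from \citet{valdesRecognitionSeriesParallel1982} and \citet{mohringComputationallyTractableClasses1989}, who build the decomposition tree in $O(|E|+|{\prec}|)$. The constructive claim in both directions then follows.
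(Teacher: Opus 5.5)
Your proposal is correct and follows the route the paper indicates. The forward half is item~(1) of Theorem~\ref{thm:pt-as-sp-language} with no $\times$ or $\circlearrowleft$ to resolve, and the converse is the decomposition tree of Valdes--Tarjan--Lawler and M\"ohring. The only difference is that the paper simply cites the existence and uniqueness of that tree, while you prove both yourself, via the components of the comparability and incomparability graphs and the linearly ordered cuts of the finest series decomposition, and cite only the linear-time bound.
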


The inverse problem for the full operator set, recovering a process tree from a set of SP posets including XOR and loop structure, is more delicate and is taken up as part of the general conversion framework of Section~\ref{sec:conversion-framework}.